\documentclass{fundam}

\usepackage[T1]{fontenc}
\usepackage[utf8]{inputenc}
 \usepackage{amssymb}
\usepackage{url}
\usepackage{hyphenat}
\usepackage{xcolor}
\usepackage[all]{xy}
\usepackage{amsfonts}
\usepackage{mathrsfs}
\usepackage{graphicx}


\newcommand{\labell}[1]{\POS[]\POS!L\drop{\strut\llap{$#1$}\hspace{2ex}}}
\newcommand{\labelu}[1]{\POS[]\POS!U\drop{\raisebox{1.2em}{\strut$#1$}}}
\newcommand{\labeldr}[1]{\POS[]\POS!D(2)!R(1.5)\drop{\strut\rlap{$#1$}\hspace{2ex}}}
\newcommand{\labelr}[1]{\POS[]\POS!R\drop{\strut\hspace{2ex}\rlap{$#1$}}}
\newcommand{\labeld}[1]{\POS[]\POS!D(3)\drop{\strut\rlap{$#1$}\hspace{2ex}}}
\newcommand{\verticale}{\ar@{--}[d]}

\newsavebox{\labelz}\newsavebox{\labelun}
\savebox{\labelz}{$\xymatrix@M=0.2ex{*++[F]{0}}$}
\savebox{\labelun}{$\xymatrix@M=0.2ex{*++[F]{1}}$}

\def\rebar#1{\expandafter\def\csname #1bar\endcsname{\overline{\csname
      #1\endcsname}}}		%

\def\M{\mathcal{M}}

\newcommand{\N}{{\mathcal{N}}}
\newcommand{\X}{{\mathcal{X}}}
\newcommand{\FFF}{\mathfrak{F}}
\def\bbR{\mathbb{R}}
\newcommand{\bbZ}{\mathbb{Z}}
\renewcommand{\P}{\mathcal{P}}
\newcommand{\R}{\mathscr{R}}

\rebar{M}	%
\rebar{N}	%

\newcommand{\C}{\mathscr{C}}

\newcommand{\Dstar}{\mathfrak{D}}

\newcommand{\DCS}{{\text{\normalfont\sffamily DCS}}}

\newcommand{\up}[1]{\,\uparrow #1}

\newcommand{\Cstar}{\mathfrak{C}}

\newcommand{\slgb}{\mbox{$\sigma$-al}\-ge\-bra}

\newcommand{\BM}{\partial\M}

\newcommand{\lub}{\textsl{l.u.b.}}
\newcommand{\glb}{\textsl{g.l.b.}}

\newcommand{\vd}{\varepsilon}

\newcommand{\tq}{\;\big|\;}
\newcommand{\tqs}{\;:\;}

\newcommand{\ie}{\textsl{i.e.}}

\newcommand{\rest}[1]{\bigl|_{#1}}

\newcommand{\un}{\mathbf{1}}

\newcommand{\HPOV}{\emph{Heaps point of view}}
\renewcommand{\H}[1]{\widehat{#1}}

\numberwithin{equation}{section}

\newsavebox{\framedzero}\savebox{\framedzero}{\raisebox{1ex}{\xymatrix{*+[F]{0}}}}


\newlength{\tempa}

\newlength{\tempcc}
\newlength{\hauttrans}
\newlength{\longtrans}
\newlength{\longplace}

\newcommand{\petriexec}{%
\setlength{\hauttrans}{.66ex}
\setlength{\longtrans}{6ex}
\setlength{\longplace}{4ex}
\settoheight{\tempcc}{\strut}
\settodepth{\tempa}{\strut}
\addtolength{\tempcc}{\tempa}
\setlength{\tempcc}{2\tempcc}
\addtolength{\tempcc}{\hauttrans}}

\newcommand{\transh}{*+[F]\txt{\vbox to\hauttrans{\hbox to\longtrans{}}}}
\newcommand{\transhp}{*+[F--]\txt{\vbox to\hauttrans{\hbox to\longtrans{}}}}
\newcommand{\transv}{*+[F]\txt{\vbox to\longtrans{\hbox to\hauttrans{}}}}
\newcommand{\transvp}{*+[F--]\txt{\vbox to\longtrans{\hbox to\hauttrans{}}}}
\newcommand{\place}{*+[o][F]{\vbox to\longplace{\hbox to\longplace{}}}}


\newcommand{\transht}{*+[F.]\txt{\vbox to\hauttrans{\hbox to\longtrans{}}}}
\newcommand{\placet}{*+[o][F.]{\vbox to\longplace{\hbox to\longplace{}}}}

\newcommand{\markp}{\POS[]\drop{\bullet}}

\petriexec


\setcounter{page}{71}
\publyear{22}
\papernumber{2133}
\volume{187}
\issue{2-4}

 \finalVersionForARXIV


\begin{document}

\title{Introduction to Probabilistic Concurrent Systems}

\author{Samy Abbes\thanks{Address for correspondence:  Laboratoire IRIF, Université Paris Cité,
                                         Case 7014, 75205 Paris Cedex 13, France}
 \\
 Laboratoire IRIF, Université Paris Cité\\
Paris, France\\
abbes@irif.fr
}

 \runninghead{S. Abbes}{Introduction to Probabilistic Concurrent Systems}

\maketitle

\begin{abstract}
The first part of the paper is an introduction to the theory of probabilistic concurrent systems under a partial order semantics. Key definitions and results are given and illustrated on examples.

The second part includes contributions. We introduce deterministic concurrent systems as a subclass of concurrent systems. Deterministic concurrent system are ``locally commutative'' concurrent systems. We prove that irreducible and deterministic concurrent systems have a unique probabilistic dynamics, and we characterize these systems by means of their combinatorial properties.

\noindent ACM CSS: G.2.1; F.1.1
\end{abstract}

\section{Introduction}
\label{sec:introduction}

Trace monoids are well known models of concurrency, typically used when one wishes to work on the logical order between actions rather than on their chronological order. These models represent systems with actions, symbolized by letters in a given alphabet, and with the feature that some actions may occur concurrently. Let $a_1,\ldots,a_N$ be a bunch of pairwise concurrent actions about to be played  during an execution of the system. Then the system does not distinguish between the $N!$ possible ways of interleaving them; nor could an observer retrieve any information on their interleaving. When observing the system history, the only remaining information about these $N$ actions is that they were performed concurrently; and actually it would be irrelevant to think of a ``hidden interleaving''.

Mathematically, a trace monoid $\M$ is a monoid generated by an alphabet~$\Sigma$, and with relations of the form $ab=ba$ for some fixed pairs of letters $(a,b)\in\Sigma\times\Sigma$. The identity $ab=ba$ in $\M$ renders the concurrency of the two actions $a$ and~$b$. This identity is typical of the so-called \emph{partial order} or \emph{true-concurrent} semantics for concurrency. It contrasts with the \emph{interleaving semantics}, which would instead keep track of the two possible sequences $a$-then-$b$ and $b$-then-$a$ when facing the two concurrent actions $a$ and~$b$.

Despite their successful use as models of concurrency for databases for instance~\cite{diekert90,diekert95}, trace monoids lack an essential feature present in most real-life systems, namely they lack a notion of state. Indeed, any action can be performed at any time when considering a trace monoid model; whereas, in real-life systems, some actions may only be enabled when the system enters some specified state, and then one expects the system to enter a new state, determined by the former state and by the action performed.

A natural model combining both the ``built-in'' concurrency feature of trace monoids and the notion of state arises when considering a right monoid action of a trace monoid $\M$ on a finite set of states~$X$, \ie, a mapping $X\times \M\to X$ denoted by $(\alpha,x)\mapsto\alpha\cdot x$. A sink state $\bot$ is introduced in order to distinguish the forbidden actions. Hence, if the system is in state~$\alpha$, performing the letter $a\in\Sigma$ brings the system into the new state~$\alpha\cdot a$, with the convention that $a$ was actually not allowed if $\alpha\cdot a=\bot$. This notion of concurrent system, introduced in~\cite{abbes19:_markov}, encompasses in particular popular models of concurrency such as bounded Petri nets~\cite{reisig85,nielsen81}.

\medskip
Whereas the interleaving semantics of systems provides a direct connection with the classical theory of probabilistic systems (Markov chains in continuous or in discrete time, mainly), adding a probabilistic layer on top of concurrency models within the partial order semantics has been a challenge for some time. Indeed, there is no obvious way to assign a probability to traces with a ``natural'' composition property. The random walk approach for instance, consisting in adding one letter (or action) at a time with each letter being assigned a fixed probability, can be shown to never fulfill the composition property that we are looking for (see a more detailed discussion in Remark~\ref{rem:2} in Sect.~\ref{sec:prob-dynam-trace}, \S~\emph{Probabilistic dynamics for trace monoids}).

The approach of the author and his co-authors on this topic has been to start again from the very beginning: trace monoids themselves~\cite{abbes15}. Equipping the ``trajectories'' of a trace monoid with a natural probabilistic dynamics amounts to defining a memoryless probability measure on the space of ``infinite traces''. The memoryless property of measures is a natural composition property which extends, in the framework of trace monoids, the well known memoryless property typical of, say, coin tossing. The existence of such measures for trace monoids is not obvious. The construction of memoryless probability measures for trace monoids is based on existing tools found in the literature on their combinatorics originally due to  Cartier and Foata~\cite{cartier69} and later revisited by Viennot~\cite{viennot86}. The construction puts into light, in the elementary framework of trace monoids, some essential concepts for the interplay between probability and concurrency: the Möbius polynomial of the monoid and the particular role played by its root of smallest modulus, and the process of cliques visited by an infinite random trace. For the more complex case of concurrent systems, defining a probabilistic dynamics consists then in a more technical work on the very same concepts. New difficulties arise in this case, yet a general theory of probabilistic concurrent systems may be built~\cite{abbes19:_markov,abbes20}.

\eject
The purpose of the present paper is twofold. Firstly, it intends to present an introduction to the theory of probabilistic concurrent systems. We present the key notions and state the main results which justify and guide the computations to be done. Results are stated in a rigorous way, but we do not provide proofs (references are given). The hope is to provide an elementary introduction both to trace theory, from the systems theory point of view, and to the probabilistic aspects of concurrent systems. This includes basic definitions of trace monoids, the Cartier-Foata normal form for traces, the Möbius polynomial and the Möbius transform associated with a trace monoid and the notion of irreducibility for trace monoids. A rigorous notion of infinite trace is developed, and a characterization of memoryless probability measures for trace monoids is given. The particular case of the uniform measure is investigated. The realization result of memoryless measures as finite Markov chains on cliques is also introduced. All these notions are then developed in the more general context of concurrent systems, which yields us to introduce the following notions attached to a concurrent system: its Möbius matrix, its characteristic root, its digraph of states-and-cliques, its Markov measures, and among them its uniform measure in the irreducible case. A key result of the theory is the spectral property for irreducible concurrent systems, to be used later in the paper. The computation of the probability distribution of the first clique of a random infinite execution is illustrated on several examples, and the notion of null node is introduced.

Secondly, and on the contribution part, we introduce and investigate the special case of  \emph{deterministic concurrent systems}. Intuitively, a deterministic concurrent system (\DCS) is a concurrent system where no conflict between different actions can ever arise. Deterministic concurrent systems can be related, for instance, to causal nets and to elementary event structures found in 1980's papers~\cite{nielsen81}. We prove in particular that deterministic concurrent systems correspond to concurrent systems which are ``locally commutative''.

Compared to general concurrent systems, deterministic concurrent systems appear as limit cases. For instance, we prove that their space of infinite executions is at most countable---whereas it is uncountable in general. If the system is moreover irreducible, we show that from any initial state, only one infinite execution exists. In particular the only probabilistic dynamics is trivial in this case---whereas there is a continuum of possible and non trivial probabilistic dynamics in general. Yet, proving these properties is not obvious. The definition of \DCS\ is formulated in elementary terms; their specific properties are formulated in elementary terms; but the proof of these properties relies on some subtle combinatorics of partially ordered sets.

We state general properties of deterministic concurrent systems, and our main contribution is to give several equivalent characterizations of concurrent systems which are both deterministic and irreducible: an algebraic characterization; a probabilistic characterization; a characterization from the analytic combinatorics viewpoint; and a characterization through set-theoretic properties of the set of infinite executions.

Another contribution is a generalization of the well known fact that commutative free monoids have a polynomial growth. The property that we obtain in Corollary~\ref{cor:1} is general enough to be of interest \emph{per se}.\vspace*{-3mm}

\paragraph*{Organization of the paper.}
\label{sec:organisation-paper}

Section~\ref{sec:preliminaries} is devoted to the background on concurrent probabilistic systems, and is divided into three parts. Sections~\ref{sec:trace-monoids-their} surveys basic notions on trace monoids and their probabilistic counterpart, while Section~\ref{sec:conc-syst-their} reviews basic constructions related to concurrent systems, including the probabilistic notions and their relationship with combinatorics. Finally, Section~\ref{sec:an-elem-comp} is devoted to an elementary, yet original result of trace theory, that will be used later in the paper. Deterministic concurrent systems are introduced in Section~\ref{sec:determ-conc-syst}. Section~\ref{sec:irred-determ-conc} is devoted to the study of concurrent systems which are both deterministic and irreducible.

\section{Preliminaries}
\label{sec:preliminaries}

\subsection{Trace monoids and their combinatorics}
\label{sec:trace-monoids-their}

The background material introduced in this section is standard, see for instance~\cite{diekert90,diekert95}, excepted for the probabilistic notions which are borrowed from~\cite{abbes15}.\vspace*{-2mm}

\paragraph*{Independence and dependence pairs.}
An \emph{alphabet} is a finite set, which we usually denote by~$\Sigma$, the elements of which are called \emph{letters}. An  \emph{independence pair} is a pair $(\Sigma,I)$, where $I$ is a binary symmetric and irreflexive relation on~$\Sigma$, called an \emph{independence relation}. A \emph{dependence pair} is a pair $(\Sigma,D)$, where $D$ is a binary symmetric and reflexive relation on~$\Sigma$, called a \emph{dependence relation}. With $\Sigma$ fixed, dependence and independence relations correspond bijectively to one another, through the association $D=(\Sigma\times\Sigma)\setminus I$. The \emph{Coxeter graph} of either pair $(\Sigma,I)$ or  $(\Sigma,D)$ is the graph $(\Sigma,D)$ with all self-loops omitted~\cite{dehornoy15}.

\begin{example}
  \label{exm:1}
    Figure~\ref{fig:coacosn} depicts the Coxeter graph of the independence pair $(\Sigma,I)$ with $\Sigma=\!\{a_0,...\,,a_4\}$ and $(a_i,a_j)\in I\iff |i-j|\geq2$.
\end{example}

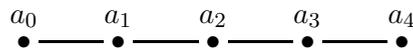
\begin{figure}[!h]
\vspace*{-7mm}
  \centering
  $$
\xymatrix{\bullet\ar@{-}[r]\labelu{a_0}&\bullet\ar@{-}[r]\labelu{a_1}&\bullet\ar@{-}[r]\labelu{a_2}&\bullet\ar@{-}[r]\labelu{a_3}&
\bullet\labelu{a_4}}\vspace*{-5mm}
  $$
  \caption{\small Coxeter graph of the independence pair $(\Sigma,I)$ with $\Sigma=\{a_0,\ldots,a_4\}$ and $(a_i,a_j)\in I\iff |i-j|\geq2$. The arcs of the Coxeter graph correspond to the pairs $(a_i,a_j)$ with $|i-j|=1$}
  \label{fig:coacosn}
\end{figure}

With the alphabet $\Sigma$ fixed, independence pairs are ordered by inclusion and form a sub-lattice of~$\P(\Sigma\times\Sigma)$. The minimum is $I_0=\emptyset$ and the maximum is $I_1=(\Sigma\times\Sigma)\setminus\Delta$, where $\Delta$ is the diagonal relation $\Delta=\{(x,x)\tq x\in\Sigma\}$.\vspace*{-2mm}

\paragraph*{Traces and trace monoids.}
The \emph{trace monoid} $\M=\M(\Sigma,I)$ is the monoid with generators and relations with the following presentation:
\begin{gather*}
  \M=\langle\Sigma\;\big|\; \forall (a,b)\in I\quad ab=ba\rangle.
\end{gather*}

By definition (see, for instance, \cite[Chap.~7]{book93}~for presentations of monoids), $\M$~is the quotient monoid $\Sigma^*/\R$, where $\R$ is the congruence on $\Sigma^*$ generated by all pairs $(ab,ba)$ for $(a,b)$ ranging over~$I$. Elements of $\M$ are called \emph{traces}. Hence every trace is the congruence class of some word of~$\Sigma^*$; and two words of $\Sigma^*$ are congruent whenever they can be obtained from one another by applying arbitrary many times the following rewriting rule:
\begin{gather*}
  \text{for $(a,b)\in I$ and $x,y\in\Sigma^*$:}\qquad xaby\longrightarrow xbay
\end{gather*}

The trace monoid $\M$ is \emph{non trivial} if $\Sigma\neq\emptyset$, in which case $\M$ is countably infinite.

We denote by $\pi_I:\Sigma^*\to\M$ the canonical morphism.  The unit element of~$\M$, image of the empty word, is called the \emph{empty trace} and is denoted by~$\vd$. The concatenation of $x,y\in\M$ is denoted by~$x\cdot y$. We identify letters of the alphabet $\Sigma$ with their images in~$\M$ through the canonical mappings $\Sigma\to\Sigma^*\xrightarrow{\pi_I}\M$. By construction, any two distinct letters $a$ and $b$ commute in $\M$ if and only if $(a,b)\in I$.

Two extreme cases of trace monoids correspond to the extremal independence relations introduced above~: $\M(\Sigma,I_0)$ is isomorphic to the free monoid~$\Sigma^*$, where no two distinct letters commute with each other; $\M(\Sigma,I_1)$~is isomorphic to the free commutative monoid, where all letters commute with each other. In the general case, only some pairs of distinct letters commute with each other, namely those not directly connected in the Coxeter graph; hence the alternative name of \emph{free partially commutative monoids} for trace monoids in the literature.\vspace*{-2mm}

\paragraph*{Representation of traces.}
\label{sec:repr-trac}

\emph{Heaps of pieces} are combinatorial objects introduced in~\cite{viennot86} which provide an intuitive  visual representation of traces (see also~\cite{krattenthaler06}). Picture each letter as a piece falling to the ground, in such a way that distinct letters which commute with each other fall along parallel and disjoint lines; whereas non commutative letters fall in such a way that they block each other. The heaps of pieces thus obtained are combinatorial objects corresponding bijectively to the elements of the trace monoid, by reading the letters labeling the pieces from bottom to top. Several readings are possible, corresponding to the different words in the congruence class of the trace.

\begin{example}
Figure~\ref{fig:apoaspokzx} (left) depicts the heap of pieces corresponding to a trace of the monoid $\M(\Sigma,I)$ from Example~\ref{exm:1}.
\end{example}

\begin{figure}[!h]
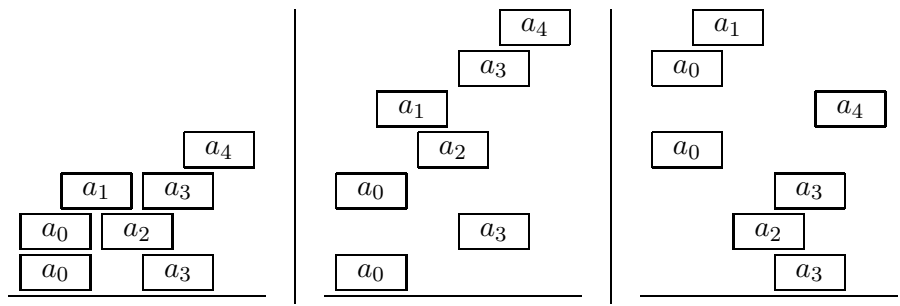

  \centering
    \begin{tabular}{c@{\quad}|@{\quad}c@{\quad}|@{\quad}c}
\xy
<.1em,0em>:
0="G",
"G"+(12,6)*{a_0},
"G";"G"+(24,0)**@{-};"G"+(24,12)**@{-};"G"+(0,12)**@{-};"G"**@{-},
(42,0)="G",
         "G"+(12,6)*{a_3},
"G";"G"+(24,0)**@{-};"G"+(24,12)**@{-};"G"+(0,12)**@{-};"G"**@{-},
(0,14)="G";
         "G"+(12,6)*{a_0},
"G";"G"+(24,0)**@{-};"G"+(24,12)**@{-};"G"+(0,12)**@{-};"G"**@{-},
(28,14)="G";
         "G"+(12,6)*{a_2},
"G";"G"+(24,0)**@{-};"G"+(24,12)**@{-};"G"+(0,12)**@{-};"G"**@{-},
(14,28)="G";
         "G"+(12,6)*{a_1},
"G";"G"+(24,0)**@{-};"G"+(24,12)**@{-};"G"+(0,12)**@{-};"G"**@{-},
(42,28)="G";
         "G"+(12,6)*{a_3},
"G";"G"+(24,0)**@{-};"G"+(24,12)**@{-};"G"+(0,12)**@{-};"G"**@{-},
(56,42)="G";
         "G"+(12,6)*{a_4},
"G";"G"+(24,0)**@{-};"G"+(24,12)**@{-};"G"+(0,12)**@{-};"G"**@{-},
(-4,-2);(84,-2)**@{-}
\endxy
&
 \xy
<.1em,0em>:
0="G",
"G"+(12,6)*{a_0},
"G";"G"+(24,0)**@{-};"G"+(24,12)**@{-};"G"+(0,12)**@{-};"G"**@{-},
(42,14)="G",
         "G"+(12,6)*{a_3},
"G";"G"+(24,0)**@{-};"G"+(24,12)**@{-};"G"+(0,12)**@{-};"G"**@{-},
(0,28)="G";
         "G"+(12,6)*{a_0},
"G";"G"+(24,0)**@{-};"G"+(24,12)**@{-};"G"+(0,12)**@{-};"G"**@{-},
(28,42)="G";
         "G"+(12,6)*{a_2},
"G";"G"+(24,0)**@{-};"G"+(24,12)**@{-};"G"+(0,12)**@{-};"G"**@{-},
(14,56)="G";
         "G"+(12,6)*{a_1},
"G";"G"+(24,0)**@{-};"G"+(24,12)**@{-};"G"+(0,12)**@{-};"G"**@{-},
(42,70)="G";
         "G"+(12,6)*{a_3},
"G";"G"+(24,0)**@{-};"G"+(24,12)**@{-};"G"+(0,12)**@{-};"G"**@{-},
(56,84)="G";
         "G"+(12,6)*{a_4},
"G";"G"+(24,0)**@{-};"G"+(24,12)**@{-};"G"+(0,12)**@{-};"G"**@{-},
(-4,-2);(84,-2)**@{-}
\endxy
&
 \xy
<.1em,0em>:
(0,70)="G",
"G"+(12,6)*{a_0},
"G";"G"+(24,0)**@{-};"G"+(24,12)**@{-};"G"+(0,12)**@{-};"G"**@{-},
(42,0)="G",
         "G"+(12,6)*{a_3},
"G";"G"+(24,0)**@{-};"G"+(24,12)**@{-};"G"+(0,12)**@{-};"G"**@{-},
(0,42)="G";
         "G"+(12,6)*{a_0},
"G";"G"+(24,0)**@{-};"G"+(24,12)**@{-};"G"+(0,12)**@{-};"G"**@{-},
(28,14)="G";
         "G"+(12,6)*{a_2},
"G";"G"+(24,0)**@{-};"G"+(24,12)**@{-};"G"+(0,12)**@{-};"G"**@{-},
(14,84)="G";
         "G"+(12,6)*{a_1},
"G";"G"+(24,0)**@{-};"G"+(24,12)**@{-};"G"+(0,12)**@{-};"G"**@{-},
(42,28)="G";
         "G"+(12,6)*{a_3},
"G";"G"+(24,0)**@{-};"G"+(24,12)**@{-};"G"+(0,12)**@{-};"G"**@{-},
(56,56)="G";
         "G"+(12,6)*{a_4},
"G";"G"+(24,0)**@{-};"G"+(24,12)**@{-};"G"+(0,12)**@{-};"G"**@{-},
(-4,-2);(84,-2)**@{-}
\endxy
    \end{tabular}
    \caption{\small In this example the commutation relations are those of Example~\ref{exm:1}, with Coxeter graph depicted on Fig.~\ref{fig:coacosn}. \textsl{Left:} representation as a heap of piece of the trace $y=a_0\cdot a_3\cdot a_0\cdot a_2\cdot a_1\cdot a_3\cdot a_4$.
\textsl{Middle and right:} representations of two words in the congruence class of~$y$: $a_0$-$a_3$-$a_0$-$a_2$-$a_1$-$a_3$-$a_4$ (middle) and $a_3$-$a_2$-$a_3$-$a_0$-$a_4$-$a_0$-$a_1$ (right)}
  \label{fig:apoaspokzx}
\end{figure}

If $x$ is a trace, we denote by $\H x$ the corresponding heap. The identification of traces with heaps is sound in the following sense. If $x$ and $y$ are two traces, then the heap $\H{x\cdot y}$ is obtained by piling up the two heaps $\H x$ and~$\H y$, and then letting pieces from $\H y$ fall down to the ground or until they are blocked by pieces from~$\H x$, which produces a new arrangement of the resulting heap.\vspace*{-2mm}

\paragraph*{Length, occurrence of letters and divisibility order.}

By its very construction as a quotient monoid, $\M$~comes equipped with a number of objects for which we give some details now. Let $x\in\M$ be the congruence class of a word $u\in\Sigma^*$. The \emph{length} of~$x$, denoted by~$|x|$, is the length of~$u$. The quantity $|u|$ is independent of the choice of~$u$. The length is additive~: $|x\cdot y|=|x|+|y|$, and satisfies $|x|=0\iff x=\vd$. \HPOV: $|x|$~represents the number of pieces in the heap~$\H x$.

Furthermore, for each letter $a\in\Sigma$, we write $a\in x$ whenever $a$ has at least one occurrence in~$u$, and we write $a\notin x$ otherwise. \HPOV: $a\in x$ means that the heap $\H x$ contains a piece labeled by~$a$.

Finally, the preorder $(\M,\leq)$ inherited from the left divisibility in $\M$ is defined by: $x\leq y\iff(\exists z\in\M\quad y=x\cdot z)$. This preorder is actually a partial order since $\M$ is equipped with the length function introduced above (the antisymmetry of $\leq$ derives at once from the existence of the length function). \HPOV: $x\leq y$ whenever one can complete the heap $\H x$ by letting additional letters fall from the top and obtain the heap~$\H y$. In this case we say that  $\H x$ is a \emph{sub-heap} of~$\H y$.

The monoid $\M$ is \emph{left cancellative}: for $x,y,z\in\M$, if $x\cdot y=x\cdot z$ then $y=z$ (the proof given in~\cite{cartier69} is based on the existence of a normal form for traces, see below).

As a consequence, if $x,y\in\M$ are such that $x\leq y$, the element $z\in\M$ such that $y=x\cdot z$ is unique. We denote this element by $z=x\backslash y$. \HPOV: the heap $\H z$ is obtained by removing from below in the heap $\H y$ the pieces that form the heap~$\H x$, sub-heap of~$\H y$.\vspace*{-2mm}

\paragraph*{Cliques.}
A \emph{clique} of $\M$ is a trace of the form $x=a_1\cdot\ldots\cdot a_k$, where all $a_i$s are letters such that $i\neq j\implies (a_i,a_j)\in I$. Hence a clique represents a set of mutually concurrent actions.

We denote by $\C$ the set of cliques, which is a finite set. Letters and the empty trace are cliques of length $1$ and $0$ respectively. There exist cliques of length at least $2$ if and only if $\M$ is not a free monoid, or equivalently, if the independence relation is not empty. \HPOV:  heaps corresponding to cliques are the horizontal ones, with all pieces directly on the ground.

Since all $a_i$s commute with each other, we identify the clique $x=a_1\cdot\ldots\cdot a_k$ with the subset $\{a_1,\ldots,a_k\}\in\P(\Sigma)$. Through this identification,  $(\C,\leq)$~is isomorphic to a downward-closed subset of \mbox{$(\P(\Sigma),\subseteq)$}. It corresponds to the full powerset $(\P(\Sigma),\subseteq)$ if and only if $\M$ is the free commutative monoid on~$\Sigma$.

A \emph{non empty clique} is a clique $x\neq\vd$. The set of non empty cliques of~$\M$ is denoted by~$\Cstar$. Minimal elements of $(\Cstar,\leq)$ correspond to the letters of~$\Sigma$.

\begin{example}
  For the monoid from Example~\ref{exm:1}, the set of cliques is the following:
\begin{gather*}
  \begin{array}{l@{\qquad}l}
    \C=\{\;\vd&\text{$1$ clique of length $0$}\\[-1pt]
    \ \ a_0,\,a_1,\,a_2,\,a_3,\,a_4,&\text{$5$ cliques of length $1$}\\[-1pt]
     \ \     a_0\cdot a_2,\, a_0\cdot a_3,\, a_0\cdot a_4,\ a_1\cdot a_3,\,a_1\cdot a_4,\,a_2\cdot a_4,&
                                                                                                      \text{$6$ cliques of length $2$}\\[-1pt]
     \ \   a_0\cdot a_2\cdot a_4\;\}&\text{$1$ clique of length $3$}
  \end{array}
\end{gather*}\vspace*{-4mm}
\end{example}

\paragraph*{Lower and upper bounds of traces.}
\label{sec:lower-upper-bounds}

Any two traces $x,y\in\M$ have a greatest lower bound (\glb) in $(\M,\leq)$, which we denote by $x\wedge y$. \HPOV: the heap corresponding to $x\wedge y$ is obtained as the maximal common sub-heap of $\H x$ and of~$\H y$. In the case where both $x$ and $y$ are cliques, then $x\wedge y$~is the clique corresponding to the subset $x\cap y\in\P(\Sigma)$.
\vfil\eject

Two traces $x$ and $y$ have a least upper bound (\lub) in $(\M,\leq)$, denoted by $x\vee y$ if it exists, if and only if they have a common upper
bound.\vspace*{-2mm}

\paragraph*{Normal sequences. Normal form and generalized normal form of traces.}
\label{sec:normal-sequences}

Cartier and Foata have introduced in \cite{cartier69} a normal form for traces\footnote{See \cite{epstein92} for a general notion of normal form in a quotient monoid. See~\cite{dehornoy15} for the description of a normal form for a class of presented monoids including trace monoids.}, which we describe now.

\medskip
Let $\M=\M(\Sigma,I)$ be a trace monoid with associated dependence relation~$D$. A pair $(x,y)\in\C\times \C$ is a \emph{normal pair}, which we denote by $x\to y$, if:
\begin{gather}
  \label{eq:11}
\forall b\in y\quad\exists a\in x\quad(a,b)\in D.
\end{gather}

\HPOV: the pair $(x,y)\in\C\times\C$ is normal if and only if the horizontal heap $\H x$ can \emph{support} the horizontal heap~$\H y$, in the sense that $\H y$ can be piled up upon $\H x$ without any of its pieces falling down.

In any trace monoid, two particular cases occur: $x\to\vd$ for all $x\in\C$, and $\vd\to x$ if and only if $x=\vd$.

\begin{example}
Consider the clique  $x=a_0\cdot a_2$ in the trace monoid from Example~\ref{exm:1}. The non empty cliques  $y\in\Cstar$ such that $x\to y$ are the following: $a_0$, $a_1$, $a_2$, $a_3$, $a_0\cdot a_2$, $a_0\cdot a_3$, $a_1\cdot a_3$.
\end{example}

A sequence $(c_i)_i$ of cliques, the sequence being either finite or infinite, is a \emph{normal sequence} if $(c_i,c_{i+1})$ is a normal pair for all pairs of indices $(i,i+1)$.

The interest of this notion lies in the following result~\cite{cartier69}: {\itshape for any non empty trace~$x$, there exists a unique integer $k\geq1$ and a unique normal sequence $(c_1,\ldots,c_k)$ of non empty cliques such that $x=c_1\cdot\ldots\cdot c_k$}. The sequence $(c_1,\ldots,c_k)$ is the \emph{Cartier-Foata normal form of~$x$}, or the \emph{normal form of $x$} for short. The integer $k$ is the \emph{height} of~$x$.

\HPOV: the cliques that appear in the normal form of a trace $x$ correspond to the horizontal layers one sees in the heap~$\H x$. The height $k$ is the number of horizontal layers of~$\H x$.

\begin{example}[and warning]
  Let $y$ be the trace depicted on Fig.~\ref{fig:apoaspokzx}. It has the following normal form:  $(a_0\cdot a_3,\,a_0\cdot a_2,\,a_1\cdot a_3,\,a_4)$. Its height is~$4$.

\medskip
  Observe that if $x\leq y$, it does not imply that the normal form of $x$ is a prefix word of~$y$'s. Indeed, consider for instance $x=a_0a_0$. Then $x\leq y$ since, by the commutation relations in~$\M$, one has $y=xz$ with $z=a_3a_2a_1a_3a_4$. Yet, the normal form of $x$ is $(a_0,a_0)$, which is not a prefix word of the normal form of~$y$.

  Put differently, if $x$ is a trace with normal form $(c_1,\ldots,c_k)$, adding a letter $a$ or more generally a trace $z$ to $x$ yields a new trace $y=xz$ whose normal form $(d_1,\ldots,d_{k'})$ is not easily described from the normal forms of $x$ and of~$z$. In particular, the initial clique $d_1$ of $y$ may differ from~$c_1$, even if $z=a$ is a single letter since this letter might ``fall'' all the way down to the ground.
\end{example}

Since the height of traces varies, it is convenient to complete the normal form of traces as follows. For $x$ a non empty trace of height $k$ and with normal form $(c_1,\ldots,c_k)$, we put $c_i=\vd$ for all $i>k$. The now infinite sequence $(c_i)_{i\geq1}$ is still a normal sequence, called the \emph{generalized normal form of~$x$}. By convention, the generalized normal form of $\vd$ is the normal sequence $(\vd,\vd,\ldots)$.

\medskip
We observed above that the divisibility relation in a trace monoid $\M$ does not correspond to the prefix order on normal forms. More precisely, if $(c_i)_{i\geq1}$ is the generalized normal form of some trace $x\in\M$, and if $(d_i)_{i\geq1}$ is the generalized normal form of some trace $y\in\M$, then:
\begin{gather}
  \label{eq:19}
  x\leq y\text{ in $\M$}\iff\bigl(\forall i=1,2\ldots\quad c_i\leq d_i\text{ in $\C$}\,\bigr)
\end{gather}


\paragraph*{Generalized traces and infinite traces.}
\label{sec:generalised-traces}


The generalized normal forms of traces are constructed as infinite normal sequences of cliques. Conversely, let $\xi=(c_i)_{i\geq1}$ be an arbitrary infinite normal sequence of cliques. Then two cases may occur:
\begin{enumerate}
\item If $c_i=\vd$ for some integer $i\geq1$, then $c_j=\vd$ for all integers $j\geq i$. In this case, $\xi$~is the generalized normal form of some trace, namely of the trace $x=c_1\cdot\ldots\cdot c_i$.
\item Otherwise, $c_i\neq\vd$ for all integers $i\geq1$. Based on the heap of pieces intuition, it is natural to define such objects as infinite traces, since they correspond to infinite piles of layers.
\end{enumerate}

This motivates the following definitions. A \emph{generalized trace} is any infinite normal sequence $\xi=(c_i)_{i\geq1}$ of cliques. We denote by $\Mbar$ the set of generalized traces. If $c_i\neq\vd$ for all integers~$i$, then $\xi$ is called an \emph{infinite trace}.

The set of infinite traces is called the \emph{boundary at infinity} of the monoid~$\M$, and is denoted by~$\BM$. We observe that $\BM\neq\emptyset$ as soon as~$\Sigma\neq\emptyset$. Furthermore, $\BM$~is infinite and uncountable if and only if $\M$ is not a free commutative monoid.

Using the embedding described above of $\M$ into~$\Mbar$ (each trace $x$ corresponding to its generalized normal form), we identify $\M$ with its image in~$\Mbar$. Then $\Mbar$ decomposes as the following disjoint union: $\Mbar=\M+\BM$.

\medskip
In view of~(\ref{eq:19}), it is natural to extend the partial order on $\M$ to a partial order on $\Mbar$ by putting, for $\xi=(c_i)_{i\geq1}$ and $\zeta=(d_i)_{i\geq1}$ two generalized traces:
\begin{gather*}
  \xi\leq\zeta\text{ in $\Mbar$}\iff(\forall i=1,2\ldots\quad c_i\leq  d_i\text{ in $\C$}\,).
\end{gather*}


Recall that $\Cstar$ denotes the set of non empty cliques of the trace monoid~$\M$. For each integer $i\geq1$, we define a mapping $C_i:\BM\to\Cstar$ by putting $C_i(\omega)=c_i$ whenever $\omega=(c_i)_{i\geq1}$\,. \HPOV: $C_i(\omega)$ is the $i^{\text{th}}$~layer of the infinite
heap~$\omega$.\vspace*{-2mm}

\paragraph*{Digraph of cliques.}
\label{sec:digraph-cliques}

The digraph $(\Cstar,\to)$ is called the \emph{digraph of cliques} of the monoid. Infinite paths in this digraph correspond bijectively to infinite traces in the monoid. If one follows an infinite path in~$(\Cstar,\to)$, the infinite trace $\omega$ it corresponds to satisfies that $C_i(\omega)$ is the $i^{\text{th}}$~node visited along the path.

\begin{example}
  \label{exm:2}
We depict on Fig.~\ref{fig:jqwdkjqwd} the digraph $(\Cstar,\to)$ for the trace monoid $\M=\langle a,b,c,d\ \big|\ ac=ca,\ bd=db\rangle$.
\end{example}

\begin{figure}[!h]
  $$
  \begin{array}{c|c}
    \xymatrix{\bullet\ar@{-}[r]\labell{a}&\bullet\ar@{-}[d]\labelr{b}\\
    \bullet\ar@{-}[u]\ar@{-}[r]\labell{d}&\bullet\labelr{c}
    }\qquad\strut&\strut\qquad
\xymatrix{
    *+[F]{\strut\ b\ }\ar@{<->}[rrr]\ar@{<->}[dd]&&
    &*+[F]{\strut\ a\ }\\
    &*+[F]{\strut a\cdot c}
    \ar@{<->}[r]
    \POS!U!L\ar@{<->}[ul]!D!R\POS!U!R\ar[urr]\POS!D\ar@{<->}[drr]\POS!L\ar[dl]!U!R
    &*+[F]{\strut b\cdot d}
    \POS!U!R\ar@{<->}[ur]!D!L\POS!U!L\ar[ull]\POS!D\ar@{<->}[dll]\POS!R\ar[dr]!U!L\\
    *+[F]{\strut\ c\ }&&
    &*+[F]{\strut\ d\ }\ar@{<->}[lll]\ar@{<->}[uu]\\
    }
  \end{array}
  $$
\caption{\small \textsl{Left:} Coxeter graph of $\M=\langle  a,b,c,d\ \big|\ ac=ca,\ bd=db\rangle$.\quad\textsl{Right:} digraph of cliques for~$\M$. Arrows with a double tip stand for pairs of arrows}
  \label{fig:jqwdkjqwd}\vspace*{-4mm}
\end{figure}

\paragraph*{Probabilistic dynamics for trace monoids. Valuations and visual cylinders.}
\label{sec:prob-dynam-trace}

Textbooks on probability often start from the first non trivial probabilistic experience, namely the ``infinite sequence of tosses of a coin''~\cite{billingsley95}. Implicitly, an ``infinitely repeated probabilistic experience'' involves an infinite sequence of independent and identically distributed random variables. Mathematically, all the information is encoded into a probability measure on the space of infinite words\footnote{Recall that a \emph{\slgb}\ $\FFF$ on a set $\Omega$ is a family of subsets of $\Omega$ containing $\Omega$ and closed under complement, countable union and countable intersection. A \emph{probability measure} on $(\Omega,\FFF)$ is then a set function $\nu:\FFF\to[0,1]$ such that $\nu(\Omega)=1$ and countably additive on sequences of pairwise disjoint subsets: $(i\neq j\implies A_i\cap A_j=\emptyset)\implies \nu\bigl(\bigcup_{i\geq1}A_i\bigr)=\sum_{i\geq1}\nu(A_i)$.}, with the key feature of being \emph{memoryless}. This will guide us when looking for a generalization that would apply to trace monoids instead of word monoids.

\medskip
Let $\M$ be a non trivial trace monoid. The boundary at infinity~$\BM$ is a subset of the product set~$\Cstar^{\bbZ_{\geq1}}$ and as such, comes equipped with a topology which in turn induces a Borel \slgb, which is always understood. Assume given a probability measure $\nu$ on~$\BM$. Then basic results from measure theory show that $\nu$ is entirely determined by the countable collection of values $\nu(\up x)$, for $x$ ranging over~$\M$, where $\up x$ is the \emph{visual cylinder\footnote{The terminology \emph{visual cylinder} is derived from the ``visual measure'' introduced in geometric group theory.} of base~$x$}, defined by:
\begin{gather}
  \label{eq:12}
  \up x=\{\omega\in\BM\tqs x\leq\omega\}.
\end{gather}

\HPOV: $\omega\in\up x$ means that the infinite heap $\H \omega$ can be obtained from $\H x$ by adding infinitely many pieces from the top (again, note that the initial layers of $\H x$ and of $\H\omega$ may differ).

\medskip
We say that $\nu$ is \emph{memoryless} if it satisfies the following property:
\begin{gather}
  \label{eq:20}
  \forall x,y\in\M\quad \nu\bigl(\up(x\cdot y)\bigr)=\nu(\up x)\nu(\up y).
\end{gather}

\begin{remark}[other probabilistic dynamics]
  \label{rem:2}
In order to equip a trace monoid with a probabilistic dynamics, one might think first of the random walk approach. Consider an infinite sequence $X_1,X_2,\ldots$ of random letters, each letter $X_i$ being picked at random and uniformly in~$\Sigma$; then form the infinite trace obtained by piling up all this letters and consider the probability law of the infinite trace thus obtained. By construction, this law is indeed a probability measure on~$\BM$. Intuitively, the more a trace has internal commuting elements, the more it will be favored by this law; it is thus not a ``uniform'' way of choosing traces, and neither is it memoryless. Actually, the associated probability measure never satisfies the property~(\ref{eq:20}), as soon as $\M$ is not a free monoid nor a free commutative monoid---hence, in all cases of interest. This random walk measure is of course of deep mathematical interest on its own; yet we are looking for other probabilistic dynamics of interest.

\medskip
  Another way of selecting traces at random is the following. For each integer~$n$, consider the finite set of traces of length~$n$, and then choose randomly one trace among them. This yields a probability distribution $\nu_n$ for each integer~$n$. This procedure can be refined by attributing multiplicative weights to letters instead of the same weight to all letters. In all cases, this procedure has three drawbacks:
  \begin{enumerate}
  \itemsep=0.9pt
    \item the probability distributions $\nu_n$ are defined on $\M$ and not on~$\BM$---and it would be rather unnatural to stop the process at some fixed length~$n$;
    \item the sequence $(\nu_n)_{n\geq0}$ is not a ``consistent sequence''---hence Kolmogorov's extension theorem does not apply to define a completion ``at infinity''; and
    \item for each fixed~$n$, the probability distribution $\nu_n$ is not memoryless---if one would care to define any sort of memoryless property for probability distributions on finite traces rather than on infinite ones.
  \end{enumerate}
Despite all these restrictions, the sequence $(\nu_n)_{n\geq0}$ is of much interest since after all, it is the most natural way to pick a trace at random. It can be shown in a precise way that the sequence of probabilities $(\nu_n)_{n\geq0}$ converges to  a probability measure on~$\BM$ which is indeed memoryless. Intuitively, the memoryless probability measures that we are seeking correspond to this procedure, but obtained with ``$n=\infty$''.
\end{remark}

As observed above, the mere existence of at least a memoryless probability measure for a trace monoid is not obvious. But assuming for the moment the existence of such a probability measure~$\nu$, consider the function $f:\M\to\bbR_{\geq0}$ defined by:
\begin{gather}
  \label{eq:16}
  \forall x\in\M\quad f(x)=\nu(\up x).
\end{gather}
Then, by~(\ref{eq:20}), $f$ satisfies:
\begin{align}
  \label{eq:13}
  f(\vd)&=1&&\text{and}&\forall x,y\in\M\quad f(x\cdot y)=f(x)f(y)
\end{align}
We define a function $f:\M\to\bbR_{\geq0}$ to be a \emph{valuation} whenever it satisfies the two properties in~\eqref{eq:13}. Clearly, a valuation $f$ is entirely determined by the finite collection of its values on the letters of~$\Sigma$. And conversely, given any family $(\lambda_a)_{a\in\Sigma}$ of non negative numbers, there is a unique valuation $f$ such that $f(a)=\lambda_a$ for all $a\in\Sigma$. The central question is now the following.
\begin{quote}
\textbf{(Q)} \itshape
Let\/ $(\lambda_a)_{a\in\Sigma}$ be a collection of non negative real numbers, and let $f$ be the corresponding valuation. What computable conditions on $(\lambda)_{a\in\Sigma}$ are necessary and sufficient for the existence of a probability measure $\nu$ on $\BM$ such that\/ $\nu(\up x)=f(x)$ for all $x\in\M$?
\end{quote}

\noindent The probability measure $\nu$ thus constructed shall necessarily be memoryless. Hence answering the above question amounts to having an operational description of memoryless probability measures on~$\BM$.\vspace*{-2mm}

\paragraph*{M\"obius transform and probabilistic valuations.}
\label{sec:mobius-transform}

Our answer to the above question~\textbf{(Q)} is based on the notion of Möbius transform, a notion attached to a large class of partial orders and popularized by G.-C.~Rota~\cite{rota64}. The partial order we shall focus on is the finite partial order~$(\C,\leq)$. Let $f:\C\to A$ be any function, where $A$ is a commutative group---we shall always take $A=\bbR$. The \emph{Möbius transform} of~$f$ is the function $h:\C\to A$ defined by:
\begin{gather}
  \label{eq:1}
  \forall c\in\C\quad h(c)=\sum_{c'\in \C\tqs c\leq c'}(-1)^{|c'|-|c|}f(c').
\end{gather}

The function $f$ can be retrieved from $h$ thanks to the \emph{Möbius inversion formula}, which is a kind of generalized inclusion-exclusion formula:
\begin{gather}
  \label{eq:2}
  \forall c\in\C\quad f(c)=\sum_{c'\in\C\tqs c\leq c'}h(c').
\end{gather}
In particular, one has:
\begin{gather}
  \label{eq:3}
  f(\vd)=\sum_{c\in \C}h(c).
\end{gather}

Let $h:\C\to\bbR$ be the Möbius transform of a valuation~$f$, restricted to~$\C$. Then we define $f$ to be a \emph{probabilistic valuation} whenever:
\begin{gather}
  \label{eq:7}
\bigl(  h(\vd)=0\bigr)\quad\text{and}\quad\bigl(\forall c\in\Cstar\quad h(c)\geq0\bigr).
\end{gather}
In this case, the vector $\bigl(h(c)\bigr)_{c\in\Cstar}$ is a probability vector. Indeed, it is non negative and it sums up to $1$ thanks to~\eqref{eq:3}, since $f(\vd)=1$ and $h(\vd)=0$.

\medskip
The following statement provides an answer to Question~\textbf{(Q)}: \textit{the existence of a memoryless measure $\nu$ associated with the valuation $f$ through \mbox{$f(x)=\nu(\up x)$} for $x$ ranging over~$\M$, is equivalent to $f$ being a probabilistic valuation.}

A particular case is when $f$ is \emph{uniform}, in the sense that $f(a)=t$ is constant for $a$ ranging over~$\Sigma$, and thus $f(x)=t^{|x|}$ for $x\in\M$. A result is: {\itshape there exists a unique uniform probabilistic valuation}. It implies the existence of at least one memoryless measure for every trace monoid.

\begin{example}
\label{sec:example-1}

Let $\M=\langle a,b,c,d\tq ac=ca,\ bd=db\rangle$ with $\Cstar=\{a,b,c,d,a\cdot c,b\cdot d\}$, and whose Coxeter graph is depicted on Fig.~\ref{fig:jqwdkjqwd}. Let us simply denote by $a$, $b$, etc, the values of $f(a)$, $f(b)$, etc, for some valuation~$f$. The normalization conditions~(\ref{eq:7}) for $f$ to be a probabilistic valuation are:
\begin{gather*}
  1-a-b-c-d+ac+bd=0\\
  \begin{aligned}
    a-ac&\geq0,&b-bd&\geq0,&c-ac&\geq0,&d-db&\geq0,&
ac&\geq0,&bd&\geq0.
  \end{aligned}
\end{gather*}
A solution is to put $a=b=1/3$ and $c=d=1/4$. Another solution is  to look for the uniform valuation, hence to put $a=b=c=d=1-\sqrt2/2$. The later value is the root of smallest modulus of the polynomial $1-4p+2p^2$, which we encounter below as the Möbius polynomial of the monoid.
\end{example}\vspace*{-2mm}

\paragraph*{Markov chain of cliques.}
\label{sec:mark-chains-cliq}

Let $\nu$ be a memoryless probability measure on~$\BM$. Intuitively, the \linebreak measure $\,\nu\,$ encodes a way of choosing at random$\,$ an infinite$\,$ trace $\,\omega\in\BM.\,$ Since $\,\omega\,$ has the form
\eject

\noindent $\omega\!=\!(c_i)_{i\geq1}$,  it is natural to investigate the nature of the random sequence $c_i=C_i(\omega)$. It is random indeed since it depends on the random outcome $\omega$ of the probabilistic experience.

\medskip
It turns out that: {\itshape
  with respect to the memoryless probability measure~$\nu$, the random sequence $(C_i)_{i\geq1}$  is a homogeneous Markov chain on~$\Cstar$. Its initial distribution is given by: $\forall c\in\Cstar\quad \nu(C_1=c)=h(c)$, where $h$ is the Möbius transform of~$f$, probabilistic valuation attached to~$\nu$ as in~\eqref{eq:16}.} The transition matrix of the Markov chain can also be described, but we shall not need it in the sequel. We simply mention that it also involves the Möbius transform~$h$ (see the details in~\cite{abbes15}).\vspace*{-1mm}

\begin{remark}
  Observe the different probabilistic interpretations of the two functions $f$ and~$h$. If $c$ is some non empty clique, then $f(c)=\nu(\up c)$ is the probability that the initial clique $C_1$ \emph{contains}~$c$; whereas $h(c)$ is the probability that $C_1$ \emph{equals}~$c$. The $h(c)s$ sum up to $1$ over~$\Cstar$, whereas: $\sum_{c\in\Cstar}f(c)>1$ unless $\M$ is a free monoid.\vspace*{-1mm}
\end{remark}

\begin{example}
  Continuing with the trace monoid $\M=\langle a,b,c,d\tq ac=ca,\ bd=db\rangle$ from Example~\ref{sec:example-1}, let us determine the law $h$ of the first clique for the uniform probabilistic valuation, say~$f$, given by $f(x)=r^{|x|}$ for any $x\in\M$ with $r=1-\sqrt2/2$. The law $h$ is simply the Möbius transform of~$f$, hence:
  \begin{align*}
    h(a)=h(b)=h(c)=h(d)&=r-r^2&h(ac)=h(bd)&=r^2
  \end{align*}
Observe two things: first, the law $h$ is \emph{not} uniform on cliques; and second, the fact that all the above values sum up to~$1$ writes as: $4(r-r^2)+2r^2=1$, or equivalently as $1-4r+2r^2=0$, which holds since $r$ has been chosen as a root of the polynomial $\mu(t)=1-4t+2t^2$. This is a general fact, better explained when introducing the notion of Möbius polynomial below.\vspace*{-2mm}
\end{example}

\paragraph*{Irreducibility of trace monoids.}
\label{sec:irreducibility-1}

Given two trace monoids $\M_i=\M(\Sigma_i,I_i)$, $i=1,2$, their direct product $\M_1\oplus\M_2$ is isomorphic to another trace monoid. Indeed, take the disjoint union $\Sigma=\Sigma_1+\Sigma_2$, and for dependence relation the disjoint union $D=D_1+D_2$, with $D_1$ and $D_2$ now seen as binary relations on~$\Sigma$. Take finally the independence pair $I=(\Sigma\times\Sigma)\setminus D$. Then $\M_1\oplus\M_2$ is isomorphic to~$\M(\Sigma,I)$. In this construction, letters from a common alphabet keep their dependence relations, and letters from distinct alphabets are set to be independent, \ie, commutative.

For instance, the free commutative monoid on $N$ generators is obtained as the direct product of $N$ copies of the free monoid on $1$ generator.

Conversely, given a trace monoid $\M=\M(\Sigma,I)$, it is well known that $\M$ is not isomorphic to the product of two non trivial trace monoids if and only if the Coxeter graph $(\Sigma,D)$ is connected. In this case, the trace monoid $\M$ is said to be \emph{irreducible}.

For example, the free monoid $\M(\Sigma,I_1)$ is irreducible, and the free commutative monoid $\M(\Sigma,I_0)$ is irreducible if and only if $|\Sigma|\leq 1$. All other examples of trace monoids that we encountered previously are irreducible.\vspace*{-2mm}

\paragraph*{Combinatorics and probability for trace monoids: growth series and Möbius polynomials.}
\label{sec:growth-series-mobius}

The  \emph{growth series} $G(z)$ and the \emph{Möbius polynomial} $\mu(z)$ of a trace monoid $\M$ are defined as follows:
\begin{align}
\label{eq:18}
  G(z)&=\sum_{x\in\M}z^{|x|},&\mu(z)&=\sum_{c\in\C}(-1)^{|c|}z^{|c|}.
\end{align}
The series $G(z)$ is rational, and it is the formal inverse of the Möbius polynomial:\quad $G(z)\mu(z)=1$ (see~\cite{cartier69} for a combinatorial proof, see \cite{viennot86} for a bijective proof).

\medskip
If $\Sigma\neq\emptyset$, the Möbius polynomial has a unique root of smallest modulus (see~\cite{krob03,goldwurm00}). This root, say~$r$, is real and lies in $(0,1]$. If $\Sigma=\emptyset$, we put $r=\infty$. In all cases, the radius of convergence of $G(z)$ is~$r$.

We note that: \emph{$r\geq1$ if and only if $\M$ is free commutative}---an elementary result to be generalized when dealing with deterministic concurrent systems in Sections~\ref{sec:determ-conc-syst} and~\ref{sec:irred-determ-conc}. Indeed, the coefficients of the growth series $G(z)=\sum_{n\geq0}\lambda_nz^n$ are given by $\lambda_n=\#\{x\in\M\tqs|x|=n\}$. If $\M$ is not free commutative, then $\M$ contains the free monoid on two generators as a submonoid. Hence $\lambda_n\geq 2^n$ and thus $r\leq1/2$. Whereas, if $\M$ is free commutative and $\Sigma$ has $N\geq0$ elements, then\footnote{This is a particular case of the easily observed identity on Möbius polynomials: $\mu_{\M_1\oplus\M_2}(z)=\mu_{\M_1}(z)\mu_{\M_2}(z)$.} $\mu(z)=(1-z)^N$ and therefore $r=1$ or $r=\infty$. In this case, one recovers  from the formula $G(z)=1/(1-z)^N$ the standard elementary result that free commutative monoids have a polynomial growth.


\medskip
Returning to the case of a general trace monoid~$\M$, let $f_z$ be the uniform valuation on $\M$ defined by $f_z(a)=z$ for all $a\in\Sigma$, and let $h_z$ be the Möbius transform of~$f_z$. Then, comparing~(\ref{eq:18}) with~(\ref{eq:1}), one sees that $\mu(z)=h_z(\vd)$. Therefore, for the uniform valuation $f_z$ to be probabilistic, it is necessary that $z$ is a root of the Möbius polynomial~$\mu(z)$. Actually, the following result holds if $\Sigma\neq\emptyset$, making more precise the statement introduced before: {\itshape the only value for the uniform valuation $f_z$ to be probabilistic is $z=r$, the root of smallest modulus of the Möbius polynomial~$\mu(z)$}. In other words, among the roots of the Möbius polynomial, only for the root $r$ of smallest modulus does the required condition $(\forall c\in\Cstar\quad h_r(c)\geq0)$ from~\eqref{eq:7} hold. The associated probability measure on $\BM$ is the \emph{uniform measure}. Intuitively, the uniform measure gives equal weight to all infinite traces.\vspace*{-2mm}

\paragraph*{Null nodes for trace monoids.}
\label{sec:null-nodes-trace}

Consider the uniform measure $\nu$  on the boundary at infinity $\BM$ of a trace monoid~$\M$, the associated valuation $f$ and its Möbius transform~$h$. Say that a node $c$ in the digraph of cliques $(\Cstar,\to)$ is a \emph{null node} if $h(c)=0$.

Consider also the Markov chain of cliques $(C_i)_{i\geq1}$ associated with a infinite trace~$\omega$, drawn at random according to~$\nu$. Recall that the initial distribution of the chain, hence the probability distribution of the clique~$C_1$, is given by the probability vector~$(h(c))_{c\in\Cstar}$. In particular, if $c$ is a null node, it has zero probability of being visited by the first clique of~$\omega$. Actually, it then has zero probability of being ever visited by any of the cliques~$C_i$, $i\geq1$ (this follows from the form of the transition matrix of the chain). It is thus interesting to determine the null nodes, whenever they exist.

\begin{example}
Consider the \emph{non irreducible} trace monoid $\M=\langle a,b,c\ \big|\ ac=ca,\ bc=cb\rangle$, with cliques $\C=\{\vd,a,b,c,a\cdot c,b\cdot c\}$. Let $f(x)=r^{|x|}$ be a uniform valuation. The Möbius transform of $f$ is given by: \vspace*{-2mm}
\begin{align*}
  h(\vd)&=(1-2r)(1-r)&h(a)&=r(1-r)&h(b)&=r(1-r)\\
  h(c)&=r(1-2r)&h(a\cdot c)&=r^2&h(b\cdot c)&=r^2
\end{align*}

Hence the value of $r$ corresponding to the uniform measure is $r=\frac12$, inducing the null node $c$ since then $h(c)=0$.
\end{example}

The previous example involved a non irreducible trace monoid. By contrast, the following result holds: {\itshape If $\M$ is a non trivial and irreducible trace monoid, there is no null node in $(\Cstar,\to)$}. As a consequence, one can prove that, if $\M$ is irreducible and non trivial, an infinite trace drawn at random has probability $1$ to visit infinitely often any clique.

\begin{remark}
  Null nodes, which only occur for non irreducible trace monoids, are closely related to the combinatorics of the trace monoid. A null node is a clique that has exactly zero probability to appear as the first clique of an infinite trace. When considering large traces rather than infinite traces, it can be reformulated as follows: if $c$ is a null node, then among traces of size~$n$, very few have their first clique equals to~$c$, compared to others; and the larger~$n$, the smaller this ratio. At the limit, the ratio equals zero.
\end{remark}

\subsection{Concurrent systems and their combinatorics}
\label{sec:conc-syst-their}

A natural way to generalize the notion of trace monoid is to add a notion of state. This yields the notion of concurrent system, introduced below. The background material presented in this section is borrowed from~\cite{abbes19:_markov,abbes20}. \vspace*{-2mm}

\paragraph*{Concurrent systems and executions.}
\label{sec:basics}

A \emph{concurrent system} is a triple $\X=(\M,X,\bot)$ as follows: $\M$~is a trace monoid; $X$~is a finite set of \emph{states};  $\bot$~is a special symbol not in~$X$; furthermore, we put $X'=X\cup\{\bot\}$ and there is mapping $X'\times\M\to X'$, denoted by $(\alpha,x)\mapsto\alpha\cdot x$ and satisfying the three following properties:
\begin{gather}
  \label{eq:14}
  \forall \alpha\in X'\quad \alpha\cdot\vd=\alpha\\
  \label{eq:15}
  \forall \alpha\in X'\quad \forall (x,y)\in \M\times \M\quad \alpha\cdot(x\cdot y)=(\alpha\cdot x)\cdot y\\
  \label{eq:17}\forall x\in\M\quad\bot\cdot x=\bot
\end{gather}
The properties~(\ref{eq:14}) and (\ref{eq:15}) are the axioms of a \emph{right monoid action} of $\M$ on $X\cup\{\bot\}$. As witnessed by~(\ref{eq:17}), the symbol $\bot$ represents a sink state, intended to materialize a forbidden state. So we are interested, for every $\alpha,\beta\in X$, in the following subsets of~$\M$:
\begin{align*}
  \M_{\alpha,\beta}&=\{x\in\M\tqs \alpha\cdot x=\beta\},&\M_\alpha&=\{x\in\M\tqs \alpha\cdot x\neq\bot\}.
\end{align*}
Traces of $\M_\alpha$ are called \emph{executions starting from~$\alpha$}, or \emph{executions} for short. Note that $\M_\alpha$ is always downward closed in $(\M,\leq)$, thanks to~(\ref{eq:17}).

\medskip
The concurrent system $\X$ is \emph{trivial} if $\alpha\cdot a=\bot$ for all $\alpha\in X$ and for all $a\in\Sigma$. It is \emph{non trivial} otherwise.

Borrowing the terminology from the theory of group actions, we say that the concurrent system is \emph{homogeneous} if $\M_{\alpha,\beta}\neq\emptyset$ for all pairs $(\alpha,\beta)\in X\times X$ (the state space has only one connected component). Borrowing the terminology from Petri nets theory, we say that the system is \emph{alive} if for every state $\alpha\in X$ and for every letter $a\in\Sigma$, there exists an execution $x\in\M_\alpha$ such that $a\in x$.

Finally we say that the concurrent system $\X=(\M,X,\bot)$ is \emph{irreducible} if it is non trivial, homogeneous and alive, and if $\M$ is an irreducible trace monoid. The interest of this notion of irreducibility lies in the spectral property that is stated later.

\paragraph*{Representation of concurrent systems and of executions.}
\label{sec:repr-conc-syst}

To represent a concurrent system $\X=(\M,X,\bot)$, we first use the Coxeter graph of~$\M$ already introduced for trace monoids. We also depict the \emph{labeled multigraph of states}, or \emph{graph of states} for short, whose vertices are the elements of~$X$, and with an edge from $\alpha$ to~$\beta$ labeled by the letter $a\in\Sigma$ if $\alpha\cdot a=\beta$.

\begin{remark}
  Any multigraph $V$ with edges labeled by elements from a set $\Sigma$ represents a ``next state function'', and thus extends to an action of the free monoid $(V\cup\{\bot\})\times\Sigma^*\to(V\cup\{\bot\})$, provided that for any node $v\in V$, there are no two edges starting from $v$ and labeled with the same letter. It requires an additional verification to check that it also represents an action of a trace monoid $\M=\M(\Sigma,I)$ on~$V\cup\{\bot\}$; namely, one has to check that $\alpha\cdot(ab)=\alpha\cdot(ba)$ for any pair $(a,b)\in I$ and for every vertex $\alpha\in V$. In other words, each commuting pair in $\Sigma$ must correspond to a diamond shape for every vertex in the graph of states.
\end{remark}

\begin{example}
  \label{exm:3}
Let four slots numbered $0$, $1$, $2$ and~$3$ be put in circle. Each slot stores the value $0$ or~$1$, with $0$ as initial value. If two neighboring slots store the same value, then a piece can be played: piece $a$ for slots $0$ and~$1$, piece $b$ for slots $1$ and~$2$, piece $c$ for slots $2$ and~$3$, piece $d$ for slots $3$ and~$0$. In case the piece is played, the common value of the two neighboring slots is changed to its opposite. This small game corresponds to the concurrent system $\X=(\M,X,\bot)$ with $\M=\langle a,b,c,d\ \big|\ ac=ca,\ bd=db\rangle$ (see the Coxeter graph depicted on Fig.~\ref{fig:jqwdkjqwd}), and with the set of all possible reachable configurations for the four slots as set of states. Hence, $X=\{0000,\ 1100,\ 0110,\ 0011,\ 1001,\ 1111\}$. Note that each action is \emph{reversible}: $\alpha\cdot x\cdot x=\alpha$ for all $\alpha\in X$ and all $x\in\Sigma$. The graph of states is depicted on Fig.~\ref{fig:pkazpoaaaz}.

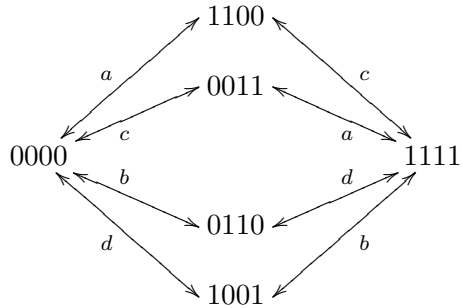
\begin{figure}[!h]
  $$
  \xymatrix@R=1.2em@C=4.2em{
    &1100&\\
    &0011\\
    0000\ar@{<->}[ruu]!L^{a}\ar@{<->}[ru]!L_{c}\ar@{<->}[rd]!L^{b}\ar@{<->}[rdd]!L_{d}
    &&1111\ar@{<->}[luu]!R_{c}\ar@{<->}[lu]!R^{a}\ar@{<->}[ld]!R_{d}\ar@{<->}[ldd]!R^{b}
\\
    &0110\\
    &1001
    }
  $$
  \caption{\small Graph of states for a concurrent system with $6$ states and associated with the trace monoid $\langle a,b,c,d\ \big|\ ac=ca,\ bd=bd\rangle$ (see the Coxeter graph of this monoid on Fig.~\ref{fig:jqwdkjqwd})
}
  \label{fig:pkazpoaaaz}
\end{figure}
\end{example}

\begin{example}
\label{exm:4}
Consider the $1$-safe Petri net depicted in Fig.~\ref{fig:petrinetrs},~$(a)$. The set of states is the set of reachable markings. The underlying trace monoid is generated by the transitions of the net, with commutative transitions $t$ and $t'$ whenever $^{\bullet}t^\bullet\cap{}^\bullet{t'}^\bullet=\emptyset$, thus $\M=\langle a,b,c,d\;|\; ad=da,\ db=db\rangle$.

\medskip
The corresponding Coxeter graph is depicted on Fig.~\ref{fig:petrinetrs},~$(b)$, and the graph of states is depicted on Fig.~\ref{fig:petrinetrs},~$(c)$. If not familiar with the model of Petri nets, the reader can ignore the picture of Fig.~\ref{fig:petrinetrs},~$(a)$, and simply focus on the Coxeter graph and the graph of states.

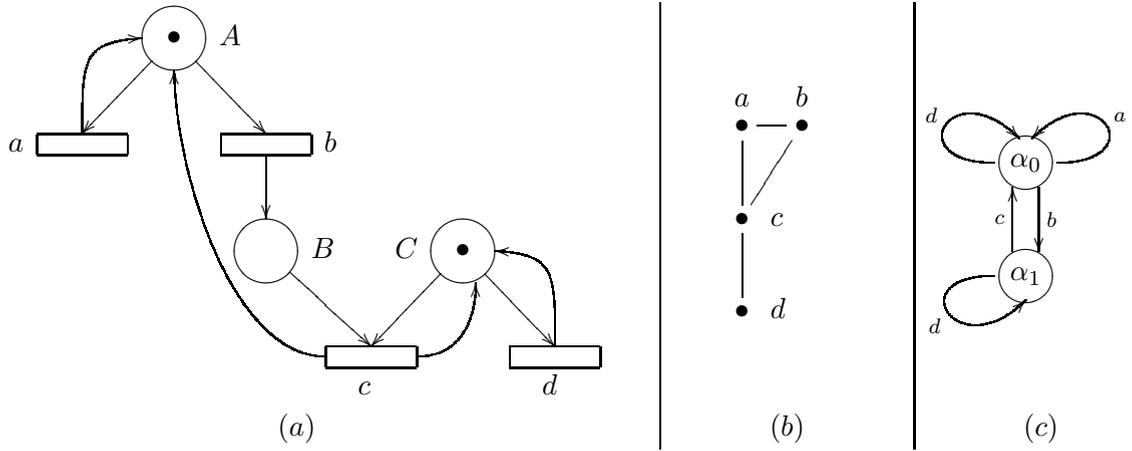
\begin{figure}[!ht]
  \centering
 \hspace*{-5mm} \begin{tabular}{c|c|c}
\begin{minipage}[c]{.6\textwidth}
  $$
  \xymatrix@C=.5em@M=.2em{
    &\place\markp\labelr{A}\POS[]\ar[dl]!U\ar[dr]!U\\
    \transh\ar@(u,l)[ur]\labell{a}&&\transh\ar[d]\labelr{b}\\
&    &\place\ar[dr]!U\labelr{B}&&\place\ar[dl]!U\ar[dr]!U!U\POS[]\markp\labell{C}\\
 &   &&\transh\POS!R\ar@(r,u)[ur]!D\POS[]\POS!L\ar@(l,d)[lluuu]!D\labeld{\strut c}&&\transh\ar@(u,r)[ul]\labeld{d}
    }
    $$
  \end{minipage}
  \hspace*{-5mm}   & \hspace*{-5mm}
       \begin{minipage}[c]{.2\textwidth}
         $$
         \xymatrix@C=1em{
           \bullet\ar@{-}[r]\ar@{-}[d]\labelu{a}&\bullet\ar@{-}[dl]\labelu{b}\\
           \bullet\ar@{-}[d]\labelr{c}\\
           \bullet\labelr{d}
}
$$
\end{minipage}
 &  \hspace*{-5mm}    \begin{minipage}[c]{.2\textwidth}
    $$
\xymatrix{*++[o][F-]{\alpha_0}%
\POS!L\ar@(l,ul)^{d}[]!U!L(.2)%
\POS!R\ar@(r,ur)_{a}[]!U!R(.2)%
\POS[]\ar@<1ex>^{b}[d]\ar@<1ex>^{c}[d];[]\\
  *++[o][F-]{\alpha_1}
\POS!L\ar@(l,dl)_{d}[]!D
    }
    $$
  \end{minipage}
    \\
    $(a)$&$(b)$&$(c)$\\
  \end{tabular}
  \caption{\small $(a)$---A safe Petri net with its initial marking $\alpha_0=\{A,C\}$ depicted. The two reachable markings are  $\alpha_0$ and $\alpha_1=\{B,C\}$.\quad$(b)$---The Coxeter graph of the associated trace monoid.\quad $(c)$---Graph of markings of the net}
  \label{fig:petrinetrs}
\end{figure}
\end{example}

\vspace*{-5mm}
\paragraph*{Notations, generalized and infinite executions.}
\label{sec:notat-runn-exampl}

Given a concurrent system $\X=(\M,X,\bot)$, we introduce the following notations, for $\alpha,\beta\in X$:
\begin{align*} \Sigma_\alpha&=\Sigma\cap\M_\alpha&\C_\alpha&=\C\cap\M_\alpha&
\Cstar_\alpha&=\Cstar\cap\M_\alpha&                                                                              \C_{\alpha,\beta}&=\C\cap\M_{\alpha,\beta}
\end{align*}

A \emph{generalized execution from~$\alpha$} is a generalized trace $\xi\in\Mbar$ such that:
\begin{gather*}
  \forall x\in \M\quad x\leq\xi\implies x\in\M_\alpha.
\end{gather*}
Their set is denoted~$\Mbar_\alpha$, and we also put $\BM_\alpha=\Mbar_\alpha\cap\BM$. Elements of $\BM_\alpha$ represent infinite executions of the system starting from the initial state~$\alpha$. Note that, even for a non trivial concurrent system, some sets or even all sets $\BM_\alpha$ might be empty, which contrasts with the situation for trace monoids.

Every trace monoid $\M$ can be seen as a concurrent system with a single state by considering $\X=(\M,X,\bot)$ with $X=\{*\}$, and $*\cdot x=*$ and $\bot\cdot x=\bot$ for every $x\in\M$. It is then irreducible as a concurrent system if and only if $\M$ is non trivial and irreducible as a trace monoid.\hspace*{-2mm}

\paragraph*{Digraph of states-and-cliques.}
\label{sec:digr-stat-cliq}

Infinite executions of a concurrent system $\X=(\M,X,\bot)$ are, in particular, infinite traces of~$\M$. As seen in Sect.~\ref{sec:generalised-traces}, infinite traces correspond to paths in the digraph of cliques $(\Cstar,\to)$. Not all infinite paths of $(\Cstar,\to)$ however correspond, in general, to infinite executions of~$\X$. In order to take into account the constraints induced by the monoid action, we introduce the \emph{digraph of states-and-cliques} $(\Dstar,\to)$, the vertices of which are pairs $(\alpha,c)$ with $\alpha$ ranging over $X$ and $c$ ranging over~$\Cstar_\alpha$. There is an arrow $(\alpha,c)\to(\beta,d)$ in $\Dstar$ if $\beta=\alpha\cdot c$ and if $(c,d)$ is a normal pair of cliques.

\medskip To every infinite execution $\omega=(c_i)_{i\geq1}$ from~$\alpha$, is associated the infinite path $(\alpha_{i-1},c_i)_{i\geq1}$ in~$\Dstar$, where $\alpha_i$ is defined by $\alpha_0=\alpha$ and $\alpha_i=\alpha\cdot(c_1\cdot\ldots\cdot c_i)$ for $i\geq1$. We put $Y_i(\omega)=(\alpha_{i-1},c_i)$ for every integer $i\geq1$. This is the $i^\text{th}$~``state-and-clique'' of the system, when the infinite execution $\omega$ is scanned according to its normal form. Conversely, every infinite path in $\Dstar$ corresponds to a unique infinite execution.

\begin{example}
For the Petri net of Fig.~\ref{fig:petrinetrs}, the digraph of states-and-cliques $\Dstar$ is depicted on Fig.~\ref{fig:pokqwdoijjq}. Here is how to obtain it ``by hand''. For every state~$\alpha$, compute first the sub-alphabet $\Sigma_\alpha=\{a\in\Sigma\tqs\alpha\cdot a\neq\bot\}$. Here, $\Sigma_{\alpha_0}=\{a,b,d\}$ and $\Sigma_{\alpha_1}=\{c,d\}$.
 Comparing with the Coxeter graph of the monoid, keep note of all the cliques that can be formed using only letters from~$\Sigma_\alpha$, and retain from these only the non empty cliques $\gamma$ such that $\alpha\cdot\gamma\neq\bot$. Their set is~$\Cstar_\alpha$. Here, $\Cstar_{\alpha_0}=\{a,b,d,a\cdot d,b\cdot d\}$ and $\Cstar_{\alpha_1}=\{c,d\}$. Then for every state $\alpha$ and for every  $\gamma\in\Cstar_\alpha$, compute $\beta=\alpha\cdot\gamma$ on the one hand, and all $\delta\in\Cstar_\beta$ such that $\gamma\to\delta$ holds on the other hand. The pairs $(\beta,\delta)$ thus obtained are the successors of $(\alpha,\gamma)$ in~$\Dstar$.

 \begin{figure}[!h]
     $$
     \xymatrix@C=1em{
     & &&*+[F]{(\alpha_0,d)}\POS!R!D(.5)\ar@(dr,ur)[]!R!U(0.5)&&
\\     &*+[F]{(\alpha_0,ad)}\POS!U\ar[urr]!L\POS[]\ar[rrr]\POS!L\ar[dl]!U\POS[]\ar[d]\POS[]\POS!L!D(0.5)\ar@(dl,dr)[]!D!L(0.5)
      &&&*+[F]{(\alpha_0,bd)}\POS!D\ar[dl]!R\ar[ddl]!R&\\
      *+[F]{(\alpha_0,a)}\ar[r]\POS!L!D(0.5)\ar@(dl,dr)[]!D!L(0.5)
      &*+[F]{(\alpha_0,b)}
      &&*+[F]{(\alpha_1,c)}\POS!U!R(0.25)\ar[ur]!D!L\POS[]\ar@<1ex>[ll]\ar@<1ex>[ll];[]\POS!L!U\ar[ull]!R!D
      \POS[]\ar'[u][uu]\POS!D!L(0.5)\ar@(dl,dr)[lll]!D!R(0.5)
\\
&&&*+[F]{(\alpha_1,d)}\POS[]\POS!L!U(0.5)\ar@(ul,dl)[]!L!D(0.5)\POS[]\ar[u]&
       }
       $$
   \caption{\small Digraph of states-and-cliques for the concurrent system corresponding to the Petri net depicted on Fig.~\ref{fig:petrinetrs}}
   \label{fig:pokqwdoijjq}
 \end{figure}
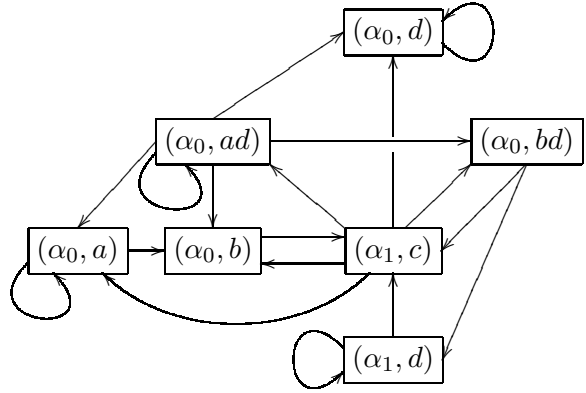
  \end{example}

\vspace*{-4mm}
\paragraph*{Valuations and probabilistic valuations. Markov concurrent measures.}
\label{sec:valu-prob-valu}

In this section we extend to concurrent systems the notions of valuations and of probabilistic valuations introduced earlier for trace monoids.

\medskip
A \emph{valuation} on a concurrent system $\X=(\M,X,\bot)$ is a family $f=(f_\alpha)_{\alpha\in X}$ of mappings $f_\alpha:\M\to\bbR_{\geq0}$ satisfying the three following properties:
\begin{gather}
  \label{eq:4}
  \forall \alpha\in X\quad\forall x\in\M\quad \alpha\cdot x=\bot\implies f_\alpha(x)=0\\
  \label{eq:5}
  \forall\alpha\in X\quad\forall x\in\M_\alpha\quad\forall y\in\M_{\alpha\cdot x}\quad f_{\alpha}(x\cdot y)=f_\alpha(x)f_{\alpha\cdot x}(y)\\
  \label{eq:6}
  \forall\alpha\in X\quad f_\alpha(\vd)=1
\end{gather}

Let $f=(f_\alpha)_{\alpha\in X}$ be a valuation and for each $\alpha\in X$, let $h_\alpha:\C\to\bbR$ be the Möbius transform of the restriction $f_\alpha\rest\C:\C\to\bbR_{\geq0}$. Note that $h_\alpha(x)=0$ if $x\notin\M_\alpha$. We say that $f$ is a \emph{probabilistic valuation} if, for every state $\alpha\in X$:
\begin{gather}
  \label{eq:8}
h_\alpha(\vd)=0\quad\text{and}\quad\bigl(\forall c\in\Cstar_\alpha\quad h_\alpha(c)\geq0\bigr)
\end{gather}

In this case, there exists a unique family $\nu=(\nu_\alpha)_{\alpha\in X}$, where $\nu_\alpha$ is a probability measure on~$\BM_\alpha$, such that $\nu_\alpha(\up x)=f_\alpha(x)$ for all $\alpha\in X$ and for all $x\in\M_\alpha$. Of course the existence of a probabilistic valuation implies in particular that $\BM_\alpha\neq\emptyset$ for all $\alpha\in X$. Such a family $(\nu_\alpha)_{\alpha\in X}$ is called a \emph{Markov concurrent measure}, because of the chain rule~(\ref{eq:5}) which is reminiscent of the classical property of Markov chains. For the Markov measure~$(\nu_\alpha)_{\alpha\in X}$, the chain rule reads as follows:
\begin{gather}
  \label{eq:21}
  \forall\alpha\in X\quad\forall x\in\M_\alpha\quad\forall y\in\M_{\alpha\cdot x}\quad
  \nu_\alpha\bigl(\up(x\cdot y)\bigr)=\nu_\alpha(\up x)\nu_{\alpha\cdot x}(\up y)
\end{gather}
The chain rule~\eqref{eq:21} extends to concurrent systems the memoryless property~\eqref{eq:20} for trace monoids.\hspace*{-2mm}

\paragraph*{ Markov chain of states-and-cliques.}
\label{sec:markov-chain-states}

If $\nu=(\nu_\alpha)_{\alpha\in X}$ is associated as above with a probabilistic valuation $f=(f_\alpha)_{\alpha\in X}$, then for each state $\alpha\in X$, and with respect to the probability measure~$\nu_\alpha$\,, the family of mappings $Y_i:\BM_\alpha\to\Dstar$ defined earlier is a homogeneous Markov chain, called the \emph{Markov chain of states-and-cliques}. Its initial distribution is given by $\un_{\alpha}\otimes h_\alpha$, meaning:
\begin{gather}
  \label{eq:9}
\forall\alpha\in X\quad\forall c\in\Cstar_\alpha\quad  \nu_\alpha(C_1=c)=h_\alpha(c).
\end{gather}
In other words, even though the ``user'' may choose the initial state $\alpha\in X$ of the system, this ``user'' does not have control on the initial \emph{clique} of a random infinite execution of the system starting from~$\alpha$. Indeed, the probability distribution of the first clique is precisely
given by~\eqref{eq:9}.\hspace*{-2mm}

\paragraph*{Example and null nodes.}
\label{sec:example-null-nodes}

Let us determine the probabilistic valuations for the Petri net example of Fig.~\ref{fig:petrinetrs}, which is an irreducible concurrent system. Any probabilistic valuation $f=(f_\alpha)_{\alpha\in X}$ is entirely determined by the \emph{finite} family of values $f_\alpha(u)$ for $(\alpha,u)$ ranging over $\{\alpha_0,\alpha_1\}\times \Sigma$, since then the other values $f_\alpha (x)$ are obtained by the chain rule $f_\alpha(x\cdot y)=f_\alpha(x)f_{\alpha\cdot x}(y)$.

\medskip
Since $f_{\alpha_0}(c)=f_{\alpha_1}(a)=f_{\alpha_1}(b)=0$, the remaining parameters for $f$ are $p=f_{\alpha_0}(a)$, $q=f_{\alpha_0}(b)$, $s=f_{\alpha_0}(d)$, $t=f_{\alpha_1}(c)$, $u=f_{\alpha_1}(d)$. The parameters are not independent; for coherence with the commutativity relations induced by the trace monoid, one must have $f_{\alpha_0}(a)f_{\alpha_0\cdot a}(d)=f_{\alpha_0}(d)f_{\alpha_0\cdot d}(a)$,  since $a\cdot d=d\cdot a$, and $f_{\alpha_0}(b)f_{\alpha_0\cdot b}(d)=f_{\alpha_0}(d)f_{\alpha_0\cdot d}(b)$ since $b\cdot d=d\cdot b$; yielding simply
$qu=qs$ here.

\begin{table}[!h]
  \centering\small
   \caption{\small Möbius transform of a generic valuation for the Petri net example depicted in Fig.~\ref{fig:petrinetrs}, with parameters
  $p=f_{\alpha_0} (a)$, $q=f_{\alpha_0} (b)$, $s=f_{\alpha_0} (d)=f_{\alpha_1}(d)$  and $t=f_{\alpha_1}(c)$}
  \label{tab:moniuadsa}
  \scalebox{0.93}{
  \begin{tabular}{c|c|c|c|c|c|c|c}
     {\rm state} $\alpha$  & $h_\alpha(\varepsilon)$ & $h_\alpha(a)$ & $h_\alpha(b)$ & $h_\alpha(c)$ & $h_\alpha(d)$ & $h_\alpha(a\cdot d)$ & $h_\alpha(b\cdot d)$\\
      \hline
      $\alpha_0$ & $1-p-q-s+ps+qs$
                &$p-ps$ & $q-qs$ & $0$ & $s-ps-qs$ & $ps$ & $qs$\\[1.8em]
      $\alpha_1$ & $1-t-s$
           &$0$ & $0$ & $t$ & $s$ &$0$ &$0$\\
    \end{tabular}}
 \end{table}

To simplify the exposition, we eliminate the border cases and restrict our attention to the case where all parameters stay within the open interval $(0,1)$. Then we obtain $u=s$ from the previous equality $qu=qs$.

The Möbius transform of $f_{\alpha_0}$ evaluated for instance at $b$ is $h_{\alpha_0}(b)=f_{\alpha_0}(b)-f_{\alpha_0}(bd)=f_{\alpha_0}(b)-f_{\alpha_0}(b)f_{\alpha_1}(d)=q-qs$. Other computations are done similarly, and we gather the results in Table~\ref{tab:moniuadsa}.
According to~(\ref{eq:8}), the normalization constraints on the parameters for the valuation $f$ to be probabilistic are thus:
\begin{align}
  \label{eq:10}
  h_{\alpha_0}(\vd)=0&:\  1-p-q-s+ps+qs=0\\
\label{eq:22}  h_{\alpha_1}(\vd)=0&:\
                1-t-s=0,
\end{align}
plus all inequalities $h_{\alpha_0}(a)\geq0$, etc, which in this case do not bring any additional constraints.

\medskip
Here, the equation in~\eqref{eq:10} rewrites as $(1-p-q)(1-s)=0$. It follows that $1-p-q=0$ and, in view of Table~\ref{tab:moniuadsa}, it implies $h_{\alpha_0}(d)=0$. This illustrates the notion of null node for concurrent systems.

\medskip
We define a node $(\alpha,c)$ of $\Dstar$ to be a \emph{null node}, with respect to some probabilistic valuation $f=(f_\alpha)_{\alpha\in X}$, if $h_\alpha(c)=0$, where $h_\alpha$ is the Möbius transform of~$f_\alpha$. As for trace monoids, null nodes are never reached by the Markov chain of states-and-cliques. But, contrasting with the case of trace monoids, null nodes may exist even for irreducible concurrent systems, as the previous example shows. \hspace*{-2mm}

\paragraph*{Characteristic root of a concurrent system.}
\label{sec:characteristic-root}\label{sec:irreducibility}

Consider the \emph{Möbius matrix} \\
$\mu(z)=(\mu_{\alpha,\beta}(z))_{(\alpha,\beta)\in X\times X}$, the polynomial~$\theta(z)$ with integer coefficients, and the \emph{growth matrix} $G(z)=(G_{\alpha,\beta}(z))_{(\alpha,\beta)\in X\times X}$ defined by:
\begin{align*}
  \mu_{\alpha,\beta}(z)&=\sum_{c\in\C_{\alpha,\beta}}(-1)^{|c|}z^{|c|}&\theta(z)&=\det\mu(z)&G_{\alpha,\beta}(z)=\sum_{x\in\M_{\alpha,\beta}}z^{|x|}
\end{align*}
Then $G(z)$ is a matrix of rational series, and it is the inverse of the Möbius matrix: $G(z)\mu(z)=\text{Id}$. One of the roots of smallest modulus of the polynomial $\theta(z)$ is real and lies in $(0,1]\cup\{\infty\}$, with the convention that it is $\infty$ if $\theta(z)$ is a non zero constant. By definition, this non negative real or $\infty$ is the \emph{characteristic root} of the concurrent system~$\X$. The characteristic root $r$ coincides with the minimum of all convergence radii of the generating series $G_{\alpha,\beta}(z)$, for $(\alpha,\beta)$ ranging over~$X\times X$. Intuitively, the smaller is~$r$, the ``bigger'' is~$\X$, in the sense of a large set of executions.

\begin{example}
  \label{exm:6}
For the Petri net example from Fig.~\ref{fig:petrinetrs}, the Möbius matrix is given by: \begin{gather*}
  \mu(z)=
  \begin{array}{c}
    \alpha_0\\\alpha_1
  \end{array}
  \begin{pmatrix}
    1-2z+z^2&-z+z^2\\
    -z&1-z
  \end{pmatrix}
\end{gather*}
with determinant $\theta(z)=(1-z)^2(1-2z)$. The characteristic root is $r=1/2$.
\end{example}

\begin{example}
  \label{exm:7}
For the concurrent system from Example~\ref{exm:3}, whose graph of states is depicted in Fig.~\ref{fig:pkazpoaaaz}, the Möbius matrix is the following.

\eject
\hbox{}
\vspace*{-11mm}

\begin{gather*}
  M(z)=
  \begin{array}{c}
    0000\\
    1100\\
    0011\\
    0110\\
    1001\\
    1111
  \end{array}
  \begin{pmatrix}
    1&-z&-z&-z&-z&2z^2\\
    -z&1&z^2&0&0&-z\\
    -z&z^2&1&0&0&-z\\
    -z&0&0&1&z^2&-z\\
    -z&0&0&z^2&1&-z\\
    2z^2&-z&-z&-z&-z&1\\ \vspace*{-6mm}
  \end{pmatrix}
\end{gather*}
\end{example}

\paragraph*{The spectral property for irreducible concurrent systems.}
\label{sec:spectr-prop-irred}

Consider a concurrent system $\X=(\M,X,\bot)$, with $\M=\M(\Sigma,I)$. If $\Sigma'$ is any subset of~$\Sigma$, and if $\M'=\langle\Sigma'\rangle$ is the submonoid of $\M$ generated by~$\Sigma'$, which is indeed a trace monoid, then the restriction of the action $(X\cup\{\bot\})\times\M'\to X\cup\{\bot\}$ defines a new concurrent system $\X'=(\M',X,\bot)$, said to be \emph{induced by restriction}. In particular, let $\X^a$ denote the concurrent system induced by restriction with $\Sigma'=\Sigma\setminus\{a\}$, and let $r^a$ be the characteristic root of~$\X^a$.

A key property, that we shall use later, is the \emph{spectral property}~\cite{abbes20} which states:\quad {\itshape if $\X$ is irreducible, then $r^a>r$ for every  $a\in\Sigma$}. The point here is the strict inequality, which derives from the irreducibility of~$\X$; indeed, the inequality $r^a\geq r$ is always valid without restriction on~$\X$.\hspace*{-2mm}

\paragraph*{Uniform measure for concurrent systems.}
\label{sec:unif-meas-conc}

We have seen in Sect.~\ref{sec:growth-series-mobius} the existence of a particular probability measure on the boundary at infinity of every trace monoid~$\M(\Sigma,I)$, namely the uniform measure, associated with the unique uniform and probabilistic valuation. The uniform valuation was defined by $f(x)=r^{|x|}$, where $r$ is the root of smallest modulus of the Möbius polynomial of $(\Sigma,I)$.

For concurrent systems, an analogous notion exists in most cases, and in particular if the system is irreducible. Say that a mapping $\Gamma:X\times X\to\bbR_{>0}$ is a \emph{cocycle} whenever it satisfies $\Gamma(\alpha,\gamma)=\Gamma(\alpha,\beta)\Gamma(\beta,\gamma)$ for all triples $(\alpha,\beta,\gamma)\in X^ 3$. The following result holds: {\itshape If $\X=(\M,X,\bot)$ is an irreducible concurrent system, there exists a unique probabilistic valuation of the form $f_\alpha(x)=t^{|x|}\Gamma(\alpha,\alpha\cdot x)$, for $x\in\M_\alpha$, where $t$ is a positive real and\/ $\Gamma:X\times X\to\bbR_{>0}$ is a cocycle. The real $t$ is the characteristic root of~$\X$, and the cocycle $\Gamma$ is called the \emph{Parry cocycle}}. The associated concurrent Markov measure $(\nu_\alpha)_{\alpha\in X}$ is the \emph{uniform measure} of~$\X$.

The Parry cocycle has a combinatorial interpretation on which additional details are given in~\cite{abbes19:_markov}. It can be determined as follows. Let $\mu(r)$ be the Möbius matrix of the system evaluated at~$r$, characteristic root of~$\X$. Then, by definition of~$r$, $\mu(r)$~has a non trivial kernel. It actually holds that $\dim\ker\mu(r)=1$. Hence, let $(v_\alpha)_{\alpha\in X}$ be a non zero element of $\ker\mu(r)$. Then $\Gamma(\cdot,\cdot)$ is given by $\Gamma(\alpha,\beta)=v_\beta/v_\alpha$, and it holds indeed that $(v_\alpha)_{\alpha\in X}$ has all its coordinates non zero.

\begin{example}
Let us determine the uniform measure for the Petri net example of Fig.~\ref{fig:petrinetrs}. According to the computation already done in Example~\ref{exm:6}, the Möbius matrix evaluated at the characteristic root $r=\frac12$ is
$\mu(\frac12)=\left(
  \begin{smallmatrix}
    \frac14 &-\frac14\\-\frac12&\frac12
  \end{smallmatrix}\right)
$. A non zero vector of its kernel is~$\left(\begin{smallmatrix}1\\1  \end{smallmatrix}\right)$,
hence the Parry cocycle is constant equal to~$1$. The uniform probabilistic valuation is thus $f_\alpha(x)=\bigl(\frac12\bigr)^{|x|}$. For a double check, we can verify that the two conditions stated earlier in~\eqref{eq:10} and~\eqref{eq:22} for this example are satisfied by this valuation (the corresponding values are $p=q=s=t=\frac12$). The probability law of the first clique when starting from a state $\alpha$ is given by the Möbius transform~$h_\alpha$ of~$f_\alpha$.  So for instance the probability law of the first clique when starting from $\alpha_0$ is given by:
\begin{align*}
  \nu_{\alpha_0}(C_1=a)&=\frac14
  &\nu_{\alpha_0}(C_1=b)&=\frac14\\
  \nu_{\alpha_0}(C_1=a\cdot d)&=\frac14
  &\nu_{\alpha_0}(C_1=b\cdot d)&=\frac14
  &\nu_{\alpha_0}(C_1=d)&=0
\end{align*}
As already observed, the node $(\alpha_0,d)$ is a null node. The first clique of a random infinite execution starting from $\alpha_0$ has probability $0$ to be~$d$;  although it was not impossible \emph{a priori}, as seen on Fig.~\ref{fig:pokqwdoijjq}.
\end{example}

\begin{example}
For the concurrent system from Example~\ref{exm:3}, whose graph of states is depicted in Fig.~\ref{fig:pkazpoaaaz}, the above technique seems  heavy to derive the probabilistic parameters of the uniform measure. Instead, we rely on the special form $f_\alpha(x)=r^{|x|}\Gamma(\alpha,\alpha\cdot x)$ for the uniform measure, with $r$ the characteristic root of the system, unknown for now, and with $\Gamma$ the Parry cocycle, also unknown.

Put $\lambda=\Gamma(0000,1100)$. For symmetry reasons, it is clear that $\lambda=\Gamma(0000,X)$ for every state $X$ in the middle column of the graph of states depicted on Fig.~\ref{fig:pkazpoaaaz}. Also for symmetry reasons, one also has $\lambda=\Gamma(1111,X)$ for every state in the middle column. All other values of the Parry cocycle can be determined using the cocycle identity, since in particular $\Gamma(\alpha,\alpha)=1$ for every state~$\alpha$. For instance $\Gamma(1100,0000)=\Gamma(1100,1111)=\lambda^{-1}$ and $\Gamma(0000,1111)=1$.

\medskip
Taking into account the symmetry of the system, the Möbius identities $h_\alpha(\vd)=0$ at states $\alpha=0000$ and $\alpha=1100$ write as follows:
\begin{align*}
  1-4 f_{0000}(a)+2f_{0000}(a\cdot c)&=0&
                                    1-2f_{1100}(a)+f_{1100}(a\cdot c)&=0
\end{align*}
Using the form of the valuation $f_\alpha(x)=r^{|x|}\Gamma(\alpha,\alpha\cdot x)$ and using the unknown parameter~$\lambda$, we obtain:
\begin{align*}
  1-4r\lambda+2r^2&=0&1-2\frac r\lambda+r^2&=0
\end{align*}

Putting $u=r\lambda$ and $v=\frac r\lambda$, and after some computations, we obtain the following equations in $u$ and~$v$: $u=v-\frac14$ and $4v^2-9v+4=0$, whence $v=\frac{9\pm\sqrt {17}}8$ and $u=\frac{7\pm\sqrt{17}}8$. The value $v=\frac{9+\sqrt{17}}8$ is seen to lead to a value $r>1$, which is impossible, hence:
\begin{gather}
\left\{  \begin{aligned}
    u&=\frac{7-\sqrt{17}}8\\
    v&=\frac{9-\sqrt{17}}8
  \end{aligned}\right.\qquad\text{yielding\quad}\left\{
  \begin{aligned}
    r&=\frac12\sqrt{5-\sqrt{17}}\approx0.468\\
    \lambda&=\frac{\sqrt{23-\sqrt{17}}}{4\sqrt2}\approx0.768
  \end{aligned}\right.
\end{gather}

Putting $\alpha_0=0000$, and to compute say $h_{\alpha_0}(a)$, we write $h_{\alpha_0}(a)=f_{\alpha_0}(a)-f_{\alpha_0}(a\cdot c)=r\lambda-r^2\approx 0.140$. Other computations are done in a similar way. We obtain thus the following approximate values for the probability law of the first clique, when the initial state of the system is~$0000$:
\begin{gather*}
  \begin{array}{cccccc}
    a&b&c&d&a\cdot c&b\cdot d\\
    0.140&0.140&.140&.140&.219&.219
  \end{array}
\end{gather*}
Contrasting with the previous example, this system has no null node.
\end{example}

\subsection{A comparison result}
\label{sec:an-elem-comp}

In this subsection, we state an elementary lemma and its corollary, both belonging to trace theory, and given in a form slightly more general than precisely needed in the sequel.

Consider an alphabet~$\Sigma$ and two independence relations $I$ and $J$ on $\Sigma$ such that $I\subseteq J$, and consider the two trace monoids $\M=\M(\Sigma,I)$ and $\N=\M(\Sigma,J)$. Then the morphism $\pi_J:\Sigma^*\to\M(\Sigma,J)$ satisfies $\pi_J(ab)=\pi_J(ba)$ for all letters $a$ and $b$ such that $(a,b)\in I$. The universal property of $\M(\Sigma,I)$ as a quotient monoid yields the existence of a surjective morphism $\pi_{I,J}:\M(\Sigma,I)\to\M(\Sigma,J)$ such that $\pi_J=\pi_{I,J}\circ\pi_I$.

It seems to have been unnoticed so far that, when restricted to the set of sub-traces of a given trace of~$\M$, or even of~$\Mbar$, then $\pi_{I,J}$ becomes injective. This is the topic of the following lemma.

The lemma generalizes the following elementary fact. Let $\M=\Sigma^*$ be a free monoid and let $u\in\Sigma^*$. Then any prefix word $x\leq u$ is entirely determined by the collection $(n_a)_{a\in\Sigma}$ where $n_a$ is the number of occurrences of the letter $a$ in~$x$. Hence $x$ is entirely determined by its image in the free commutative monoid generated by~$\Sigma$.

\begin{lemma}
  \label{lem:1}
  Let $I\subseteq J$ be two independence relations on an alphabet\/~$\Sigma$, let $\M=\M(\Sigma,I)$ and $\N=\M(\Sigma,J)$, and let $\pi:\M\to\N$ be the natural surjection. Then $\pi$ extends naturally to a surjection on generalized traces, as a mapping still denoted by $\pi:\Mbar\to\Nbar$. Let $\omega\in\Mbar$, and define:\quad
    $\Mbar_{\leq\omega}=\{x\in\Mbar\tqs x\leq\omega\}$. Then the restriction of\/ $\pi$ to $\Mbar_{\leq\omega}$ is injective.
\end{lemma}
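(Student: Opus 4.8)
The plan is to work with the Cartier--Foata normal forms of generalized traces and to reconstruct an element $x \le \omega$ in $\Mbar$ from its image $\pi(x)$ together with the "envelope" $\omega$. First I would make precise how $\pi$ extends to generalized traces: since $\pi$ maps cliques of $\M$ to cliques of $\N$ and sends normal pairs $(c,d)$ in $\M$ to pairs of cliques in $\N$ which need not be normal anymore, the image of an infinite normal sequence must be re-normalized layer by layer; concretely, $\pi(\xi)$ is obtained by applying $\pi$ letterwise to any reading of $\xi$ and re-computing the normal form in $\N$. The key monotonicity facts to record are: $x \le y$ in $\M$ implies $\pi(x) \le \pi(y)$ in $\N$, and for a letter $a$ and a trace $z$, $a \in z \iff a \in \pi(z)$ (the number of occurrences of each letter is preserved by $\pi$, as it is by any quotient map of the given form). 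This last fact is the engine of the whole argument, exactly as in the free-monoid special case quoted before the lemma.

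**Reduction to finite traces and the core injectivity step.** I would first handle the case $\omega \in \M$ finite; the general case follows since any $x \le \omega$ with $\omega \in \Mbar$ satisfies $x \le \omega_{\le n}$ for a suitable finite truncation $\omega_{\le n}$ of $\omega$ (cut the normal form of $\omega$ after enough layers that $x \le \omega_{\le n}$ — possible because $|x|$ is finite and $x \le \omega$ means $x$'s normal form is layerwise $\le$ that of $\omega$). So suppose $\omega \in \M$ and $x, x' \le \omega$ with $\pi(x) = \pi(x')$. The heart of the matter is: from $\pi(x)$ and from the data of $\omega$, recover $x$. I would do this by recovering the Cartier--Foata normal form of $x$ one layer at a time. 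The first layer $c_1$ of $x$ is a clique, and I claim it is determined by the condition $c_1 = \{\,a \in \Sigma : a \le \omega \text{ and } a \in x\,\}$ intersected appropriately — more carefully, $c_1$ is the set of letters $a$ such that $a \le x$, i.e. $a$ occurs in $x$ and is minimal in the heap $\H x$. Being minimal is a statement about the dependence relation restricted to the letters occurring in $x$, and here is where $\omega$ enters: because $x \le \omega$, the set of letters occurring in $x$ is a subset of those occurring in $\omega$, and within $\H\omega$ the relative order of pieces is fixed; one shows $c_1 = c_1(\omega) \cap (\text{support of } x)$ where $c_1(\omega)$ is the first layer of $\omega$... — this needs care, since a letter minimal in $\H\omega$ need not lie in $x$, and conversely. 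The clean formulation I would aim for: $a$ belongs to the first layer of $x$ iff $a$ occurs in $x$ and no letter $b$ occurring in $x$ with $(a,b) \in D$ precedes $a$ in $\H\omega$; and since occurrence-in-$x$ is detected by occurrence-in-$\pi(x)$, and precedence in $\H\omega$ is fixed data not depending on $x$, the first layer of $x$ is a function of $\pi(x)$ and $\omega$ alone. Then peel: replace $x$ by $c_1 \backslash x$, replace $\omega$ by $c_1 \backslash \omega$ (legitimate since $c_1 \le x \le \omega$), note $\pi(c_1 \backslash x) = \pi(c_1)\backslash \pi(x)$ is determined, and induct on $|x|$.

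**The main obstacle.** The delicate point — and the step I expect to require the most care — is proving that the first layer of $x$ really is determined by $\pi(x)$ and $\omega$, i.e. that "precedence of dependent letters inside $\H\omega$" is the right combinatorial invariant and that it is insensitive to which sub-heap $x$ we picked. The subtlety is that two different sub-traces $x, x'$ of $\omega$ could a priori have the same letter-counts (hence same image) but be embedded into $\H\omega$ "differently"; the claim is that a sub-heap of a fixed heap is rigidly determined by which pieces it contains — in heaps-of-pieces language, a down-closed sub-multiset of pieces of $\H\omega$ is a sub-heap in at most one way. I would prove this by the following lemma, stated in $\Mbar$: if $x \le \omega$ and $x' \le \omega$ and, for every letter $a$, the number of occurrences of $a$ in $x$ equals that in $x'$, then $x = x'$. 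This is proved by induction on $|x|=|x'|$ using left-cancellativity of $\M$: pick any $a$ in the first layer of $x$; then $a \le \omega$ and $a$ occurs in $x'$, and since $a$ occurs in $\omega$ exactly as often as $a \le a\cdot(\text{rest})$ allows, one argues $a \le x'$ as well (a letter occurring in $x'$ but not minimal there would be blocked by some dependent letter $b \le x'$, forcing $b \le \omega$ before $a$, contradicting $a$'s minimality in $\H x \subseteq \H\omega$ together with the count bookkeeping); then cancel $a$ on the left from $x$, $x'$ and $\omega$ and recurse. Granting that rigidity lemma, the injectivity of $\pi|_{\Mbar_{\le\omega}}$ is immediate: $\pi(x)=\pi(x')$ forces equal letter-counts, hence $x = x'$. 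I would present the rigidity lemma as the technical core and derive the stated result as a one-line corollary.
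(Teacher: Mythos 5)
Your core idea --- reconstruct $x$ from $\pi(x)$ together with the envelope $\omega$, by peeling off determined pieces --- is the same as the paper's, and your argument is essentially sound, but the route differs in two ways worth noting. The paper works with a general $J\supseteq I$ directly and proves the single clean identity $c_1=d_1\cap C_1$, where $c_1,d_1,C_1$ are the first Cartier--Foata cliques of $x$, $\pi(x)$ and $\omega$ respectively; it then peels off $c_1$ and iterates, which determines every layer of $x$ (finite or infinite) from $\pi(x)$ and $\omega$. You instead reduce to the coarsest quotient (the free commutative monoid, via ``$\pi(x)=\pi(x')$ forces equal letter counts'') and prove a rigidity lemma: a left divisor of $\omega$ is determined by its letter multiset. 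That reduction is legitimate and your rigidity lemma is correct --- the key observations that the occurrences of a fixed letter $a$ in $\H{x}$ are exactly the lowest $n_a(x)$ occurrences of $a$ in $\H{\omega}$ (because equal letters are dependent, hence totally ordered, and a left divisor takes a downward-closed set of pieces), and that minimality in $\H{x}$ is minimality in $\H{\omega}$ restricted to the pieces of $\H{x}$, are exactly what make both proofs work. Your version isolates a slightly stronger-looking and reusable statement (sub-heaps of a fixed heap are rigid), at the cost of being longer; the paper's $c_1=d_1\cap C_1$ is the more economical packaging of the same phenomenon.

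One genuine (though repairable) gap: all of your inductions are on the finite length $|x|$, and your reduction of general $\omega$ to a finite truncation explicitly assumes $|x|$ finite. But $\Mbar_{\leq\omega}$ contains generalized (infinite) traces, e.g.\ $\omega$ itself, and for two infinite $x,x'\leq\omega$ with $\pi(x)=\pi(x')$ you cannot truncate $\omega$ so as to invoke the finite case, nor run an induction on length. The fix is the one the paper uses and that your own layer-peeling already suggests: show that the first clique of $x$ is a function of $\pi(x)$ and $\omega$ alone (your minimality characterization does this, since the relevant occurrence sets are determined by the letter counts even when some counts are infinite), then peel that clique from $x$, $\pi(x)$ and $\omega$ and iterate over all layers $i\geq1$; this determines the entire normal sequence $(c_i)_{i\geq1}$ of $x$ without any induction on a finite quantity. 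As written, your proof covers only $\M_{\leq\omega}$ rather than $\Mbar_{\leq\omega}$; note that the countability claim in Corollary~\ref{cor:1} for $\BM_{\leq\omega}$ needs injectivity on the infinite elements too.
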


\begin{proof}
  The extension of $\pi$ to a mapping $\Mbar\to\Nbar$ follows from the definitions, hence we focus on proving that the restriction of $\pi$ to $\Mbar_{\leq\omega}$ is injective. Let $x\in\Mbar_{\leq\omega}$ and let $y=\pi(x)$. Let $c_1$ be the first clique in the normal form of~$x$, and let $d_1$ be the first clique in the normal form of~$y$. Let also $C_1$ be the first clique in the normal form of~$\omega$. We assume with loss of generality that $x\neq\vd$ since $\pi^{-1} (\{\vd\})=\{\vd\}$.

  We claim that $c_1=d_1\cap C_1$. The inclusion $c_1\subseteq d_1\cap C_1$ is clear since both inclusions $c_1\subseteq d_1$ and $c_1\subseteq C_1$ are obvious. For proving the converse inclusion, seeking a contradiction, we assume that there is a letter $a\in d_1\cap C_1$ such that $a\notin c_1$. Then, since $y=\pi(x)$, the letter $a$ belongs to some higher clique in the normal form of~$x$. But, since $x\leq\omega$, and since $a\in C_1$, that entails that $a\in c_1$, contradicting the assumption $a\notin c_1$. Hence $c_1=d_1\cap C_1$, as claimed.

  Repeating inductively the same reasoning, with $x'=c_1 \backslash x$ and with $y'=\pi(x')=c_1\backslash y$ and $\omega'=c_1\backslash \omega$ in place of $x$ and of $y$ and of $\omega$ respectively\footnote{Recall that, if $c\leq u$ with $c,u\in\M$, we denote by $c\backslash u$ the left cancellation of $u$ by~$c$, which is the unique trace $v\in\M$ such that $c\cdot v=u$.}, we see that all the cliques $(c_i)_{i\geq1}$ of the generalized trace $x$ can be reconstructed from~$y$. This entails that $\pi$ is injective.
\end{proof}

\begin{corollary}
  \label{cor:1}
  Let $\M$ be a trace monoid, and let $\omega\in\BM$ be an infinite trace. For each integer $n\geq0$, consider:
  \begin{align*}
    \M_{\leq\omega}(n)&=\{x\in\M\tqs x\leq\omega\land |x|=n\},& p_n&=\#\M_{\leq\omega}(n).
  \end{align*}
Then there is a polynomial $P\in\bbZ[X]$ such that $p_n\leq P(n)$ for all integers~$n$. Furthermore, the set\/ $\BM_{\leq\omega}=\{\xi\in\BM\tqs \xi\leq\omega\}$ is at most countable. The polynomial $P$ only depends on~$\M$, and not on~$\omega$.
\end{corollary}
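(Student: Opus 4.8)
The plan is to apply Lemma~\ref{lem:1} with the maximal independence relation $J=I_1=(\Sigma\times\Sigma)\setminus\Delta$, so that $\N=\M(\Sigma,I_1)$ is the free commutative monoid on~$\Sigma$ and the natural surjection $\pi:\M\to\N$ of the lemma is the abelianization morphism. Since $I\subseteq I_1$ holds for every independence relation~$I$, the lemma applies and tells us that the extension $\pi:\Mbar\to\Nbar$ is injective in restriction to $\Mbar_{\leq\omega}$. The corollary will then follow from the classical polynomial growth of free commutative monoids together with the countability of their boundary.

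First I would establish the polynomial bound. The morphism $\pi$ preserves length, since it sends each generator $a\in\Sigma$ to the corresponding generator of~$\N$; hence $|\pi(x)|=|x|$ for every $x\in\M$. As $\M_{\leq\omega}(n)\subseteq\M_{\leq\omega}\subseteq\Mbar_{\leq\omega}$, Lemma~\ref{lem:1} shows that $\pi$ maps $\M_{\leq\omega}(n)$ injectively into $\{y\in\N\tqs|y|=n\}$, so that $p_n\leq\#\{y\in\N\tqs|y|=n\}$. Writing $N=\#\Sigma$, an element of~$\N$ of length~$n$ is a multiset of size~$n$ over $N$ elements, so this count equals $\binom{n+N-1}{N-1}$. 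For $N\geq1$ one has $\binom{n+N-1}{N-1}=\prod_{j=1}^{N-1}\frac{n+j}{j}\leq(n+1)^{N-1}$, so $P(X)=(X+1)^{N-1}\in\bbZ[X]$ works, and it depends only on $\#\Sigma$, hence only on~$\M$. (If $N=0$ then $\BM=\emptyset$ and there is nothing to prove.)

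Then I would settle countability. The set $\M_{\leq\omega}=\bigcup_{n\geq0}\M_{\leq\omega}(n)$ is a countable union of finite sets, hence countable; and $\BM_{\leq\omega}\subseteq\Mbar_{\leq\omega}$, which by Lemma~\ref{lem:1} injects into $\Nbar=\N+\partial\N$. Now $\N$ is countable, and $\partial\N$ is at most countable: in the free commutative monoid a pair of cliques $(c,d)$ is normal if and only if $d\subseteq c$, so a generalized trace of~$\N$ is merely a non-increasing sequence of subsets of~$\Sigma$; such a sequence is eventually constant because $\P(\Sigma)$ is finite, and the set of eventually constant sequences valued in a finite set is countable. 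Hence $\Nbar$ is countable, so $\Mbar_{\leq\omega}$ is countable, and in particular so is $\BM_{\leq\omega}$.

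I do not expect a genuine obstacle here: the statement is essentially a repackaging of Lemma~\ref{lem:1}. The only points that require a little attention are choosing $J$ to be the \emph{maximal} independence relation (any $J$ with $I\subseteq J$ gives \emph{some} bound, but only the abelianization gives a polynomial one), making the bound explicit with integer coefficients rather than leaving it as the binomial coefficient, and the short verification that the boundary of a free commutative monoid is countable.
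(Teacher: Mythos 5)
Your proof is correct and follows essentially the same route as the paper: apply Lemma~\ref{lem:1} with $\N$ the free commutative monoid on~$\Sigma$, deduce the polynomial bound from the polynomial growth of~$\N$, and deduce countability from the countability of~$\Nbar$. The only difference is that you spell out details the paper leaves implicit (the explicit integer polynomial $(X+1)^{N-1}$ dominating $\binom{n+N-1}{N-1}$, and the verification that $\partial\N$ consists of eventually constant non-increasing sequences of subsets of~$\Sigma$, hence is countable), which is a welcome but not substantively different elaboration.
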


\begin{proof}
  Let $\M=\M(\Sigma,I)$ and let $\N$ be the free commutative monoid generated by~$\Sigma$, \ie, $\N=\M(\Sigma,J)$ with $J=(\Sigma\times\Sigma)\setminus\Delta$ and $\Delta=\{(x,x)\; :\ x\in\Sigma\}$.

  For each integer~$n$, let $q_n=\#\N(n)$. Then it is well known that $q_n=P(n)$ for some polynomial $P\in\bbZ[X]$ (a short proof based on the Möbius inversion formula was given in Sect.~\ref{sec:growth-series-mobius}). Since $I\subseteq J$, it follows  from Lemma~\ref{lem:1} that $p(n)\leq q(n)$.

\medskip
  Furthermore, $\Nbar$~itself is at most countable since $\Nbar$ identifies with:
  \begin{gather*}
    \Nbar\sim\bigl\{(x_i)_{i\in\Sigma}\tq x_i\in\bbZ_{\geq0}\cup\{\infty\},\quad\exists i\in\Sigma\quad x_i=\infty
    \bigr\}.
  \end{gather*}
Hence, the fact that $\BM_{\leq\omega}$ is at most countable also follows from Lemma~\ref{lem:1}.
\end{proof}

\begin{remark}
  Of course, the direct argument:
  \begin{gather*}
\BM_{\leq\omega}\subseteq\bigl\{\xi\in\Cstar^{\bbZ_{\geq1}}\tqs \forall i\geq1\quad C_i(\xi)\subseteq C_i(\omega)\bigr\}
\end{gather*}
would not allow to conclude as in Corollary~\ref{cor:1} that $\BM_{\leq\omega}$ is at most countable.
\end{remark}

\section{Deterministic concurrent systems}
\label{sec:determ-conc-syst}

\begin{definition}
  A \emph{deterministic concurrent system (\DCS)} is a concurrent system $\X=(\M,X,\bot)$ such that for every state $\alpha\in X$, the partial order $(\M_\alpha,\leq)$ is a lattice.
\end{definition}

\begin{remark}
\label{rem:1} According to the background on \lub\ and \glb\ on trace monoids recalled in Section~\ref{sec:lower-upper-bounds} on the one hand, and since $\M_\alpha$ is a downward closed subset of~$\M$ on the other hand, we have for any two executions $x,y\in\M_\alpha$:
  \begin{enumerate}
  \item $x$~and $y$ have a \glb\ in~$\M_\alpha$, which coincides with their \glb\ in~$\M$; and
  \item $x$~and $y$ have a \lub\ in $\M_\alpha$ if and only they have a common upper bound in~$\M_\alpha$, in which case their \lub\ in $\M_\alpha$ coincides with their \lub\ in~$\M$.
  \end{enumerate}
Note however that the existence of $x\vee y$ in $\M$ is not enough to insure that $x\vee y\in\M_\alpha$.

  Henceforth, a concurrent system $(\M,X,\bot)$ is a \DCS\ if and only if, for every state~$\alpha$, any two executions $x,y\in\M_\alpha$ have a common upper bound in~$\M_\alpha$.
\end{remark}

The following result says that \DCS\ correspond to ``locally commutative'' concurrent systems.

\begin{proposition}
  \label{prop:1}
  Let $\X=(\M,X,\bot)$ be a concurrent system. Then the following properties are equivalent:
  \begin{itemize}
  \item[(i)] \label{item:15} $\X$ is deterministic.
  \item[(ii)] \label{item:16} For every $\alpha\in X$, the partial order $(\C_\alpha,\leq)$ is a lattice, isomorphic to $(\P(\Sigma_\alpha),\subseteq)$.
  \end{itemize}
\end{proposition}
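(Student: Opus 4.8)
The plan is to replace condition~(ii) by a more concrete equivalent, and then prove the two implications separately. Recall from Section~\ref{sec:lower-upper-bounds} that a clique of~$\M$ is the same thing as a subset of~$\Sigma$ whose distinct letters pairwise commute, that under this identification $(\C,\leq)$ is a downward-closed sub-poset of~$(\P(\Sigma),\subseteq)$, and that a letter~$\ell$ satisfies $\ell\leq z$ precisely when~$\ell$ lies in the first clique of the normal form of~$z$. Since $\M_\alpha$ is downward closed, every clique in~$\C_\alpha$ is a subset of~$\Sigma_\alpha$, so $(\C_\alpha,\leq)$ is always order-embedded in~$(\P(\Sigma_\alpha),\subseteq)$; a cardinality count then shows that $(\C_\alpha,\leq)$ is isomorphic to~$(\P(\Sigma_\alpha),\subseteq)$ if and only if $\C_\alpha=\P(\Sigma_\alpha)$, in which case it is automatically a lattice. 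Hence~(ii) is equivalent to the assertion that, \emph{for every state~$\alpha$, the set $\Sigma_\alpha$ is a clique of~$\M$ with $\alpha\cdot\Sigma_\alpha\neq\bot$} (equivalently $\C_\alpha=\P(\Sigma_\alpha)$), and it is this reformulation I would use throughout.

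\textit{Proof of (i)$\Rightarrow$(ii).} Fix~$\alpha$; I would show by induction on~$|S|$ that every subset $S\subseteq\Sigma_\alpha$ is a clique of~$\M$ lying in~$\M_\alpha$. The cases $|S|\leq1$ are immediate. For $|S|\geq2$, write $S=S_1\cup S_2$ with $S_1,S_2$ two subsets of~$S$ of size $|S|-1$; by induction $S_1,S_2\in\M_\alpha$, so since~$\X$ is deterministic the least upper bound $z=S_1\vee S_2$ exists in~$\M_\alpha$. As $S_1\leq z$, $S_2\leq z$ and the~$S_i$ are cliques, every letter of~$S$ is minimal in~$z$, hence~$S$ is contained in the first clique of~$z$; in particular~$S$ is a clique, and $S\leq z\in\M_\alpha$ forces $S\in\M_\alpha$ by downward closure. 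Taking $S=\Sigma_\alpha$ yields the reformulated~(ii).

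\textit{Proof of (ii)$\Rightarrow$(i).} By Remark~\ref{rem:1} it suffices to exhibit, for any state~$\beta$ and any two executions $x,y\in\M_\beta$, a common upper bound in~$\M_\beta$. I would obtain this through a double induction. The inner claim is: \emph{for every state~$\beta$, every letter $a\in\Sigma_\beta$ and every $y\in\M_\beta$, the traces~$a$ and~$y$ have a common upper bound in~$\M_\beta$}, proved by induction on~$|y|$. If $y=\varepsilon$, take~$a$; if $a\leq y$, take~$y$. Otherwise~$a$ does not lie in the first clique~$d$ of~$y$, and---this is where the reformulated~(ii) at~$\beta$ enters---both~$a$ and the letters of~$d$ lie in the clique~$\Sigma_\beta$, so~$a$ commutes with~$d$, the set $\{a\}\cup d$ lies in~$\C_\beta$, and therefore $a\in\Sigma_{\beta\cdot d}$. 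Writing $y=d\cdot y_1$ with $y_1\in\M_{\beta\cdot d}$ and $|y_1|<|y|$, the induction hypothesis provides $w_1\in\M_{\beta\cdot d}$ above~$a$ and~$y_1$; then $d\cdot w_1\in\M_\beta$ lies above~$y$, and above~$a$ as well since~$a$ commutes past~$d$. For the outer induction, on~$|x|$: the case $|x|\leq1$ is the inner claim applied to the letter~$x$ (with $x=\varepsilon$ trivial). For $|x|\geq2$, pick a letter~$a$ in the first clique of~$x$ and write $x=a\cdot x_1$ with $x_1\in\M_{\beta\cdot a}$ and $|x_1|<|x|$; use the inner claim to get $w\in\M_\beta$ above~$a$ and~$y$, write $w=a\cdot w'$ with $w'\in\M_{\beta\cdot a}$, apply the outer induction hypothesis in~$\M_{\beta\cdot a}$ to~$x_1$ and~$w'$ to obtain $z_1$ above both, and observe that $a\cdot z_1\in\M_\beta$ then lies above~$x$, above~$w$, and hence above~$y$.

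The reformulation of~(ii) and the implication (i)$\Rightarrow$(ii) are bookkeeping; the heart of the argument, and the step I expect to cost the most, is the construction of common upper bounds in (ii)$\Rightarrow$(i). The subtlety is that $x\vee y$ need not exist in~$\M$ a priori, so the bound must be built by hand; the mechanism that makes the induction close is the reduction to pushing a single letter through an execution, together with the repeated use of local commutativity---at every reachable state all enabled letters pairwise commute and can be fired simultaneously---which is exactly what allows such a letter to be ``commuted through'' an execution without leaving~$\M_\beta$.
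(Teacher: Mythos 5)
Your proof is correct, but for the substantive implication (ii)$\Rightarrow$(i) it follows a genuinely different route from the paper's. The paper first reduces to the case $x\wedge y=\vd$ and then works layer by layer on the Cartier--Foata normal forms $(c_1,\ldots,c_k)$ and $(d_1,\ldots,d_k)$ of $x$ and $y$: it shows inductively that $c_{i+1}\wedge d_i=\vd$ (using normality of the pair $(c_i,c_{i+1})$ together with the lattice hypothesis on $\C_{\alpha\cdot(c_1\cdots c_{i-1})}$) and concludes that $(c_1\vee d_1)\cdot(c_2\vee d_2)\cdots=x\cdot y=y\cdot x$, so that disjoint executions literally commute and $x\cdot y$ is the desired common upper bound. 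You instead run a double induction that pushes one letter at a time through an execution, using local commutativity at each reachable state to commute that letter past successive first cliques. Your argument avoids all bookkeeping with normal forms beyond first cliques and does not need the preliminary reduction to $x\wedge y=\vd$, at the price of not exhibiting the least upper bound explicitly: the paper's computation identifies $x\vee y$ as $(x\wedge y)\cdot x'\cdot y'$ and makes the slogan ``locally commutative'' literal, which is then reused in spirit elsewhere (e.g.\ in Lemma~\ref{lem:2}, whose proof is essentially your inner claim). Your reformulation of (ii) as $\C_\alpha=\P(\Sigma_\alpha)$ and your inductive proof of (i)$\Rightarrow$(ii) (which the paper dismisses as obvious) are both sound; the only facts you import without proof --- that a letter left-divides a trace iff it lies in the first clique of its normal form, and that a finite downward-closed subposet of $\P(\Sigma_\alpha)$ abstractly isomorphic to $\P(\Sigma_\alpha)$ must equal it --- are standard and harmless.
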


\begin{proof}
 The implication $\text{(\ref{item:15})}\implies\text{(\ref{item:16})}$ is obvious. The interesting point is the implication $\text{(\ref{item:16})}\implies\text{(\ref{item:15})}$.

  Assume that $(\C_\alpha,\leq)$ is a lattice for every $\alpha\in X$, which is then necessarily isomorphic to $(\P(\Sigma_\alpha),\subseteq)$. Fix $\alpha\in X$ and let $x,y\in\M_\alpha$. Assume first that $x\wedge y=\vd$. Let $(c_1,\ldots,c_k)$ and $(d_1,\ldots,d_m)$ be the normal forms of $x$ and of~$y$. Maybe by adding the empty trace at the tail of one or the other normal form, we assume that $k=m$, at the cost of tolerating that some of the elements may be the empty trace.

  On the one hand, since $c_1\cdot c_2$ is an execution starting from~$\alpha$, one has $c_2\in\C_{\alpha\cdot c_1}$. On the other hand, both $c_1$ and $d_1$ belong to~$\C_\alpha$, which is a lattice by assumption. Hence $c_1\vee d_1\in\C_\alpha$. And since $c_1\wedge d_1=\vd$ by assumption, one has $c_1\vee d_1=c_1\cdot d_1=d_1\cdot c_1$. Therefore: $d_1\in\C_{\alpha\cdot c_1}$. Since both cliques $c_2$ and $d_1$ belong to~$\C_{\alpha\cdot c_1}$, which is a lattice, it follows that $c_2\vee d_1\in\C_{\alpha\cdot c_1}$.

  Now we claim that $c_2\wedge d_1=\vd$. Otherwise, there exists a letter $a$ occurring in both $c_2$ and~$d_1$. Since $(c_1,c_2)$ is a normal pair of cliques, there exists $b\in c_1$ such that $(a,b)\in D$, the dependence pair of the monoid. Because of the assumption $c_1\wedge d_1=\vd$, the identity $a=b$ is impossible. But both $a$ and $b$ belong to~$\Sigma_\alpha$, and since $a\neq b$, the fact that $(a,b)\in D$ contradicts that $\C_\alpha$ is a lattice; our claim is proved.

\medskip
  We have obtained that $c_2\vee d_1$ exists in $\C_{\alpha\cdot c_1}$ and that $c_2\wedge d_1=\vd$. Hence $c_2\vee d_1=c_2\cdot d_1=d_1\cdot c_2$. It implies that $c_2\in\C_{\alpha\cdot(c_1\vee d_1)}$. Symmetrically, we obtain that $d_2\in\C_{\alpha\cdot (c_1\vee d_1)}$. Since $\C_{\alpha\cdot(c_1\vee d_1)}$ is a lattice, it follows that $d_2\vee c_2\in\C_{\alpha\cdot(c_1\vee d_1)}$. But again,  $d_2\wedge c_2=\vd$ hence $d_2\vee c_2=d_2\cdot c_2=c_2\cdot d_2$. Therefore we obtain that the following trace belongs to~$\M_\alpha$:
    \begin{gather*}
      (c_1\vee d_1)\cdot(c_2\vee d_2)=(c_1\cdot c_2)\cdot(d_1\cdot d_2)=(d_1\cdot d_2)\cdot(c_1\cdot c_2).
    \end{gather*}
    Repeating inductively the same reasoning, we finally obtain that $x\cdot y=y\cdot x\in\M_\alpha$, thus providing a common upper bound of $x$ and of $y$ in~$\M_\alpha$. This proves the existence of $x\vee y$ in $\M_\alpha$ in the case where $x\wedge y=\vd$.

\medskip
    The general case follows by considering $x'=(x\wedge y)\backslash x$ and $y'=(x\wedge y)\backslash y$ instead of $x$ and~$y$.
\end{proof}

\begin{remark}
In a \DCS, for each state $\alpha\in X$, the partially ordered set of cliques $(\C_\alpha,\leq)$ identifies with the powerset $(\P(\Sigma_\alpha),\subseteq)$. In particular $\C_\alpha$ has a maximum $c_\alpha=\max(\C_\alpha)=\bigvee\Sigma_\alpha$, given by:\quad $c_\alpha=\Sigma_\alpha$. We keep this notation in the statement of the following lemma.
\end{remark}

\begin{lemma}
  \label{lem:3}
  Let $\X=(\M,X,\bot)$ be a deterministic concurrent system, and let  $\alpha\in X$. Let $T_\alpha=(c_{i})_{i\geq1}$ be the sequence of cliques defined by $c_{1}=c_\alpha$, and inductively by $c_{i+1}=c_{\alpha_{i}}$ where $\alpha_i=\alpha\cdot (c_{1}\cdot\ldots\cdot c_{i})$. Then $T_\alpha$ is a generalized execution which is the maximum of\/ $(\Mbar_\alpha,\leq)$.
\end{lemma}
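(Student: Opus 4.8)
The plan is to prove the three assertions in turn: that $T_\alpha=(c_i)_{i\ge1}$ is a normal sequence (so $T_\alpha\in\Mbar$), that $T_\alpha\in\Mbar_\alpha$ (so it is a genuine generalized execution from~$\alpha$), and that $\xi\le T_\alpha$ for every $\xi\in\Mbar_\alpha$ (so it is the maximum). Throughout write $\alpha_0=\alpha$ and $\alpha_i=\alpha\cdot(c_1\cdots c_i)=\alpha_{i-1}\cdot c_i$, so that $c_{i+1}=c_{\alpha_i}=\Sigma_{\alpha_i}$; by the Remark following Proposition~\ref{prop:1} each $\Sigma_{\alpha_i}$ is a clique, $\C_{\alpha_i}$ coincides with the full powerset $\P(\Sigma_{\alpha_i})$, and $\Sigma_{\alpha_i}\in\C_{\alpha_i}\subseteq\M_{\alpha_i}$.

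For the first point I would argue as follows. Given $i$ and $b\in c_{i+1}=\Sigma_{\alpha_i}$, either $b\in c_i$, and then $(b,b)\in D$ is the required witness, or $b\notin c_i$; in the latter case, assuming that $b$ commutes with every letter of $c_i$, one has $b\cdot c_i=c_i\cdot b$ in~$\M$, hence $\bot\ne\alpha_i\cdot b=\alpha_{i-1}\cdot(c_i\cdot b)=\alpha_{i-1}\cdot(b\cdot c_i)=(\alpha_{i-1}\cdot b)\cdot c_i$, and since $\M_{\alpha_{i-1}}$ is downward closed and $b\le b\cdot c_i$ this forces $\alpha_{i-1}\cdot b\ne\bot$, i.e.\ $b\in\Sigma_{\alpha_{i-1}}=c_i$, a contradiction. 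So some letter of $c_i$ is dependent on~$b$, proving $(c_i,c_{i+1})$ normal. For the second point, a short induction shows each partial product $x_j:=c_1\cdots c_j$ lies in $\M_\alpha$ (because $\alpha\cdot x_j=\alpha_{j-1}\cdot c_j\ne\bot$, using $c_j\in\C_{\alpha_{j-1}}$); then if $x\in\M$ satisfies $x\le T_\alpha$ and has height~$k$, its generalized normal form $(d_i)_{i\ge1}$ satisfies $d_i\le c_i$ for all $i$ and $d_i=\vd$ for $i>k$, so comparing generalized normal forms gives $x\le x_k$, and $x_k\in\M_\alpha$ together with downward closure yields $x\in\M_\alpha$.

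The substance is the third point. Fix $\xi\in\Mbar_\alpha$, put $e_i=C_i(\xi)$, and fix $N\ge1$; it suffices to show $e_j\le c_j$ for $j\le N$. Set $z:=e_1\cdots e_N$, which lies in $\M_\alpha$ (since $z\le\xi$) and has height~$\le N$. I would then ``saturate'' $z$ greedily: put $\zeta_0=z$ and, for $0\le i<N$, let $\zeta_i\vee c_{\alpha_i}$ be the join taken in the lattice $(\M_{\alpha_i},\le)$ — this is the one place determinism is used — which exists because $\zeta_i$ and $c_{\alpha_i}$ are both executions from~$\alpha_i$. Since $c_{\alpha_i}=c_{i+1}$ is a clique, $\zeta_i\vee c_{\alpha_i}\ge c_{i+1}$ gives $C_1(\zeta_i\vee c_{\alpha_i})\supseteq c_{i+1}$, while downward closure together with $\C_{\alpha_i}=\P(\Sigma_{\alpha_i})$ gives $C_1(\zeta_i\vee c_{\alpha_i})\subseteq\Sigma_{\alpha_i}=c_{i+1}$; hence $\zeta_i\vee c_{\alpha_i}=c_{i+1}\cdot\zeta_{i+1}$ with $\zeta_{i+1}:=c_{i+1}\backslash(\zeta_i\vee c_{\alpha_i})\in\M_{\alpha_{i+1}}$. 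Telescoping $\zeta_i\le c_{i+1}\cdot\zeta_{i+1}$ from $i=0$ to $N-1$ yields $z\le c_1\cdots c_N\cdot\zeta_N$. It then remains to identify the first $N$ cliques of $c_1\cdots c_N\cdot\zeta_N$ as $(c_1,\dots,c_N)$; peeling $c_1,c_2,\dots$ off one at a time, this reduces to showing that the minimal letters of each $c_m\cdots c_N\cdot\zeta_N$ ($m\le N$) are exactly those of $c_m$, i.e.\ that no minimal letter of $\zeta_N$ commutes with every letter of~$c_N$. But such a letter $b$ would lie in $C_1(\zeta_N)\subseteq\Sigma_{\alpha_N}$, and then the commutation argument of the first point, applied at $\alpha_N=\alpha_{N-1}\cdot c_N$, would force $b\in\Sigma_{\alpha_{N-1}}=c_N$, contradicting that $b$ commutes with all of $c_N$. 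Consequently $e_j=C_j(z)\le C_j(c_1\cdots c_N\cdot\zeta_N)=c_j$ for $j\le N$, and letting $N\to\infty$ gives $\xi\le T_\alpha$.

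The main obstacle is exactly this greedy saturation: one must stay inside the sets $\M_{\alpha_i}$ while joining with the maximal cliques (precisely what the lattice hypothesis of a \DCS\ guarantees, and which fails without it), and one must prevent the residual $\zeta_N$ from intruding into the first $N$ layers of the saturated trace, which is the same commutation observation used for normality. A secondary but useful point is to work with the finite prefixes $z=e_1\cdots e_N$ rather than with $\xi$ directly, so that every join in play is a join of two finite traces and no completeness property of $\Mbar$ under infinite joins is required.
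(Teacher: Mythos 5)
Your proof is correct. It rests on the same two pillars as the paper's: the lattice property of $(\M_\alpha,\leq)$ supplies a join with the greedy maximal execution, and the maximality of $c_\alpha=\Sigma_\alpha$ in $\C_\alpha$ forces that join's layers to be the $c_i$'s. The organization differs, though. The paper forms a single global join $z=x\vee(c_1\cdots c_k)$ in $\M_\alpha$ and reads off its normal form $(e_1,\ldots,e_k)$, concluding $c_j\leq e_j$ and hence $c_j=e_j$ by maximality; this is shorter but leans implicitly on how normal forms behave under joins (that $z$ again has height $k$ and that $y\leq z$ gives the termwise inequality). You instead saturate layer by layer, taking $N$ local joins $\zeta_i\vee c_{\alpha_i}$ at the successive states $\alpha_i$ and pinning down $C_1(\zeta_i\vee c_{\alpha_i})=c_{i+1}$ by a two-sided inclusion; this is longer but entirely self-contained, and the commutation argument you use to keep the residual $\zeta_N$ out of the first $N$ layers is the same one by which you prove that $(c_i,c_{i+1})$ is a normal pair --- a fact the paper only asserts in a footnote. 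You also verify explicitly that $T_\alpha\in\Mbar_\alpha$ (not merely $T_\alpha\in\Mbar$), which the paper passes over. So: same key idea, a more granular decomposition, and more of the supporting details actually proved.
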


\begin{proof}
  We first observe that, for $c_\alpha$ the maximum of~$\C_\alpha$, then $c_\alpha\to y$ holds\footnote{This actually holds for any concurrent system, not necessarily deterministic, if $c_\alpha$ is taken to be any maximal element in~$\C_\alpha$.} for every clique $y\in\C_{\alpha\cdot c_\alpha}$. Here in particular, $c_i\to c_{i+1}$ holds for all $i\geq1$, hence $T_\alpha$ is indeed a generalized execution.

  Let $x\in\Mbar_\alpha$, with $x=(d_i)_{i\geq1}$. We prove that $x\leq T_\alpha$. Assume first that $x$ is a finite trace, of height~$k$. Put $y=c_1\cdot\ldots\cdot c_k$. Then $x$ and $y$ belong to~$\M_\alpha$. Hence $z=x\vee y$ exists in~$\M_\alpha$. Let $(e_1,\ldots,e_k)$ be the normal form of~$z$ (since $x$ and $y$ have the same height~$k$, $z$ also has height~$k$). Then $c_j\leq e_j$ and thus $c_j=e_j$ for all~$j$ by maximality of~$c_j$. Hence $d_j\leq c_j$ for all~$j$, which was to be proved.

  If $x=(c_i)_{i\geq1}$ is now a generalized trace, we obtain the same result by applying the previous case to all the sub-traces $(c_i)_{1\leq i\leq k}$, for $k$ ranging over the positive integers.
\end{proof}

Let us introduce a name for a valuation that will play a special role.

\begin{definition}
  \label{def:2}
  Let $\X=(\M,X,\bot)$ be a concurrent system. The valuation $f=(f_\alpha)_{\alpha\in X}$ defined by:
  \begin{gather*}
    \forall\alpha\in X\quad \forall x\in\M\quad f_\alpha(x)=
    \begin{cases}
      1,&\text{if\/ $x\in\M_\alpha$}\\
      0,&\text{otherwise}
    \end{cases}
  \end{gather*}
is called the \emph{dominant valuation} of~$\X$.
\end{definition}

The family $f=(f_\alpha)_{\alpha\in X}$ given in Def.~\ref{def:2} is indeed a valuation. Indeed, using the axioms of the monoid action and the additional assumption $\bot\cdot z=\bot$ for all $z\in\M$, one sees that the following equivalence is true for every $\alpha\in X$ and for every traces $x,y\in\M$:
\begin{gather*}
  \alpha\cdot (x\cdot y)\neq\bot\iff(\alpha\cdot x\neq\bot\land (\alpha\cdot x)\cdot y\neq\bot),
\end{gather*}
which translates at once as the identity $f_\alpha(x\cdot y)=f_\alpha(x)f_{\alpha\cdot x}(y)$.

\begin{theorem}
  \label{thr:1}
  Let $\X=(\M,X,\bot)$ be a non trivial concurrent system.
  \begin{enumerate}
  \item\label{item:12} If\/ $\Sigma_\alpha\neq\emptyset$ for all $\alpha\in X$, then the two following statements are equivalent:
  \begin{itemize}
  \item[(i)] \label{item:1} $\X$ is deterministic.
  \item[(ii)] \label{item:2} The dominant valuation of $\X$ is probabilistic.
  \end{itemize}
\item\label{item:13} If $\X$ is deterministic, then all sets~$\BM_\alpha$, for $\alpha\in X$,  are at most countable and the characteristic root of $\X$ is $r=1$ or $r=\infty$.
  \end{enumerate}
\end{theorem}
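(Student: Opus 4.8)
The plan is to prove part~\ref{item:12} as two separate implications and part~\ref{item:13} as a fairly direct consequence of Lemma~\ref{lem:3} and Corollary~\ref{cor:1}. For \ref{item:1}$\Rightarrow$\ref{item:2} I would fix $\alpha\in X$ and compute the M\"obius transform $h_\alpha$ of the dominant valuation restricted to~$\C$. Because $f_\alpha$ is the indicator of $\M_\alpha$ and $\C_\alpha$ is downward closed, $h_\alpha(c)=0$ whenever $c\notin\C_\alpha$; and for $c\in\C_\alpha$, Proposition~\ref{prop:1} replaces $\C_\alpha$ by $\P(\Sigma_\alpha)$, so $h_\alpha(c)=\sum_{c\le c'\le\Sigma_\alpha,\,c'\in\P(\Sigma_\alpha)}(-1)^{|c'|-|c|}=\sum_{S\subseteq\Sigma_\alpha\setminus c}(-1)^{|S|}$, which vanishes unless $c=\Sigma_\alpha$ and equals~$1$ there. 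Since $\Sigma_\alpha\neq\emptyset$ this yields $h_\alpha(\vd)=0$, and $h_\alpha$ is $\{0,1\}$-valued, hence non-negative; so~\eqref{eq:8} holds at every state and the dominant valuation is probabilistic.

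For the converse \ref{item:2}$\Rightarrow$\ref{item:1} I would argue through the associated probability measures. If the dominant valuation $f$ is probabilistic, let $(\nu_\alpha)_{\alpha\in X}$ be the corresponding Markov concurrent measure, so $\nu_\alpha(\up x)=f_\alpha(x)$ for $x\in\M_\alpha$. Fix $\alpha$. For each letter $a\in\Sigma_\alpha$ we have $\nu_\alpha(\up a)=f_\alpha(a)=1$, hence $\nu_\alpha\bigl(\bigcap_{a\in\Sigma_\alpha}\up a\bigr)=1$, and in particular this set is non-empty: there is $\omega\in\BM_\alpha$ with $a\le\omega$ for every $a\in\Sigma_\alpha$. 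Since a letter left-divides $\omega$ exactly when it lies in the first clique $C_1(\omega)$ of the normal form of~$\omega$, and since $C_1(\omega)\in\Cstar_\alpha$ forces $C_1(\omega)\subseteq\Sigma_\alpha$, we get $C_1(\omega)=\Sigma_\alpha$. Thus $\Sigma_\alpha$ is a clique of $\M$ lying in $\C_\alpha$; downward closedness of $\M_\alpha$ then makes every subset of $\Sigma_\alpha$ an execution, i.e. $\C_\alpha=\P(\Sigma_\alpha)$. As this holds at every state, Proposition~\ref{prop:1} gives that $\X$ is deterministic.

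For part~\ref{item:13} I would invoke Lemma~\ref{lem:3}: when $\X$ is deterministic, $(\Mbar_\alpha,\le)$ has a maximum $T_\alpha$, so every $x\in\M_\alpha$ satisfies $x\le T_\alpha$ and every $\omega\in\BM_\alpha$ satisfies $\omega\le T_\alpha$. If $T_\alpha\in\M$ then $\M_\alpha$ is finite and $\BM_\alpha=\emptyset$. If $T_\alpha\in\BM$, then Corollary~\ref{cor:1} shows $\BM_\alpha\subseteq\BM_{\le T_\alpha}$ is at most countable and $\#\{x\in\M_\alpha\tqs|x|=n\}\le\#\M_{\le T_\alpha}(n)\le P(n)$ for some $P\in\bbZ[X]$ depending only on $\M$. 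In both cases all $\BM_\alpha$ are at most countable; and since the coefficients of $G_{\alpha,\beta}(z)=\sum_{x\in\M_{\alpha,\beta}}z^{|x|}$ are bounded by $\#\{x\in\M_\alpha\tqs|x|=n\}$, hence polynomially, each $G_{\alpha,\beta}$ has radius of convergence at least~$1$. The characteristic root $r$ being the minimum of these radii and known to lie in $(0,1]\cup\{\infty\}$, it follows that $r=1$ or $r=\infty$.

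The step I expect to be the real obstacle is the identification $C_1(\omega)=\Sigma_\alpha$ inside \ref{item:2}$\Rightarrow$\ref{item:1}: it rests on the (elementary, but to be spelled out) facts that the letters left-dividing a generalized trace are precisely the letters of its first clique and hence commute pairwise, which is most transparent from the heaps-of-pieces picture. Everything else reduces to Proposition~\ref{prop:1}, Lemma~\ref{lem:3}, Corollary~\ref{cor:1}, and the binomial identity $\sum_{S\subseteq F}(-1)^{|S|}=0$ for $F\neq\emptyset$.
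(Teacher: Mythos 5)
Your proposal is correct, and two of its three pieces coincide with the paper's argument: the implication (i)$\Rightarrow$(ii) is exactly the paper's computation ($h_\alpha$ equals $1$ at the maximum $\Sigma_\alpha$ of $\P(\Sigma_\alpha)$ and $0$ elsewhere, by the binomial identity), and part~2 is the same reduction to Lemma~\ref{lem:3} and Corollary~\ref{cor:1} followed by the polynomial bound on the coefficients of the $G_{\alpha,\beta}$ --- your explicit case split on whether $T_\alpha$ is finite or infinite is in fact slightly more careful than the paper, since Corollary~\ref{cor:1} is stated for $\omega\in\BM$.

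Where you genuinely diverge is (ii)$\Rightarrow$(i). The paper stays entirely at the level of the finite combinatorics of the M\"obius transform: for \emph{any} maximal element $c_\alpha$ of $(\C_\alpha,\leq)$ one has $h_\alpha(c_\alpha)=f_\alpha(c_\alpha)=1$ (the alternating sum collapses to a single term at a maximal clique), and since $\bigl(h_\alpha(c)\bigr)_{c\in\Cstar_\alpha}$ is a probability vector this forces $h_\alpha$ to vanish everywhere else; as the argument applies to every maximal element, there can be only one, so $\C_\alpha$ has a maximum and is the lattice $\P(\Sigma_\alpha)$. You instead invoke the existence of the boundary measure $\nu_\alpha$, intersect the full-measure cylinders $\up a$ for $a\in\Sigma_\alpha$, and read off $C_1(\omega)=\Sigma_\alpha$ from a witness $\omega$. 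Your route is valid --- the left-divisibility fact you flag (a letter divides $\omega$ iff it lies in the first Cartier--Foata clique, i.e.\ labels a minimal piece of the heap) is standard and all the set inclusions check out --- but it is heavier machinery than needed: it routes a purely finite statement through the nontrivial existence theorem for Markov concurrent measures, whereas the paper's argument uses only the normalization conditions~\eqref{eq:8} themselves. The upside of your version is that it makes visible \emph{why} determinism follows: a probabilistic dominant valuation forces a single infinite execution passing through all enabled letters at once. Either way the conclusion via Proposition~\ref{prop:1} is the same, and I see no gap.
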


\begin{proof}
Point~\ref{item:12}. To prove the stated equivalence, assume~(i), and let $f=(f_\alpha)_{\alpha\in X}$ be the dominant valuation. Let $\alpha\in X$, and let $c\in\C_\alpha$. Since $\C_\alpha$ identifies with~$\P(\Sigma_\alpha)$, the Möbius transform of $f_\alpha$ evaluated at $c$ is given by:
  \begin{align*}
    h_\alpha(c)&=\sum_{c'\in\C_\alpha\tqs c'\geq c}(-1)^{|c'|-|c|}=
                 \begin{cases}
                   1,&\text{if $c=c_\alpha$ (the maximum of $\C_\alpha$)}\\
                   0,&\text{otherwise}.
                 \end{cases}
  \end{align*}
  Since $\vd\neq c_\alpha$ for all $\alpha\in X$, this shows that $f$ is a probabilistic valuation.

\medskip
  Conversely, assume as in~(ii) that $f$ is probabilistic. Let $\alpha\in X$ be a state, and let $c_\alpha$ be a maximal element of~$(\C_\alpha,\leq)$. Then, on the one hand, and since $c_\alpha$ is a maximal clique, one has $h_\alpha(c_\alpha)=f_\alpha(c_\alpha)=1$. But on the other hand, $h_\alpha$~is nonnegative on $\C_\alpha$ and sums up to~$1$ on~$\C_\alpha$. Hence $h_\alpha$ vanishes on all other cliques of~$\C_\alpha$. Since this is true for every maximal element of~$\C_\alpha$, it entails that $\C_\alpha$ has actually a unique maximal element, which is thus its maximum~$\Sigma_\alpha$. Hence $(\C_\alpha,\leq)$ is a lattice for every $\alpha\in X$, which proves~(i) according to Proposition~\ref{prop:1}.

  Point~\ref{item:13}. We assume that $\X$ is a \DCS. According to Lemma~\ref{lem:3}, the partial order $(\Mbar_\alpha,\leq)$ has a maximum~$T_\alpha$ for every $\alpha\in X$, hence $\Mbar_\alpha\subseteq\Mbar_{\leq T_\alpha}$.  It follows at once from Corollary~\ref{cor:1} that $\BM_\alpha$ is at most countable, and that $\#\M_\alpha(n)\leq P(n)$ for all integers~$n$ and for some polynomial~$P$. All generating series $G_{\alpha,\beta}(z)$ are rational with non zero coefficients at least~$1$, and they have their coefficients dominated by some polynomial. They have therefore a radius of convergence either $1$ or~$\infty$. Hence $r\in\{1,\infty\}$.
\end{proof}

\begin{remark}
  In general, there might exist other probabilistic valuations than the dominant valuation, even for a \DCS. See Example~\ref{exm:5} at the end of next section.
\end{remark}

Since the dominant valuation $f$ is probabilistic, it corresponds to a Markov measure as described in Sect.~\ref{sec:valu-prob-valu}. The behavior of the resulting Markov chain of states-and-cliques is trivial, as shown by the following result.

\begin{proposition}
  \label{prop:2}
  Let $\X=(\M,X,\bot)$ be a non trivial \DCS\ such that\/ $\Sigma_\alpha\neq\emptyset$ for all $\alpha\in X$, and let $\nu=(\nu_\alpha)_{\alpha\in X}$ be the Markov measure associated with the dominant valuation. Then for each initial state $\alpha\in X$, the probability measure $\nu_\alpha$ is the Dirac distribution~$\delta_{\{T_\alpha\}}$, where $T_\alpha=\max\Mbar_\alpha$.

Furthermore, with respect to the dominant valuation, every node of the digraph of states-and-cliques is null except for those of the form~$(\alpha,c_\alpha)$, with $c_\alpha=\bigvee\C_\alpha=\Sigma_\alpha$.
\end{proposition}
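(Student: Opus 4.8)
The plan is to exploit the explicit description of the Möbius transform $h_\alpha$ of the dominant valuation that was already computed in the proof of Theorem~\ref{thr:1}, Point~\ref{item:12}: for every $\alpha\in X$ and every $c\in\C_\alpha$, one has $h_\alpha(c)=1$ if $c=c_\alpha$ and $h_\alpha(c)=0$ otherwise. This computation is valid here because $\X$ is a \DCS, so by Proposition~\ref{prop:1} the poset $(\C_\alpha,\leq)$ is isomorphic to $(\P(\Sigma_\alpha),\subseteq)$ and the alternating sum in~\eqref{eq:1} collapses. From this, the second assertion is immediate: a node $(\alpha,c)$ of $\Dstar$ is null, by definition, exactly when $h_\alpha(c)=0$, which by the formula above happens for all $c\in\Cstar_\alpha$ except $c=c_\alpha=\bigvee\C_\alpha=\Sigma_\alpha$. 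So I would dispatch the ``furthermore'' part in one line right after recalling the Möbius computation.

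For the first assertion, I would argue that $\nu_\alpha$ is supported on the single infinite execution $T_\alpha=\max\Mbar_\alpha$ (which exists by Lemma~\ref{lem:3}; note also that $\Sigma_\alpha\neq\emptyset$ for all $\alpha$ guarantees $T_\alpha$ is a genuine infinite trace, so $\BM_\alpha=\{T_\alpha\}$ is a single point and the Dirac measure makes sense). The cleanest route is via the Markov chain of states-and-cliques $(Y_i)_{i\geq1}$ under $\nu_\alpha$, whose initial distribution is $\un_\alpha\otimes h_\alpha$ by~\eqref{eq:9}. By the Möbius computation, this initial distribution is the Dirac mass at $(\alpha,c_\alpha)$, so $\nu_\alpha(C_1=c_\alpha)=1$. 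Now I claim the chain deterministically follows the path $Y_i=(\alpha_{i-1},c_i)$ with $(c_i)_{i\geq1}=T_\alpha$ as in Lemma~\ref{lem:3}: at each step, conditioned on being at $(\beta,c_\beta)$, the next clique is drawn according to $h_\beta$ restricted to the successors, which (again by the Möbius formula applied at state $\beta=\alpha_{i-1}\cdot c_i=\alpha_i$) is the Dirac mass at $c_{\alpha_i}=c_{i+1}$. Since every infinite path in $\Dstar$ corresponds to a unique infinite execution, the law of the random infinite execution under $\nu_\alpha$ is the Dirac mass at the execution whose normal form is $(c_i)_{i\geq1}$, i.e.\ at $T_\alpha$; hence $\nu_\alpha=\delta_{\{T_\alpha\}}$.

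Alternatively — and perhaps more transparently — one can avoid invoking the transition matrix by working directly with visual cylinders: $\nu_\alpha(\up x)=f_\alpha(x)$ equals $1$ if $x\in\M_\alpha$ and $0$ otherwise, so $\nu_\alpha$ assigns full mass to every cylinder $\up x$ with $x\in\M_\alpha$; since $\{T_\alpha\}=\bigcap_{x\in\M_\alpha,\,x\leq T_\alpha}\up x$ and this is a countable decreasing intersection of full-measure sets (using that $\M_\alpha\subseteq\Mbar_{\leq T_\alpha}$ from Lemma~\ref{lem:3} and Corollary~\ref{cor:1}, so the intersection is exactly $\{T_\alpha\}$), we get $\nu_\alpha(\{T_\alpha\})=1$. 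The main obstacle, and the one point deserving care, is making sure that the set of $x\in\M_\alpha$ with $x\leq T_\alpha$ actually cuts $\BM_\alpha$ down to the single point $T_\alpha$ — this is where one genuinely uses $\Sigma_\alpha\neq\emptyset$ (so $T_\alpha$ is infinite) together with the fact that $T_\alpha$ is the \emph{maximum}, not merely a maximal element, of $\Mbar_\alpha$. Everything else is bookkeeping with the already-established Möbius formula.
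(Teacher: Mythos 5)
Your proposal is correct and follows essentially the same two routes as the paper's own proof: the paper's primary argument is exactly your cylinder argument (writing $\{\omega\in\BM_\alpha\tqs\omega\geq T_\alpha\}$ as a decreasing intersection of full-measure cylinders $\up{z_i}$ and then invoking $T_\alpha=\max\Mbar_\alpha$), its stated alternative is your Markov-chain argument, and the null-node statement is dispatched by the same M\"obius computation inherited from Theorem~\ref{thr:1}. One correction, though it does not damage the proof: your parenthetical claim that $\Sigma_\alpha\neq\emptyset$ for all $\alpha$ forces $\BM_\alpha=\{T_\alpha\}$ is false for a non-irreducible \DCS\ --- the system of Fig.~\ref{fig:;lksadfk} satisfies every hypothesis of the proposition, yet $\BM_{\alpha_0}$ contains both $T_{\alpha_0}=(a\cdot c)\cdot b\cdot a\cdot b\cdots$ and the strictly smaller infinite execution $a\cdot b\cdot a\cdot b\cdots$; singletonhood is tied to irreducibility (Theorem~\ref{thr:2}), not to the present hypotheses. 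All your argument actually needs is that $T_\alpha\in\BM_\alpha$ (which $\Sigma_\alpha\neq\emptyset$ for all $\alpha$ does guarantee) together with the maximality of $T_\alpha$ in $\Mbar_\alpha$, and those are exactly what you use in the cylinder intersection, so the proof stands.
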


\begin{proof}
  Assuming that $\X$ is a \DCS, we keep using the notation $c_\alpha=\max\C_\alpha=\Sigma_\alpha$ for all $\alpha\in X$.

\medskip
  A direct proof is as follows. Fix $\alpha\in X$, and let $(\alpha_{i},z_i)_{i\geq0}$ be defined inductively by $\alpha_0=\alpha$, $z_0=\vd$ and $z_{i+1}=z_i\cdot c_{\alpha_i}$, $\alpha_{i+1}=\alpha\cdot z_i$. On the one hand, we have $\bigvee_{i\geq0}z_i=T_\alpha$ by the construction used in the proof of Lemma~\ref{lem:3}. But on the other hand, the characterization of the probability measure $\nu_\alpha$ yields $\nu_\alpha(\up z_i)=f(z_i)=1$ for all $i\geq0$. Since $\up z_{i+1}\subseteq\up z_{i}$ for all $i\geq0$, we have thus:
  \begin{gather*}
    \nu_\alpha(\omega\geq T_\alpha)=\nu_\alpha\Bigl(\,\bigcap_{i\geq0}\up z_i\Bigr)=\lim_{i\to\infty}\nu_\alpha(\up z_i)=1.
  \end{gather*}
  Since $T_\alpha=\max\Mbar_\alpha$, it implies $\nu_\alpha(\omega=T_\alpha)=1$.

\medskip
  An alternative proof is as follows. Let $(Y_i)_{i\geq1}$ be the Markov chain of states-and-cliques associated to the dominant valuation, and let $\alpha\in X$. One has $\nu_\alpha(C_1=c)=h_\alpha(c)$ for all $c\in\Cstar_\alpha$, by~(\ref{eq:9}). The values of $h_\alpha$ computed in the proof of Th.~\ref{thr:1} show that the initial distribution of the chain is~$\delta_{\{(\alpha,c_\alpha)\}}$. It is shown in \cite{abbes19:_markov} that the $(\alpha,c)$-row of the transition matrix of the chain is proportional to~$h_{\alpha\cdot c}(\cdot)$. Hence all entries of the  $(\alpha,c)$-row are~$0$, except for the $\bigl((\alpha,c),(\beta,c_\beta)\bigr)$ entry with $\beta=\alpha\cdot c$, where the entry is~$1$. Hence the execution $T_\alpha$ is given $\nu_\alpha$-probability~$1$.

  Finally we prove the statement about null nodes. The formula~\eqref{eq:1} defining the Möbius transform shows that $h_\alpha(c_\alpha)=1$ (for this we use the fact that $\Sigma_\alpha\neq\emptyset$). Since $\bigl(h_\alpha(c)\bigr)_{c\in\Cstar_\alpha}$ is a probability vector, it entails that all other cliques $c\in\Cstar_\alpha$ satisfy $h_\alpha(c)=0$, hence $(\alpha,c)$ is a null node if $c\neq c_\alpha$.
\end{proof}

\section{Irreducible deterministic concurrent systems}
\label{sec:irred-determ-conc}

Before stating the main result of this section, we prove two lemmas.

\begin{lemma}
  \label{lem:2}
  Let $\X=(\M,X,\bot)$ be a \DCS. Let $\alpha\in X$ and let $c\in\C_\alpha$ be a clique such that $a\notin c$ for some letter $a\in\Sigma_\alpha$. Then:
  \begin{gather*}
    \forall x\in\Mbar_\alpha\quad C_1(x)=c\implies a\notin x.
  \end{gather*}
\end{lemma}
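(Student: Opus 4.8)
The plan is to exploit the lattice structure of $(\C_\alpha,\leq)\cong(\P(\Sigma_\alpha),\subseteq)$ together with the defining property of a normal form. Suppose for contradiction that $x\in\Mbar_\alpha$ has $C_1(x)=c$ but $a\in x$ for some $a\in\Sigma_\alpha\setminus c$. Since $a$ occurs in $x$ but not in its first clique $c_1=c$, the letter $a$ must occur in some later clique $c_j$ of the normal form of $x$, and we may take $j\geq 2$ minimal with this property. The key observation is that $c_j$ being preceded by $c_{j-1}$ in a normal sequence forces, by \eqref{eq:11}, the existence of some $b\in c_{j-1}$ with $(a,b)\in D$. Tracing back, every clique $c_1,\dots,c_{j-1}$ lies in $\C_{\alpha'}$ for the appropriate reachable state $\alpha'$, so in particular $b\in\Sigma_{\alpha'}$; but I want to relate this to $\Sigma_\alpha$ itself.

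First I would reduce to the case $j=2$. Indeed it suffices to treat the first moment along the normal form at which a letter from $\Sigma_\alpha$ outside the current clique re-enters; so I would consider the execution $c_1\backslash x\in\Mbar_{\alpha\cdot c_1}$ and argue by induction on $j$, after checking that $\Sigma_\alpha\setminus c_1\subseteq\Sigma_{\alpha\cdot c_1}$ — this inclusion holds because if $a\in\Sigma_\alpha$ and $a\notin c_1$, then $c_1\cup\{a\}$ is a clique in $\C_\alpha$ (here I use that $(\C_\alpha,\leq)\cong(\P(\Sigma_\alpha),\subseteq)$, so every subset of $\Sigma_\alpha$, in particular $c_1\cup\{a\}$, is a clique that $\alpha$ accepts), whence $c_1\cdot a\in\M_\alpha$ and therefore $a\in\M_{\alpha\cdot c_1}$, i.e. $a\in\Sigma_{\alpha\cdot c_1}$. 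Thus it is enough to derive a contradiction when $a\in c_2$, $a\in\Sigma_\alpha$, $a\notin c_1$.

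So assume $a\in c_2\cap\Sigma_\alpha$ and $a\notin c_1$. Since $(c_1,c_2)$ is a normal pair, \eqref{eq:11} gives $b\in c_1$ with $(a,b)\in D$. Both $a$ and $b$ lie in $\Sigma_\alpha$; and $a\neq b$ since $a\notin c_1$ while $b\in c_1$. But in a \DCS\ the poset $(\C_\alpha,\leq)$ is isomorphic to $(\P(\Sigma_\alpha),\subseteq)$, so in particular $\{a,b\}\in\C_\alpha$, i.e. $\{a,b\}$ is a clique of $\M$; this forces $(a,b)\in I$, contradicting $(a,b)\in D$ (recall $D$ and $I$ partition $\Sigma\times\Sigma$ off the diagonal). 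This contradiction finishes the case $j=2$, hence the induction, hence the lemma.

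The main obstacle I anticipate is the bookkeeping in the inductive reduction: one must be careful that the property "$a\in\Sigma_\alpha$, $a\notin C_1(x)$" is genuinely inherited when passing from $(\alpha,x)$ to $(\alpha\cdot c_1, c_1\backslash x)$, which is exactly the inclusion $\Sigma_\alpha\setminus c_1\subseteq\Sigma_{\alpha\cdot c_1}$ established above; and that $a$ actually survives in $c_1\backslash x$, which is immediate since $a\notin c_1$ implies the occurrence of $a$ in $x$ is not removed. Everything else is a direct application of the normal-pair condition \eqref{eq:11} and Proposition~\ref{prop:1}.
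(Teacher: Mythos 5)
Your proof is correct and follows essentially the same route as the paper's: both arguments propagate along the Cartier--Foata normal form the invariant that $a$ stays enabled at each intermediate state while absent from each clique, using the identification $\C_{\alpha'}\cong\P(\Sigma_{\alpha'})$ from Proposition~\ref{prop:1} to get $\Sigma_{\alpha'}\setminus c\subseteq\Sigma_{\alpha'\cdot c}$ and to rule out a dependent pair inside $\Sigma_{\alpha'}$, contradicting the normal-pair condition~\eqref{eq:11}. The only difference is cosmetic (minimal counterexample index plus reduction to $j=2$, versus the paper's direct induction), so there is nothing to add.
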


\begin{proof}
  Let $\alpha$, $a$ and $c$ be as in the statement. Clearly, the implication stated in the lemma is true if we prove it to be true for $x$ ranging over~$\M_\alpha$ instead of~$\Mbar_\alpha$. Hence, let $x\in\M_\alpha$ be such that $C_1(x)=c$. Let $(c_i)_{i\geq1}$ be the generalized normal form of~$x$, and define by induction $x_0=\vd$, $x_{i+1}=x_i\cdot c_{i+1}$ for all $i\geq0$ and $\alpha_i=\alpha\cdot x_i$ for all $i\geq0$. We prove by induction on $i\geq1$ that:
  \begin{enumerate}
  \item $a\in\Sigma_{\alpha_{i-1}}$; and
  \item $a\notin c_i$.
  \end{enumerate}

  For $i=1$, both properties derive from the assumptions of the lemma. Assume that both properties hold for some $i\geq1$. By construction, $c_i\in\C_{\alpha_{i-1}}$\,, and $a\in\Sigma_{\alpha_{i-1}}$ by the induction hypothesis. Since the concurrent system is deterministic, it follows that $a\vee c_i\in\C_{\alpha_{i-1}}$. Since $a\notin c_i$ by the assumption hypothesis, this \lub\ is given by $c_i\cdot a\in\C_{\alpha_{i-1}}$\,. This entails first that $a\in\C_{\alpha_{i-1}\cdot c_i}$\,, but $\alpha_{i-1}\cdot c_i=\alpha_i$ hence $a\in\Sigma_{\alpha_i}$\,. But it also entails that $a\notin c_{i+1}$\,, completing the induction step. The result of the lemma follows.
\end{proof}

\begin{lemma}
  \label{lem:4}
  Let $\X=(\M,X,\bot)$ be a concurrent system. Let $\alpha\in X$, and let $r_\alpha$ be the radius of convergence of the generating series $G_\alpha(z)=\sum_{x\in\M_\alpha}z^{|x|}$. Then the following properties are equivalent:
  \begin{itemize}
  \item[(i)] \label{item:5} $\M_\alpha$~is finite;
  \item[(ii)] \label{item:6} $\BM_\alpha=\emptyset$;
  \item[(iii)] \label{item:7} $r_\alpha=\infty$.
  \end{itemize}
\end{lemma}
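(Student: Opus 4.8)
The plan is to establish the equivalences $\text{(i)}\Leftrightarrow\text{(iii)}$ and $\text{(i)}\Leftrightarrow\text{(ii)}$, of which only the implication $\text{(ii)}\Rightarrow\text{(i)}$ carries any real content. For $\text{(i)}\Leftrightarrow\text{(iii)}$, I would use that the coefficients of $G_\alpha(z)=\sum_{n\ge0}(\#\M_\alpha(n))\,z^n$, with $\M_\alpha(n)=\{x\in\M_\alpha\tqs|x|=n\}$, are non-negative integers, that $\M_\alpha=\bigsqcup_n\M_\alpha(n)$, and that each $\M_\alpha(n)$ is finite (there are at most $(\#\Sigma)^n$ traces of length $n$). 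Thus $\M_\alpha$ finite $\iff$ $G_\alpha$ a polynomial $\iff$ $r_\alpha=\infty$, the last equivalence because if infinitely many coefficients are nonzero they are $\ge1$, whence $r_\alpha\le1$. For $\text{(i)}\Rightarrow\text{(ii)}$: an $\omega=(c_i)_{i\ge1}\in\BM_\alpha$ would have every $c_i$ non-empty, so the prefixes $c_1\cdot\ldots\cdot c_n$ have strictly increasing lengths, and they all lie in $\M_\alpha$ since $\omega\in\Mbar_\alpha$, contradicting finiteness of $\M_\alpha$.

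The substance is $\text{(ii)}\Rightarrow\text{(i)}$, which I would prove by constructing an element of $\BM_\alpha$ from the assumption that $\M_\alpha$ is infinite, via a König-type compactness argument. The key elementary remark is that for each $H$ there are only finitely many traces of height $\le H$, as such a trace is determined by the first $H$ terms of its generalized normal form, each in the finite set $\C$. Hence $\M_\alpha$ infinite contains executions $x_k$ of height $\ge k$. Viewing the generalized normal forms $\H{x_k}=(C_i(x_k))_{i\ge1}$ as points of the compact metrizable space $\C^{\bbZ_{\ge1}}$ (product topology, $\C$ finite), extract a pointwise-convergent subsequence with limit $\omega=(c_i)_{i\ge1}$. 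Then I would check: $\omega$ is a normal sequence, since for each $i$ the pair $(c_i,c_{i+1})$ equals $(C_i(x_k),C_{i+1}(x_k))$ for $k$ large, which is a normal pair; $\omega\in\BM$, since for $k\ge i$ one has $c_i=C_i(x_k)\neq\vd$ (the height of $x_k$ being $\ge i$); and $\omega\in\Mbar_\alpha$, since any finite $y\le\omega$ satisfies $y\le x_k$ for $k$ large — a componentwise comparison of generalized normal forms, using $C_i(y)=\vd$ above the height of $y$ — hence $y\in\M_\alpha$ by downward-closedness. Therefore $\omega\in\BM\cap\Mbar_\alpha=\BM_\alpha\neq\emptyset$.

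Alternatively, the compactness step can be replaced by an explicit recursion in the spirit of König's lemma: call a finite normal sequence of non-empty cliques $(c_1,\ldots,c_k)$ \emph{$\alpha$-admissible} if infinitely many $x\in\M_\alpha$ have $C_i(x)=c_i$ for $i\le k$; the empty sequence is $\alpha$-admissible because $\M_\alpha$ is infinite, and given an $\alpha$-admissible sequence, all but one of the relevant executions have height $>k$ hence a non-empty $(k+1)$st clique, so pigeonhole over the finite set $\Cstar$ produces a clique $c_{k+1}$ prolonging it (with $c_1\cdot\ldots\cdot c_{k+1}\in\M_\alpha$); the infinite sequence so obtained lies in $\BM_\alpha$.

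I expect the only real difficulty to be this step $\text{(ii)}\Rightarrow\text{(i)}$, and within it the verification that the sequence $\omega$ one produces is genuinely an infinite trace \emph{and} genuinely an execution from $\alpha$; both rest on the two observations isolated above (finitely many traces of bounded height; $\M_\alpha$ downward closed), together with the componentwise description of $\le$ on $\Mbar$. Everything else is routine.
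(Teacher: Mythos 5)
Your proof is correct, but for the one substantive implication, (ii)$\Rightarrow$(i), you take a genuinely different route from the paper. The paper exploits the finiteness of the state set $X$: since $\M_\alpha$ is infinite it contains executions of arbitrarily large length, so some execution revisits a state, yielding $x\in\M_\alpha$ and $y\neq\vd$ with $\alpha\cdot x=\alpha\cdot(x\cdot y)$; then $x\cdot y^n\in\M_\alpha$ for all $n$, which refutes (iii) (coefficients $\geq1$ along an arithmetic progression, so $r_\alpha\le 1$) and refutes (ii) (the increasing sequence $x\cdot y^n$ has $\bigvee_n x\cdot y^n\in\BM_\alpha$) in a single stroke. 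Your argument instead runs a K\"onig/compactness construction on normal forms, using only that each $\M_\alpha(n)$ is finite, that there are finitely many traces of bounded height, and that $\M_\alpha$ is downward closed --- the monoid action plays no role beyond this last fact. What your approach buys is generality: it shows that \emph{any} infinite downward-closed subset of a trace monoid meets the boundary at infinity, with no reference to states. What the paper's approach buys is brevity and a sharper quantitative output (an explicit pumped family witnessing $r_\alpha\le 1$, which is also the mechanism reused implicitly elsewhere, e.g.\ in the proof of Theorem~\ref{thr:2} where a free submonoid on two generators is exhibited). The verifications you flag --- that the limit sequence is normal, has all cliques non-empty, and that every finite $y\le\omega$ satisfies $y\le x_k$ for large $k$ via the componentwise order and hence lies in $\M_\alpha$ --- are exactly the right ones and all go through; your treatment of (i)$\Leftrightarrow$(iii) and (i)$\Rightarrow$(ii) matches the paper's ``clear'' direction.
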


\begin{proof}
  The implications $\text{(\ref{item:5}})\implies\text{(\ref{item:6}})$ and $\text{(\ref{item:5}})\implies\text{(\ref{item:7}})$ are clear.

\medskip
Assume that $\M_\alpha$ is infinite. Then there exist executions in $\M_\alpha$ of length arbitrary large. Therefore there exist $x\in\M_\alpha$ and $y\neq\vd$ such that $\alpha\cdot x=\alpha\cdot(x\cdot y)$. Then all traces $x_n=x\cdot y^n$ belong to $\M_\alpha$ for $n\geq0$. This proves two things. First, if $k=|y|$, the  coefficient of $z^{|x|+kn}$ in the series $G_\alpha(z)$ is~$\geq1$ for all integers~$n$, hence $r_\alpha<\infty$. Second, the execution $\xi=\bigvee_{n\geq0}x_n$ is an element of~$\BM_\alpha$, showing that $\BM_\alpha\neq\emptyset$. Hence we have proved both  $\text{(\ref{item:6}})\implies\text{(\ref{item:5}})$ and $\text{(\ref{item:7}})\implies\text{(\ref{item:5}})$ by contraposition, completing the proof.
\end{proof}

\begin{theorem}
  \label{thr:2}
  Let $\X=(\M,X,\bot)$ be an irreducible concurrent system, of characteristic root~$r$, and let $f$ be the dominant valuation of~$\X$. Then the following statements are equivalent:
    \begin{itemize}
    \itemsep=0.85pt
  \item[(i)] \label{item:3} $\X$ is deterministic.
  \eject
  \item[(ii)] \label{item:4} $f$ is a probabilistic valuation.
  \item[(iii)] \label{item:8} $f$ is the only probabilistic valuation of~$\X$.
  \item[(iv)] \label{item:9} $r=1$.
  \item[(v)] \label{item:10} One set\/ $\BM_\alpha$ is at most countable.
  \item[(vi)] \label{item:11} Every set\/ $\BM_\alpha$ is at most countable.
  \item[(vii)] \label{item:14} One set $\BM_\alpha$ is a singleton.
  \item[(viii)] \label{item:15a} Every set $\BM_\alpha$ is a singleton.
  \end{itemize}
\end{theorem}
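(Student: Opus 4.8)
The plan is to settle (i)$\iff$(ii) with Theorem~\ref{thr:1}, then to derive (iii), (iv), (viii) from (i), to note that (viii) trivially gives (vii), (vi), (v), and finally to close the loop through (v)$\implies$(iv), (viii)$\implies$(i) and the one genuinely new implication (iv)$\implies$(i). Throughout I would first record three consequences of the irreducibility of~$\X$: aliveness forces $\Sigma_\alpha\neq\emptyset$ for every $\alpha$ (the first clique of an execution that uses a prescribed letter is a non-empty element of~$\C_\alpha$); running a walk in the graph of states — which never stops since every $\Sigma_\beta\neq\emptyset$ — and using finiteness of $X$ produces a state traversed by a non-trivial cycle, hence a state whose set of executions is infinite, and homogeneity then makes $\M_\alpha$ infinite for \emph{every} $\alpha$, so by Lemma~\ref{lem:4} every $\BM_\alpha\neq\emptyset$ and $r\leq1$; moreover every $x\in\M_\alpha$ extends to an infinite execution $x\cdot\eta\in\BM_\alpha$ (apply the previous point at $\alpha\cdot x$).

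Next I would deduce from (i) everything except the closing implications. Since $\Sigma_\alpha\neq\emptyset$ for all $\alpha$, Theorem~\ref{thr:1}(\ref{item:12}) gives (i)$\iff$(ii), and Theorem~\ref{thr:1}(\ref{item:13}) together with $r\leq1$ gives (i)$\implies$(iv). For (i)$\implies$(viii): by Lemma~\ref{lem:3}, $(\Mbar_\alpha,\leq)$ has a maximum $T_\alpha$, so any $\omega\in\BM_\alpha$ satisfies $\omega\leq T_\alpha$; if $\omega\neq T_\alpha$, let $j$ be the first index at which the normal forms of $\omega$ and of $T_\alpha$ differ — the shifted tail $(c_1\cdots c_{j-1})\backslash\omega$ is an infinite execution from $\alpha_{j-1}$ whose first clique is strictly below $\max\C_{\alpha_{j-1}}=\Sigma_{\alpha_{j-1}}$, so by Lemma~\ref{lem:2} it omits some letter $a\in\Sigma_{\alpha_{j-1}}$, whence $\X^a$ has infinitely many executions from $\alpha_{j-1}$ and $r^a\leq1=r$, contradicting the spectral property $r^a>r$. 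Thus $\BM_\alpha=\{T_\alpha\}$, which is (viii); it trivially implies (vii), (vi) and (v). Finally (i)$\implies$(iii): on the one-point space $\BM_\alpha$ the only probability measure is the Dirac mass at $T_\alpha$, so the Markov measure of any probabilistic valuation $g$ is $\delta_{T_\alpha}$ on each $\BM_\alpha$, whence $g_\alpha(x)=\un[x\leq T_\alpha]=\un[x\in\M_\alpha]$, i.e.\ $g$ is the dominant valuation.

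To close the cycle — besides the immediate (iii)$\implies$(ii) — I would prove (viii)$\implies$(i) and (v)$\implies$(iv). For (viii)$\implies$(i): given $x,y\in\M_\alpha$, extend both to infinite executions, which must coincide with the unique element $\omega_\alpha$ of $\BM_\alpha$; thus $x,y\leq\omega_\alpha$, and truncating $\omega_\alpha$ at a height at least the heights of $x$ and of $y$ produces a common upper bound of $x$ and $y$ in $\M_\alpha$, so $(\M_\alpha,\leq)$ is a lattice and $\X$ is deterministic. For (v)$\implies$(iv): let $\nu_\alpha$ be the uniform measure (it exists because $\X$ is irreducible), so $\nu_\alpha(\up x)=r^{|x|}\Gamma(\alpha,\alpha\cdot x)$ with $\Gamma$ bounded; if $r<1$ then for any $\omega\in\BM_\alpha$ its length-$n$ prefixes $x_n$ have $|x_n|\geq n$, so $\nu_\alpha(\up{x_n})\to0$ and hence $\nu_\alpha(\{\omega\})=0$; summing over the countable set $\BM_\alpha$ contradicts $\nu_\alpha(\BM_\alpha)=1$, so $r\geq1$, hence $r=1$.

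The main obstacle is (iv)$\implies$(i). Suppose $r=1$ but $\X$ is not deterministic. By Proposition~\ref{prop:1} some $(\C_\gamma,\leq)$ fails to be a lattice; being a finite meet-semilattice it then has two distinct maximal cliques $c_1\neq c_2$, both non-empty since $\Sigma_\gamma\neq\emptyset$, and consequently $c_1\cup c_2\notin\C_\gamma$. Using homogeneity, pick $z_i\in\M_{\gamma\cdot c_i,\gamma}$ and set $w_i=c_i\cdot z_i\in\M_{\gamma,\gamma}$. The key claim is that $\Phi\colon\{1,2\}^n\to\M_\gamma$, $\Phi(\varepsilon)=w_{\varepsilon_1}\cdots w_{\varepsilon_n}$, is injective: if $\Phi(\varepsilon)=\Phi(\varepsilon')$ and $j$ is the first index with $\varepsilon_j\neq\varepsilon'_j$, left cancellation in $\M$ reduces the equality to $w_1\cdot S=w_2\cdot S'$ with $S,S'\in\M_\gamma$, and since $c_i\leq w_i$ the first clique of this common trace contains both $c_1$ and $c_2$, hence contains the set $c_1\cup c_2$; but the first clique is a clique of the downward-closed set $\C_\gamma$, forcing $c_1\cup c_2\in\C_\gamma$ — a contradiction. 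Injectivity of $\Phi$ produces at least $2^n$ executions from $\gamma$ of length $\leq n\max(|w_1|,|w_2|)$, so $G_\gamma(z)$ has radius of convergence $<1$ and therefore $r<1$, the desired contradiction. This injectivity step — converting the order-theoretic failure of the lattice condition into honest exponential branching — is where the real work lies; the remaining arguments are bookkeeping around Theorem~\ref{thr:1}, the spectral property, and the existence of the uniform measure.
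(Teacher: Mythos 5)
Your proof is correct and closes the cycle of equivalences, but several of your sub-arguments genuinely differ from the paper's. The shared core is the use of Theorem~\ref{thr:1} for (i)$\iff$(ii) and for $r\in\{1,\infty\}$, and the combination of Lemma~\ref{lem:2}, Lemma~\ref{lem:4} and the spectral property to show that a second infinite execution from some state would survive inside some restricted system $\X^a$ and force $r^a\leq1$; your proof of (i)$\implies$(viii) is essentially the paper's. The divergences are these. First, for uniqueness of the probabilistic valuation the paper reruns the Lemma~\ref{lem:2}/spectral-property argument to show $h_\alpha(c)=0$ for every $c\neq c_\alpha$, whereas you deduce (iii) in one line from (viii): a singleton boundary admits only the Dirac measure, which pins down $g_\alpha(x)=\un_{\{x\in\M_\alpha\}}$ --- cleaner, at the cost of making (iii) depend on (viii). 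Second, for the return from countability the paper proves (v)$\implies$(i) directly by contraposition (non-determinism yields a free submonoid on two generators inside some $\M_\alpha$, hence an uncountable $\BM_\alpha$ and $r<1$ simultaneously), whereas you route through (v)$\implies$(iv) using the uniform measure ($r<1$ gives every singleton of $\BM_\alpha$ measure zero); both work, yours needs the existence of the Parry cocycle. Third, your (iv)$\implies$(i) is the most substantial difference: the paper extracts two non-commuting letters $a,b\in\Sigma_\alpha$ from Proposition~\ref{prop:1} and simply asserts that $\{a\cdot x,\,b\cdot y\}$ generates a free submonoid, while you take two distinct maximal cliques $c_1,c_2$ of a non-lattice $\C_\gamma$ and prove injectivity of $\varepsilon\mapsto w_{\varepsilon_1}\cdots w_{\varepsilon_n}$ via the fact that the first Cartier--Foata clique of a trace absorbs every clique dividing it. Your version is more careful and, notably, also covers the case where $(\C_\gamma,\leq)$ fails to be a lattice because $\gamma\cdot(a\cdot b)=\bot$ for two letters that \emph{do} commute in $\M$ --- a case the paper's ``two non-commuting letters'' step does not visibly address. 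You also make explicit the inequality $r\leq1$ (aliveness and homogeneity force every $\M_\alpha$ to be infinite), which is needed to upgrade the conclusion $r\in\{1,\infty\}$ of Theorem~\ref{thr:1} to $r=1$; the paper leaves this implicit.
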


\begin{proof}
Since $\X$ is irreducible, it satisfies in particular  $\Sigma_\alpha\neq\emptyset$ for all $\alpha\in X$. Hence the equivalence $\text{(\ref{item:3})}  \iff\text{(\ref{item:4})}$ and the implications $\text{(\ref{item:3})}\implies\text{(\ref{item:9})}$ and $\text{(\ref{item:3})}\implies\text{(\ref{item:11})}$ derive already from Theorem~\ref{thr:1}. The implications $\text{(\ref{item:8})}\implies\text{(\ref{item:4})}$, $\text{(\ref{item:11})}\implies\text{(\ref{item:10})}$ and $\text{(\ref{item:15a})}\implies\text{(\ref{item:14})}\implies\text{(\ref{item:10})}$ are trivial.

\smallskip
  $\text{(\ref{item:3})}\implies\text{(\ref{item:8})}$. Let $f=(f_\alpha)_{\alpha\in X}$ be a probabilistic valuation, and let $\widetilde f=(\widetilde f_\alpha)_{\alpha\in X}$ be the dominant valuation. Let $\alpha\in X$ and let $c\in\Cstar_\alpha$ with $c\neq c_\alpha$, where $c_\alpha=\Sigma_\alpha$ is the maximum of~$\C_\alpha$. There is thus a letter $a\in\Sigma_\alpha$ such that $a\notin c$. Let $\M^a$ be the submonoid of $\M$ generated by $\Sigma\setminus\{a\}$. It follows from Lemma~\ref{lem:2} that $\{\omega\in\BM_\alpha\tqs C_1(\omega)=c\}\subseteq\BM^a_\alpha$.

  According to the spectral property recalled in Section~\ref{sec:irreducibility}, the characteristic root $r^a$ of $\X^a=(\M^a,X,\bot)$ satisfies $r^a>r$ since $\X$ is assumed to be irreducible. But $r=1$ since $\X$ is deterministic, and therefore $r^a=\infty$, which implies that $\BM^a_\alpha=\emptyset$ according to Lemma~\ref{lem:4}. Let $\nu=(\nu_\alpha)_{\alpha\in X}$ be the family of probability measures associated with the probabilistic valuation~$f$, as explained in Sect.~\ref{sec:valu-prob-valu}. Then $\nu_\alpha(\BM_\alpha^a)=0$ and thus $\nu_\alpha(C_1=c)=0$. But one also has $h_\alpha(c)=\nu_\alpha(C_1=c)$ according to~(\ref{eq:9}), where $h_\alpha$ is the Möbius transform of~$f_\alpha$. Hence $h_\alpha(c)=0$. We have proved that $h_\alpha$ vanishes on all cliques $c\in\C_\alpha$ such that $c\neq c_\alpha$. Since $(h_\alpha(c))_{c\in\Cstar_\alpha}$ is a probability vector, it entails that $h_\alpha(c_\alpha)=1$. Thus $h_\alpha$ coincides with the Möbius transform of~$\widetilde f_\alpha$, and $f=\widetilde f$.

\smallskip
  $\text{(\ref{item:9})}\implies\text{(\ref{item:3})}$ and $\text{(\ref{item:10})}\implies\text{(\ref{item:3})}$. By contraposition, assume that $\X$ is not deterministic. Prop.~\ref{prop:1} implies the existence of a state $\alpha$ and of two distinct letters $a,b\in\Sigma_\alpha$ such that $a\cdot b\neq b\cdot a$. Since $\X$ is assumed to be irreducible, there exist $x\in\M_{\alpha\cdot a,\alpha}$ and $y\in\M_{\alpha\cdot b,\alpha}$. Put $x_a=a\cdot x$ and $x_b=b\cdot y$, and we can also assume without loss of generality that $|x_a|=|x_b|$. Then $\M_\alpha$ contains the submonoid generated by $\{x_a,x_b\}$, which is free. This implies two things: first,  the generating series $G_\alpha(z)=\sum_{x\in\M_\alpha}z^{|x|}$ has radius of convergence smaller than~$1$, and thus $r<1$; second, $\BM_\alpha$~is uncountable.

\smallskip
  $\text{(\ref{item:3})}\implies\text{(\ref{item:15a})}$. Seeking a contradiction, assume that $\X$ is irreducible and that for some state $\alpha\in X$, the set $\BM_\alpha$ has two distinct elements. Since $T_\alpha=\max\BM_\alpha$ is already an element of~$\BM_\alpha$, there is thus $\omega\in \BM_\alpha$ with $\omega\neq T_\alpha$. Let $\omega=(c_i)_{i\geq1}$ and let $c_1,\ldots,c_k$ be the longest initial sequence of cliques that the two infinite traces $\omega$ and $T_\alpha$ have in common. Put $x=c_1\cdot\ldots\cdot c_k$ and $\beta=\alpha\cdot x$, and $\xi=(c_i)_{i>k}$. Then $\xi\in\BM_\beta$, and by construction there is a letter $a\in\Sigma_\beta$ such that $a\notin c_{k+1}$. It follows from Lemma~\ref{lem:2} that $a\notin c_i$ for all $i>k$.

  Consider the restricted concurrent system $\X'=\X^a$. Then, since $\X$ is irreducible, it follows from the spectral property that $r^a>1$, hence $r^a=\infty$. According to Lemma~\ref{lem:4}, it implies that $\BM^a_\alpha=\emptyset$ for every $\alpha\in X$. Yet, the infinite trace $\xi\in\BM_\beta$ found earlier satisfies $\xi\in\BM^a_\beta$, a contradiction.

  The proof is complete.
\end{proof}

For an irreducible \DCS\ equipped with its dominant valuation, Prop.~\ref{prop:2} applies. Hence the Markov chain of states-and-cliques associated to this unique probabilistic valuation follows the trivial dynamics described by Prop.~\ref{prop:2}. In particular, the null nodes are easy to detect: all nodes of $\Dstar$ of the form $(\alpha,c)$ with $\alpha\neq c_\alpha$. Consequently:

\begin{corollary}
  \label{cor:2}
  The Markov chain of states-and-cliques of an irreducible \DCS\ stays within a subgraph of\/ $\Dstar$ of size~$\# X$. And there is no cycle in $\Dstar$ connecting null nodes.
\end{corollary}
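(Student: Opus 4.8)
The plan is to read off both assertions from the structural description of the dominant valuation in Proposition~\ref{prop:2}, combined with the singleton property $\BM_\alpha=\{T_\alpha\}$ established in Theorem~\ref{thr:2}. Since an irreducible concurrent system satisfies $\Sigma_\alpha\neq\emptyset$ for every $\alpha\in X$, Theorem~\ref{thr:1} applies: the dominant valuation $f$ is probabilistic, and by Theorem~\ref{thr:2} it is moreover the unique probabilistic valuation, so the notion of null node is unambiguous here. By the last part of Proposition~\ref{prop:2}, the non-null nodes of $\Dstar$ are precisely the pairs $(\alpha,c_\alpha)$ with $c_\alpha=\Sigma_\alpha=\max\C_\alpha$, one for each $\alpha\in X$; these are genuine vertices of $\Dstar$ because $c_\alpha\in\Cstar_\alpha$, and they are pairwise distinct since the first coordinate determines $\alpha$, so there are exactly $\#X$ of them.

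For the first assertion, recall (Section~\ref{sec:example-null-nodes}) that the Markov chain of states-and-cliques never visits a null node. By Proposition~\ref{prop:2} its initial distribution is the Dirac mass at the non-null node $(\alpha,c_\alpha)$, and its transition matrix carries $(\alpha,c)$ to $(\alpha\cdot c,\,c_{\alpha\cdot c})$ with probability~$1$; hence the chain remains within the subgraph of $\Dstar$ induced by $\{(\alpha,c_\alpha)\tqs\alpha\in X\}$, a subgraph with exactly $\#X$ vertices.

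For the second assertion, suppose, seeking a contradiction, that some directed cycle of $\Dstar$ passes through a null node. Shifting the starting point cyclically, we may assume the cycle begins at a null node $(\beta,d)$, so that $d\in\Cstar_\beta$ and $d\neq c_\beta$. Concatenating this cycle with itself indefinitely yields an infinite path in $\Dstar$ issued from $(\beta,d)$; by the bijection between infinite paths of $\Dstar$ and infinite executions recalled in Section~\ref{sec:digr-stat-cliq}, this path equals $\bigl(Y_i(\xi)\bigr)_{i\geq1}$ for a unique $\xi\in\BM_\beta$, and $Y_1(\xi)=(\beta,d)$ forces $C_1(\xi)=d$. But $\X$ is an irreducible \DCS, so Theorem~\ref{thr:2} gives $\BM_\beta=\{T_\beta\}$, whereas $C_1(T_\beta)=c_\beta\neq d$ by Lemma~\ref{lem:3}; this is the required contradiction. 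Hence no directed cycle of $\Dstar$ meets a null node, which is \emph{a fortiori} the second sentence of the corollary.

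I expect no serious obstacle: everything is bookkeeping on top of Proposition~\ref{prop:2} and Theorem~\ref{thr:2}. The only points needing a little care are that the $\#X$ nodes $(\alpha,c_\alpha)$ are genuine and pairwise distinct vertices of $\Dstar$, and that traversing a cycle through a null node really does produce a legitimate infinite path of $\Dstar$ --- both immediate once the correspondence of Section~\ref{sec:digr-stat-cliq} is invoked.
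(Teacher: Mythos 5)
Your proof is correct and follows essentially the same route as the paper: the first assertion is read off from Proposition~\ref{prop:2} (the non-null nodes are exactly the $\#X$ nodes $(\alpha,c_\alpha)$ and the chain only visits non-null nodes), and the second is the same contradiction via the uniqueness $\BM_\beta=\{T_\beta\}$ from Theorem~\ref{thr:2} applied to the infinite path obtained by iterating the cycle. Your version is slightly more detailed and in fact establishes the marginally stronger fact that no cycle of $\Dstar$ meets any null node, but the underlying argument is the paper's.
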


\begin{proof}
  According to Prop.~\ref{prop:2}, to each state $\alpha\in X$ is associated a unique non null node, namely $(\alpha,c_\alpha)$. The Markov chain of states-and-cliques visits only non null nodes, whence the result.

  To prove the second statement, assume the existence of a cycle in $\Dstar$ connecting null nodes only. It yields the existence of an infinite executing, starting from some state~$\alpha$, which stays within null nodes in~$\Dstar$. According to Th.~\ref{thr:2}, there is a unique infinite execution starting from~$\alpha$, which is~$T_\alpha$. Hence $T_\alpha$ stays within null nodes of~$\Dstar$, a contradiction.
\end{proof}

\begin{remark}
The first statement in the above corollary does not mean that the graph of non null nodes in $\Dstar$ is itself strongly connected. The next example illustrates this fact.
\end{remark}

\begin{remark}
In general, if a concurrent system is not deterministic, null nodes of $\Dstar$ may be connected by a cycle. This is the case for the Petri net from  Example~\ref{exm:4}, whose digraph of states-and-cliques is depicted in Fig.~\ref{fig:pokqwdoijjq}, with the cycle $\bigl((\alpha_0,d),(\alpha_0,d)\bigr)$. It may also happen for a deterministic concurrent system if it is not irreducible. The reader may check it for the free commutative monoid on two generators seen as a \DCS.
\end{remark}

The following example illustrates the dynamics of an irreducible \DCS.

\begin{example}
Figure~7  depicts an example of irreducible \DCS. The digraph of states-and-cliques of the system is depicted on Fig.~\ref{fig:aqqqapokferw}. Compare with the situation depicted on next example for a \DCS\ which is not irreducible.

  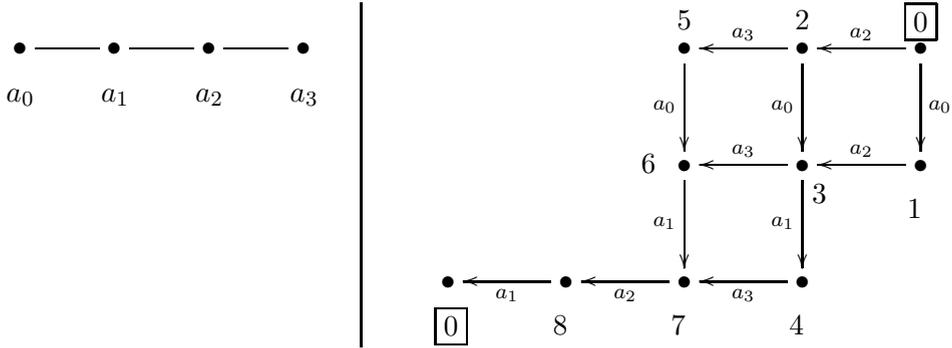
\begin{figure}[!h]
  \vspace{2mm}
  \begin{center}
 $$
    \begin{array}{c|c}
   \xymatrix{%
      \bullet\ar@{-}[r]\labeld{a_0}
         &\bullet\ar@{-}[r]\labeld{a_1}
         &\bullet\ar@{-}[r]\labeld{a_2}
         &\bullet\labeld{a_3}
  }
 \quad\strut      &\qquad
      \xymatrix@R=3em@C=3em{
      &&\bullet\ar[d]_{a_0}\labelu{5}
      &\bullet\ar[d]_{a_0}\ar[l]_{a_3}\labelu{2}
      &\bullet\ar[d]^{a_0}\ar[l]_{a_2}\labelu{\fbox{$0$}}\\
      &&\bullet\ar[d]_{a_1}\labell{6}
      &\bullet\ar[d]_{a_1}\ar[l]_{a_3}\labeldr{3}
      &\bullet\ar[l]_{a_2}\labeld{1}\\
      \bullet\labeld{\fbox{$0$}}
      &\bullet\ar[l]^{a_1}\labeld{8}
      &\bullet\ar[l]^{a_2}\labeld{7}
      &\bullet\ar[l]^{a_3}\labeld{4}
  }
    \end{array}
 $$
 \end{center}
  \caption{Example of an irreducible and deterministic concurrent system $\X =
(\M, X, \bot)$ with $\Sigma = \{a_0, \ldots, a_3\}$, $X = \{0, 1,\ldots, 8\}$.
{\em Left:} Coxeter graph of the monoid {$\M$}.
{\em Right:} multigraph of states of {$\X$}. The two framed labels {\fbox{$0$}} are identified and correspond to the same state} \label{fig:aqqqwqw}
  \end{figure}

  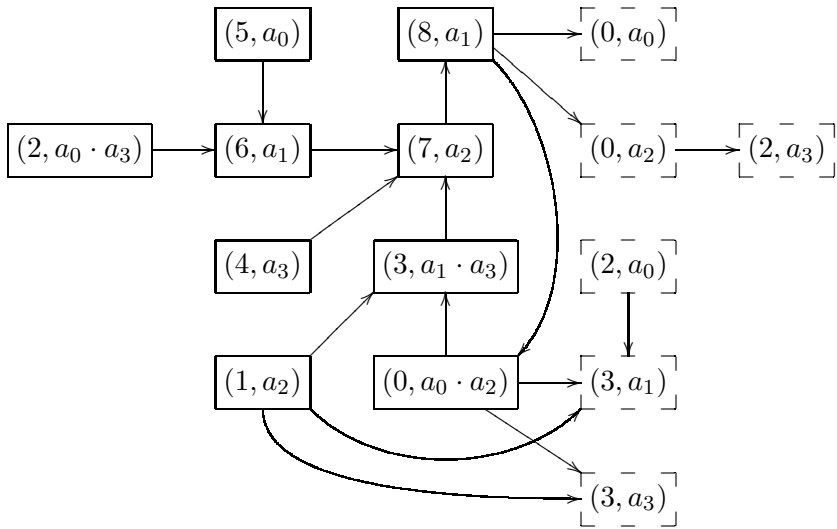
\begin{figure}[!h]
  \vspace{2mm}
    \begin{center}
    $$
    \xymatrix{%
&*+[F]{\strut(5,a_0)}\ar[d]      &*+[F]{\strut(8,a_1)}\ar[r]\POS!D!R\ar@(dr,ur)[ddd]!R!U\POS[]\POS!R!D(.5)\ar[dr]!U!L
      &*+[F--]{\strut(0,a_0)}\\
*+[F]{\strut(2,a_0\cdot a_3)}\ar[r]&      *+[F]{\strut(6,a_1)}\ar[r]
      &*+[F]{\strut(7,a_2)}\ar[u]
      &*+[F--]{\strut(0,a_2)}\ar[r]
      &*+[F--]{\strut(2,a_3)}
      \\
&*+[F]{\strut(4,a_3)}\POS!U!R\ar[ur]!D!L      &*+[F]{\strut(3,a_1\cdot a_3)}\ar[u]
      &*+[F--]{\strut(2,a_0)}\ar[d]
\\
       &*+[F]{\strut(1,a_2)}\POS!U!R\ar[ur]!D!L\POS!D!R\ar@(dr,dl)[rr]!D!L\POS[]\POS!D\ar@(d,l)[drr]
     &*+[F]{\strut(0,a_0\cdot a_2)}\ar[u]\ar[r]\ar[dr]!U!L
      &*+[F--]{\strut(3,a_1)}
\\
&&&*+[F--]{\strut(3,a_3)}
      }
    $$
    \end{center}
    \caption{\small Digraph of states-and-cliques for the \DCS\ depicted on Fig.~\ref{fig:aqqqwqw}. Nodes with solid frames are nodes of the form $(\alpha,c_\alpha)$ with $c_\alpha=\max\C_\alpha$. Nodes with a dashed frame are null nodes. The probability for the Markov chain of states-and-cliques to jump from a solid frame node to a dashed frame node is~$0$; the probability of starting in a dashed node is~$0$.} \label{fig:aqqqapokferw}
    \end{figure}\vspace*{-2mm}
\end{example}

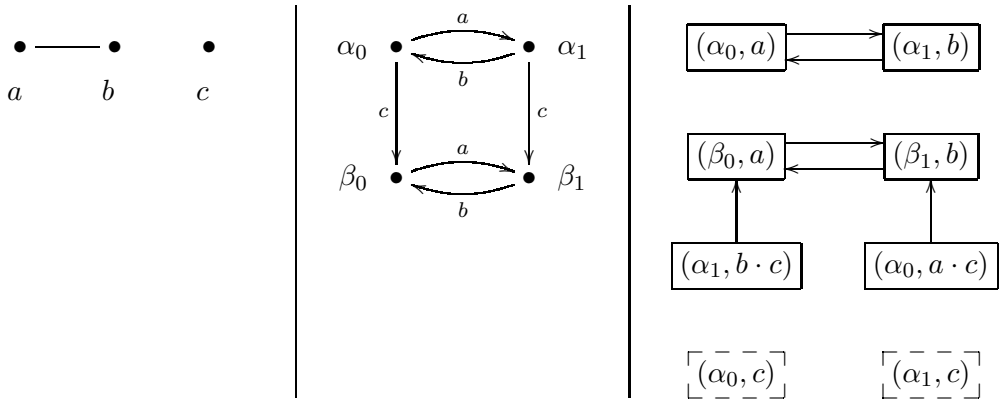
\begin{figure}[!ht]
  $$\begin{array}{c|c|c}
      \xymatrix{\bullet\ar@{-}[r]\labeld{a}&\bullet\labeld{b}&\bullet\labeld{c}}
\qquad\strut&\qquad
\xymatrix@R=3.5em@C=3.5em{\bullet\ar@/^/^{a}[r]\ar[d]_ {c}\labell{\alpha_0}
&\bullet\ar@/^/^{b}[l]\ar[d]^{c}\labelr{\alpha_1}\\
\bullet\ar@/^/^{a}[r]\labell{\beta_0}&\bullet\ar@/^/^{b}[l]\labelr{\beta_1}
                                       }\qquad\strut
&\quad
\entrymodifiers={+[][F]}%
\xymatrix{
(\alpha_0,a)\ar@<1ex>[r]&(\alpha_1,b)\ar@<1ex>[l]\\
      (\beta_0,a)\ar@<1ex>[r]&(\beta_1,b)\ar@<1ex>[l]\\
      (\alpha_1,b\cdot c)\ar[u]&(\alpha_0,a\cdot c)\ar[u]\\
      *+[F--]{(\alpha_0,c)}&*+[F--]{(\alpha_1,c)}
}
  \end{array}$$
  \caption{\small A non irreducible \DCS\ not satisfying property (\ref{item:8}) of Th.~\ref{thr:2}. \textsl{Left:} the Coxeter graph of the monoid. \textsl{Middle:} the multigraph of states of the \DCS. \textsl{Right:} the digraph of states-and-cliques. The parameter $p$ is only involved in the initial distribution of the Markov chain of states-and-cliques. The dashed nodes are the null nodes and they are immaterial to the Markov chain of states-and-cliques}
  \label{fig:;lksadfk}
\end{figure}

Without the irreducibility assumption, the equivalence stated in Th.~\ref{thr:2} may fail. We give below an example of a deterministic concurrent systems not irreducible, and not satisfying point~(\ref{item:8}).

\begin{example}
  \label{exm:5}
Let $\X=(\M,X,\bot)$ be the \DCS\ depicted in Fig.~\ref{fig:;lksadfk}. The system is not irreducible for several reasons: none of the three conditions for irreducibility is met. The probabilistic valuations of $\X$ are all of the following form, for some real $p\in[0,1]$:
  \begin{align*}
    f_{\alpha_0}(a)&=1&f_{\alpha_0}(c)&=p&    f_{\alpha_1}(b)&=1&f_{\alpha_1}(c)&=p&f_{\beta_0}(a)&=1&f_{\beta_1}(b)&=1
  \end{align*}

  Hence the dominant valuation is not the unique probabilistic valuation, contrary to irreducible systems as stated by point (\ref{item:8}) of Th.~\ref{thr:2}. The parameter $p$ is to be interpreted as the ``probability of playing~$c$'' in the course of the execution.  But this decision---playing $c$ or not---is made once, hence allowing all values between $0$ or~$1$ for the probability. Whereas, in a sequential model of concurrency, that would typically be a decision repeated infinitely often, hence yielding the only two possible values $0$ or $1$ for this probability.
The formula $\nu_\alpha(C_1=\gamma)=h_\alpha(\gamma)$ for $\gamma\in\Cstar_\alpha$ yields the following initial distribution of the Markov chain of states-and-cliques if, for instance, the initial state of the system is~$\alpha_0$:
$ \nu_{\alpha_0}(C_1=a)=1-p$,\quad $\nu_{\alpha_0}(C_1=c)=0$,\quad$\nu_{\alpha_0}(C_1=ac)=p$.
\end{example}


\begin{thebibliography}{10}
\providecommand{\url}[1]{\texttt{#1}}
\providecommand{\urlprefix}{URL }
\expandafter\ifx\csname urlstyle\endcsname\relax
  \providecommand{\doi}[1]{doi:\discretionary{}{}{}#1}\else
  \providecommand{\doi}{doi:\discretionary{}{}{}\begingroup
  \urlstyle{rm}\Url}\fi
\providecommand{\eprint}[2][]{\url{#2}}

\bibitem{diekert90}
Diekert V.
\newblock Combinatorics on Traces.
\newblock Springer, 1990.
ISBN-13:978-3540530312, 10:3540530312.

\bibitem{diekert95}
Diekert V, Rozenberg G (eds.).
\newblock The Book of Traces.
\newblock World Scientific, 1995. doi:10.1142/2563.

\bibitem{abbes19:_markov}
Abbes S.
\newblock Markovian dynamics of concurrent systems.
\newblock \emph{Discrete Event Dyn. Syst.}, 2019.
\newblock \textbf{29}(4):527--566.
doi:10.1007/s10626-019-00291-z.

\bibitem{reisig85}
Reisig W.
\newblock Petri Nets. An Introduction.
\newblock Springer, 1985. doi:10.1007/978-3-642-69968-9.

\bibitem{nielsen81}
Nielsen M, G P, Winskel G.
\newblock Petri nets, event structures and domains, part~{I}.
\newblock \emph{Theoret. Comput. Sci.}, 1981.
\newblock \textbf{13}(1):85--108.
doi:10.1016/0304-3975(81)90112-2.

\bibitem{abbes15}
Abbes S, Mairesse J.
\newblock Uniform and {B}ernoulli measures on the boundary of trace monoids.
\newblock \emph{J. Combin. Theory Ser. A}, 2015.
\newblock \textbf{135}:201--236.  doi:10.1016/j.jcta.2015.05.003.

\bibitem{cartier69}
Cartier P, Foata D.
\newblock Probl\`emes combinatoires de commutation et r\'earrangements,
  volume~85 of \emph{Lecture Notes in Math.}
\newblock Springer, 1969.

\bibitem{viennot86}
Viennot X.
\newblock Heaps of pieces,~{I} : basic definitions and combinatorial lemmas.
\newblock In: Combinatoire \'enum\'erative, volume 1234 of \emph{Lecture Notes
  in Math.}, pp. 321--350. Springer, 1986. doi:10.1007/BFb0072524.

\bibitem{abbes20}
Abbes S, Mairesse J, Chen YT.
\newblock A spectral property for concurrent systems and some probabilistic
  applications.
\newblock \emph{Discrete Mathematics}, 2021.
\newblock \textbf{344}(8).  112455.
doi:10.1016/j.disc.2021.112455.

\bibitem{dehornoy15}
Dehornoy P, Digne F, Godelle E, Krammer D, Michel J.
\newblock Foundations of Garside Theory.
\newblock EMS, 2015.  ISBN:978-3-03719-139-2.

\bibitem{book93}
Book R, Otto F.
\newblock String-rewriting Systems.
\newblock Springer, 1993.
doi:10.1007/978-1-4613-9771-7.

\bibitem{krattenthaler06}
Krattenthaler C.
\newblock {The theory of heaps and the Cartier-Foata monoid}.
\newblock Appendix to the republication of \cite{cartier69} in S\'eminaire
  Lotharingien de Combinatoire, 2006.  ID:3137291.

\bibitem{epstein92}
Epstein D, Cannon J, Holt D, Levy S, Paterson M, Thurston W.
\newblock Word Processing in Groups.
\newblock Jones and Bartlett, 1992.
ISBN-13:978-0867202441, 10:0867202440.

\bibitem{billingsley95}
Billingsley P.
\newblock Probability and Measure.
\newblock Wiley, 1995.

\bibitem{rota64}
Rota GC.
\newblock {On the foundations of combinatorial theory~I. Theory of M\"obius
  functions}.
\newblock \emph{Z. Wahrscheinlichkeitstheorie}, 1964.
\newblock \textbf{2}:340--368.

\bibitem{krob03}
Krob D, Mairesse J, Michos I.
\newblock Computing the average parallelism in trace monoids.
\newblock \emph{Discrete Math.}, 2003.
\newblock \textbf{273}(1-3):131--162.
doi:10.1016/S0012-365X(03)00233-4.

\bibitem{goldwurm00}
Goldwurm M, Santini M.
\newblock Clique polynomials have a unique root of smallest modulus.
\newblock \emph{Inform. Process. Lett.}, 2000.
\newblock \textbf{75}(3):127--132.
doi:10.1016/S0020-0190(00)00086-7.
\end{thebibliography}
\end{document}